\documentclass{article}
\usepackage[margin=1in]{geometry}
\usepackage[utf8]{inputenc}

\usepackage{balance}
\usepackage{graphicx}
\usepackage{tikz}
\usepackage{subcaption}
\usepackage{caption}
\usepackage[dvipsnames]{xcolor}
\usetikzlibrary{positioning, shapes.geometric}

\usepackage{amssymb,amsfonts,mathtools, xcolor, comment, amsthm,url,stmaryrd}
\usepackage{cleveref}
\usepackage{booktabs}
\usepackage[ruled,vlined]{algorithm2e}

\newtheorem{lemma}{Lemma}

\newtheorem{definition}{Definition}
\newtheorem{remark}{Remark}

\newtheorem{proposition}{Proposition}
\newtheorem{theorem}{Theorem}

\crefname{figure}{Fig.}{Fig.}
\crefname{example}{Example}{Example}
\crefname{proposition}{Proposition}{Proposition}
\crefname{corollary}{Corollary}{Corollary}
\crefname{theorem}{Theorem}{Theorem}
\crefname{lemma}{Lemma}{Lemma}
\crefname{assumption}{Assumption}{Assumption}
\crefname{appendix}{Appendix}{Appendix}
\crefname{definition}{Definition}{Definition}
\crefname{remark}{Remark}{Remark}
\crefname{table}{Table}{Table}
\crefname{example}{Example}{Example}
\crefname{algocf}{Algorithm}{Algorithms}

\newcommand{\bu}{\boldsymbol{u}}
\newcommand{\bv}{\boldsymbol{v}}
\newcommand{\ba}{\boldsymbol{a}}
\newcommand{\bw}{\boldsymbol{w}}
\newcommand{\bx}{\boldsymbol{x}}

\newcommand{\bq}{\boldsymbol{q}}
\newcommand{\bz}{\boldsymbol{z}}
\newcommand{\by}{\boldsymbol{y}}
\newcommand{\be}{\boldsymbol{e}}
\newcommand{\edit}[1]{{\color{black} #1}}

\title{Learning to Recommend in Unknown Games}
\author{Arwa Alanqary \thanks{University of California, Berkeley}
\and Zakaria Baba \thanks{California Institute of Technology}
\and Manxi Wu \footnotemark[1]
\and Alexandre M. Bayen \footnotemark[1]
}
\date{}

\begin{document}

\maketitle

\begin{abstract}
We study preference learning through recommendations in multi-agent game settings, where a moderator repeatedly interacts with agents whose utility functions are unknown. In each round, the moderator issues action recommendations and observes whether agents follow or deviate from them. We consider two canonical behavioral feedback models—best response and quantal response—and study how the information revealed by each model affects the learnability of agents’ utilities. We show that under quantal-response feedback the game is learnable, up to a \edit{positive} affine equivalence class, with logarithmic sample complexity in the desired precision, whereas best-response feedback can only identify a larger set of agents' utilities. We give a complete geometric characterization of this set. Moreover, we introduce a regret notion based on agents’ incentives to deviate from recommendations and design an online algorithm with low regret under both feedback models, with bounds scaling linearly in the game dimension and logarithmically in time. Our results lay a theoretical foundation for AI recommendation systems in strategic multi-agent environments, where recommendation compliances are shaped by strategic interaction.
\end{abstract}

\section{Introduction}
\label{sec:intro}
Modern digital platforms driven by AI algorithms increasingly operate as intermediaries among strategic users, providing recommendations, answering queries, or allocating shared resources. Examples include route guidance systems in traffic networks \cite{acemoglu2018informational,gollapudi2023online,vu2021fast}, bidding and pricing assistants in online marketplaces \cite{liu2021optimal,calvano2020artificial}, and ranking mechanisms in online marketplaces where sellers compete for demand \cite{yao2024user}. In these settings, the platform does not control users’ actions and cannot observe users' numerical utilities. Instead, the platform issues recommendations (suggested actions such as prices, bids, and route recommendations), and observes whether users follow the suggestion or deviate. These responses reflect users’ incentives, which are strategically coupled through the underlying game. In such settings, we study the following question:
\begin{center}
    \emph{How can a platform issue recommendations that are compliant for multiple strategic agents when their utilities are unknown and only action feedback is observed?}
\end{center}

This question is central to the design of AI systems that operate in multi-agent strategic environments, where the effectiveness of recommendations depends not only on how well they align with an individual user’s preferences, but also on how users’ incentives interact with one another. Existing work on AI–human alignment and preference learning largely focuses on the interaction between an algorithm and a single agent \cite{zadimoghaddam2012efficiently,beigman2006learning,ng2000algorithms,hadfield2016cooperative,yao2022learning}. In contrast, in multi-agent setting, a recommendation that is aligned to a user's preference in isolation may be ignored once users anticipate the behavior of others who receive correlated recommendations \cite{aumann1974subjectivity}. As a result, whether an agent follows a recommendation is inherently a strategic decision, shaped by beliefs about others’ actions and the underlying game structure. This strategic dependence fundamentally alters both what can be learned from observed behavior and how recommendations should be designed to induce compliance in multi-agent strategic environment.


\paragraph{Our contributions}
We formalize this problem from the perspective of a moderator interacting repeatedly with a fixed set of strategic agents playing an unknown normal-form game for $T$ rounds. In each round, the moderator commits to a recommendation mechanism that privately suggests actions to agents. Agents then decide whether to follow or deviate from these recommendations, based on their own utilities and beliefs induced by the recommendation. We study two canonical models of agent behavior. In best-response (BR) model, agents choose an action that maximizes their expected utility given the recommendation. In quantal-response (QR) model, agents behave in a boundedly-rational manner, selecting actions probabilistically based on their incentive to deviate. The moderator observes the realized actions, not the utilities.

A natural benchmark for a ``compliant" recommendations in this environment is correlated equilibrium (CE). A recommendation mechanism is a CE if no agent has an incentive to deviate from the recommended action, given the induced beliefs. Accordingly, we evaluate the moderator’s performance by the incentives to deviate induced by its recommendations: recommendations are good if agents have little to gain from deviation. Our focus is to answer the following questions: 

\begin{itemize}
\item[-] \textbf{Learnability}: 
\edit{Can the moderator recover the unknown agents' utilities from repeated recommendation and action feedback? If not, which equivalence classes of agents’ utility functions are identifiable?}

\item[-] \textbf{Regret-minimization}: Can the moderator learn a recommendation mechanism that achieves low incentive-to-deviate regret over time?
\end{itemize}


A game is said to be learnable under a given feedback model if the model allows the moderator to identify agents’ utilities up to an agent-wise positive affine transformation that preserves the equilibrium solution set. We show that the game is learnable under the QR feedback model. In contrast, under the BR feedback model, a strictly larger class of utility transformations remains indistinguishable; we fully characterize this class. 
\edit{This characterization may be of independent interest because it characterizes such transformations in a larger set of inverse problems.}


\begin{theorem}[Informal learnability results]
The game utility functions are learnable under the QR model, but not under the BR model. Moreover, we characterize the set of all transformations of a game that keeps it indistinguishable under the BR model using polyhedral duality. 
\end{theorem}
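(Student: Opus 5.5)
We will prove the three parts of the informal theorem separately. Throughout we fix notation: a recommendation mechanism is a distribution $\pi$ over joint action profiles. For agent $i$, recommended action $a_i$ and deviation $a_i'$, the incentive to deviate is
\[
\delta_i^\pi(a_i,a_i') \;=\; \sum_{a_{-i}} \pi(a_i,a_{-i})\bigl(u_i(a_i',a_{-i})-u_i(a_i,a_{-i})\bigr).
\]
This is linear in $u_i$, written $\langle M^\pi_{a_i a_i'}, u_i\rangle$ for a fixed matrix $M^\pi_{a_i a_i'}$ determined by $\pi$.

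\textbf{QR learnability.} Under logit quantal response with (known or normalized) rationality parameter, the probability that agent $i$ recommended $a_i$ plays $a_i'$ is proportional to $\exp(\lambda\,\delta_i^\pi(a_i,a_i'))$. Log-ratios of observed frequencies therefore reveal the exact linear functionals $\delta_i^\pi(a_i,a_i')-\delta_i^\pi(a_i,a_i'')$, and in particular $\delta_i^\pi(a_i,a_i')$ itself, since $\delta_i^\pi(a_i,a_i)=0$. The plan has three steps.
\begin{enumerate}
\item Choose a finite family of mechanisms, for example point masses and two-point mixtures on profiles. For point masses $\pi=\delta_{(a_i,a_{-i})}$, the revealed quantity is $u_i(a_i',a_{-i})-u_i(a_i,a_{-i})$ for all pairs $a_i,a_i'$ and every $a_{-i}$.
\item Observe that these quantities determine $u_i$ up to the addition of an arbitrary function $c_i(a_{-i})$ and, if $\lambda$ is unknown, a positive scalar. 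Functions of $a_{-i}$ alone do not affect any incentive, so this is exactly the positive affine (best-response-equivalent) class claimed in the statement.
\item Obtain logarithmic sample complexity in the precision by a standard concentration bound (Hoeffding) on the empirical log-odds, together with a union bound over the finitely many queries.
\end{enumerate}

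\textbf{BR non-learnability and characterization.} Under best response, the observation for a given $\pi$ is only the set $\arg\max_{a_i'} \delta_i^\pi(a_i,a_i')$. Two utilities $u_i$ and $\tilde u_i$ are therefore indistinguishable iff for every $\pi$ and $a_i$ they induce the same argmax. Since argmax sets are determined by the signs of the differences $\langle M^\pi_{a_i a_i'}-M^\pi_{a_i a_i''},\cdot\rangle$, indistinguishability amounts to requiring that $u_i$ and $\tilde u_i$ lie in the same cell of a hyperplane arrangement. Because $\pi$ ranges over a simplex, the relevant vectors range over cones. For each $a_i$ the set of $\pi$ (restricted to the conditional on $a_i$) making $a_i'$ optimal is the polyhedral cone
\[
C_{a_i'}(u_i)=\{\,p\in\Delta(A_{-i}) : \textstyle\sum_{a_{-i}} p(a_{-i})\,(u_i(a_i',a_{-i})-u_i(a_i'',a_{-i}))\ge 0 \;\;\forall a_i''\,\}.
\]
Indistinguishability means that $u_i$ and $\tilde u_i$ induce the same family of cones; in other words, their best-response regions in the belief simplex coincide.

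Applying Farkas' lemma (polyhedral duality) to each cone equality, the difference vectors $\tilde u_i(a_i',\cdot)-\tilde u_i(a_i'',\cdot)$ must be nonnegative combinations of the constraints defining the same cone for $u_i$, and conversely. This yields the characterization: $\tilde u_i$ is obtained from $u_i$ by, for each facet shared between best-response regions, a positive rescaling of the corresponding pairwise difference, plus adding terms that are nonnegative on the cone. Such transformations are consistent only where regions are adjacent. For non-learnability, it suffices to exhibit an example: a $2\times 2$ game with a strictly dominant action, where any perturbation preserving dominance is indistinguishable but is not affine.

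\textbf{Main obstacle.} The hard step is making the duality characterization exact. Equal cones give inequality-generated equivalences facet by facet, and these must be glued consistently across all regions; degenerate cases, such as empty or lower-dimensional regions, must also be handled. I would first treat the generic case, where all best-response regions are full-dimensional, and then use a perturbation or closure argument for the degenerate ones.
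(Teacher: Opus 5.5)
\textbf{Gap 1: the QR part uses a richer signal than the statement allows.} In the paper, QR indistinguishability is defined only through the sets $QR_i(\bx,a_i)=\{a_i' : \varphi_i(a_i,a_i',\bx)\ge 0\}$. Relative frequencies are deliberately discarded because $\beta$ is unknown. The choice model is also truncated: deviations with $\varphi_i<0$ have probability zero, so log-odds exist only for nonnegative incentives. Two games are therefore QR-indistinguishable as soon as all signs of $\langle \bx(a_i,\cdot),\bw_i(a_i,a_i')\rangle$ coincide. Showing that empirical log-odds identify $u_i$ says nothing about such pairs.

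A symptom is that your proof never uses the hypotheses of no weakly dominated actions and genericity, which are indispensable here. If $a_i'$ strictly dominates $a_i$, every positive vector in place of $\bw_i(a_i,a_i')$ produces the same signs. The missing ideas are:
\begin{enumerate}
\item sign information on all of $\mathbb{R}^{d_i}_+$ determines a mixed-sign vector $\bw$ up to a positive factor. One probes $\be_p+t\be_j$ with $w_p>0>w_j$ and locates the sign change at $t=-w_p/w_j$ (\cref{lem:same_halfspace}).
\item the resulting pair-specific factors $\lambda_{i,a,a'}$ must be shown equal. This follows from $\bw_i(a,c)=\bw_i(a,b)+\bw_i(b,c)$ together with the linear independence of $\bu_i(a,\cdot)-\bu_i(c,\cdot)$ and $\bu_i(b,\cdot)-\bu_i(c,\cdot)$, which is where genericity enters.
\end{enumerate}
Your Hoeffding step belongs to the sample-complexity result, not to this one.

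\textbf{Gap 2: the BR counterexample and the characterization.} Your $2\times 2$ example with a strictly dominant action lies outside the class of games without weakly dominated actions. By the remark above it is equally non-learnable under QR feedback, so it does not exhibit the claimed separation. When a player has only two actions, the BR set reveals the sign of $\varphi_i(a_i,a_i',\bx)$, which is at least the QR information. A separating example therefore needs a player with at least three non-dominated actions whose intermediate vertices can move without changing the edge normals in the orthant. The paper uses four actions in $\mathbb{R}^2$ with common edge normals $(1,2)$, $(1,1)$, $(3,2)$. Relatedly, your ``same cell of a hyperplane arrangement'' sentence describes sign-pattern (QR-type) indistinguishability, which is strictly stronger than BR indistinguishability.

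For the characterization, your reduction to equality of best-response cones matches \cref{lem:Zero_char}. However, the Farkas step only restates each cone equality as an equality of dual cones. The facet-by-facet rescaling is neither proved sufficient nor glued consistently, which is exactly the obstacle you leave open. The paper's key device removes it:
\begin{enumerate}
\item Pass to the polarized polyhedron $\tilde{\mathcal{P}}_i=\mathcal{P}_i+C_i^\circ$. Its normal fan is supported on $C_i$ and built from $\mathcal{N}(\mathcal{P}_i)|_{C_i}$.
\item Equality of restricted fans thus becomes global normal equivalence, $\mathcal{N}(\tilde{\mathcal{P}}_i)=\mathcal{N}(\tilde{\mathcal{P}}_i')$.
\item Normally equivalent polyhedra are parametrized by admissible offset vectors in a fixed $\mathcal{H}$-representation.
\item The utilities are read off as $\operatorname{ext}(Q_i)$, using $\operatorname{ext}(\tilde{\mathcal{P}}_i)=\operatorname{ext}(\mathcal{P}_i)$ when every vertex is exposed by a direction in $\operatorname{int}(C_i)$ (\cref{lem:ext_points_polarization}). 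This is where non-dominance enters.
\end{enumerate}
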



\edit{Let $n$ be the number of agents in the game, $m$ be the maximum number of actions for each agent, and $M$ be the number of action profiles. We show that under the QR feedback, the moderator can learn the game with logarithmic complexity in the desired precision and (near-)linear dependence on the normal-form representation size $nM$, up to an additional multiplicative factor $m$.}

 
\begin{theorem}[Informal learning complexity result]
\edit{Under the QR feedback, we construct an algorithm that learns the utilities of the underlying game up to agent-wise positive affine transformation with ~$\epsilon$-precision using~$O(mnM\log(1/\epsilon))$ recommendations.}
\end{theorem}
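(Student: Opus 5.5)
The plan is to exploit the fact that under the QR model the deviation probabilities reveal log-ratios of choice probabilities, and these are linear functionals of the utility differences. Concretely, suppose the moderator commits to a recommendation distribution $\pi$ over action profiles and agent $i$ is recommended $a_i$. The induced belief is $\pi(\cdot\mid a_i)$. Under a logit quantal response with unknown (or normalized) temperature, agent $i$ plays action $a_i'$ with probability proportional to $\exp\bigl(\lambda\, \mathbb{E}_{a_{-i}\sim\pi(\cdot\mid a_i)}[u_i(a_i',a_{-i})]\bigr)$. Therefore, for any two actions $a_i',a_i''$, the quantity $\log\frac{P(a_i')}{P(a_i'')}$ equals $\lambda\sum_{a_{-i}}\pi(a_{-i}\mid a_i)\,\bigl(u_i(a_i',a_{-i})-u_i(a_i'',a_{-i})\bigr)$.

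The first step is the design of queries. I would use degenerate (point-mass) recommendations: recommend a fixed profile $a=(a_i,a_{-i})$ with probability one. Then the belief of each agent $i$ is a point mass on $a_{-i}$, and each observation samples from a softmax over $u_i(\cdot,a_{-i})$. This identifies the differences $\lambda\bigl(u_i(a_i',a_{-i})-u_i(a_i,a_{-i})\bigr)$ for all $a_i'$ and every fixed $a_{-i}$. By the affine-equivalence normalization, the additive terms that depend only on $a_{-i}$ are exactly the freedom that preserves the equilibrium set, along with the positive scaling $\lambda$. So these differences, rescaled, form a canonical representative of the class: set $u_i(a^0_i,a_{-i})=0$ and normalize the range to $1$. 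There are $M/m_i$ contexts $a_{-i}$ for each agent, and one query profile serves all $n$ agents simultaneously. It is therefore enough to query every profile in a set of $M$ profiles, or more economically $M/m$ profiles per agent.

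The second step is estimation. For each queried profile, I repeat the recommendation $K$ times and use empirical frequencies. To handle small probabilities, I assume utilities are bounded and $\lambda$ is bounded, so every QR probability is at least $p_{\min}\ge e^{-\lambda\Delta}/m$. Hoeffding's inequality together with the Lipschitzness of $\log$ on $[p_{\min},1]$ gives that $K=O\bigl(\log(nM/\delta)/(p_{\min}^2\epsilon^2)\bigr)$ samples suffice. That is polynomial in $1/\epsilon$, not logarithmic.

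To obtain $\log(1/\epsilon)$ complexity, which I expect to be the main obstacle, I would instead use the recommendation mechanism adaptively, as in a bisection or binary search. Here the quantity being measured is not a frequency but an indifference point. For a pair $(a_i',a_i'')$, I randomize the recommendation so that agent $i$'s belief over $a_{-i}$ mixes the target context with a calibrated reference. The moderator then searches for the mixing weight at which the agent is equally likely to choose the two actions. Since the log-odds are linear and monotone in the mixing weight, the sign of the log-odds can be tested to fixed constant confidence with $O(1)$ samples when it is bounded away from zero. A noisy binary search, for example with the robust bisection of Karp–Kleinberg, then locates the indifference weight to precision $\epsilon$ in $O(\log(1/\epsilon))$ queries, and from it the utility ratio follows. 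Doing this for the $m-1$ action differences in each of the $M/m$ contexts, across $n$ agents, while sharing queries where possible, yields $O(mnM\log(1/\epsilon))$ recommendations in total. The delicate points are the following:

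1. Constructing reference lotteries whose utility is already known, for instance by bootstrapping from previously learned contexts.
2. Controlling error propagation through this chain.
3. Ensuring that the noisy comparisons near indifference do not break the search. This is why I would use a noisy-bisection analysis rather than naive repetition.

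Finally, a union bound over all searches gives overall success with high probability.
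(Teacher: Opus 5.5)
\textbf{Main gap: wrong choice model and feedback.} Under your model, the step meant to deliver $\log(1/\epsilon)$ fails. You assume full-support logit choice and read off choice frequencies. In the paper, strictly worse deviations have probability zero, and since $\beta$ is unknown the moderator uses only the support $QR_i(\bx,a_i)=\{a_i' : \varphi_i(a_i,a_i',\bx)\ge 0\}$. Under this feedback, a point mass on $a_{-i}$ reveals only the signs of the coordinates of $\bw_i(a_i,a_i')$ (the paper's sign-pattern pass, $d_i$ queries per agent), not their magnitudes.

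Even within your own model, noisy bisection cannot use $O(\log(1/\epsilon))$ samples. At a mixing weight at distance $\eta$ from indifference, the probability of choosing $a_i'$ over $a_i''$ is $1/2+\Theta(\eta)$. So two hypotheses whose indifference points are $\epsilon$ apart give per-sample KL divergence $O(\epsilon^2)$ at every query, which forces $\Omega(\epsilon^{-2})$ samples. Karp--Kleinberg-type bisection matches this only up to log factors, and ``$O(1)$ samples when bounded away from zero'' covers only the early iterations. Counting distinct mechanisms instead does not rescue the argument: your point-mass scheme would then use only $M$ of them, but it relies on frequencies the paper's feedback does not supply.

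The idea you are missing is the discontinuity built into the paper's model. If $\varphi_i(a_i,a_i',\bx)<0$, then $a_i'$ is never played. If $\varphi_i(a_i,a_i',\bx)\ge 0$, then for the queries used it is played with probability at least $e^{-\beta C}/m_i$. Hence each sign query is settled with $m_i e^{\beta C}\ln(1/\delta)$ samples however close the query is to indifference, and the bisection is effectively noiseless.

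\textbf{The steps you flag as delicate are left open.} The paper resolves them without any reference lottery of known utility. Fix a pair $a_i\neq a_i'$ and write $\bw=\bw_i(a_i,a_i')$. The absence of weakly dominated actions guarantees coordinates $p,j$ with $w_p>0>w_j$, located by the sign-pattern pass. Querying $\bx(a_i,\cdot)\propto\be_p+\tau\be_j$ gives $a_i'\in QR_i(\bx,a_i)$ if and only if $w_p+\tau w_j\ge 0$. So bisection on $\tau$ approximates $-w_p/w_j$ to precision $\epsilon$ in $O(\log(1/\epsilon))$ queries, with the pivot coordinate itself as the unit. The bounds $c\le|w_k|\le C$ keep the root in a known bounded interval and control the error when a negative pivot is then used for the remaining positive coordinates.

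Finally, sign feedback determines each $\bw_i(a_i,a_i')$ only up to its own positive scalar. Learning the $m_i-1$ differences against a fixed reference action therefore leaves $m_i-1$ unrelated scalings, whereas equivalence in the paper's sense needs a single $\lambda_i$ per agent. Your ``calibrated reference'' is exactly this calibration problem. The paper handles it by learning further pairs and using $\bw_i(a,c)=\bw_i(a,b)+\bw_i(b,c)$ together with genericity to tie the scalings together. It learns all $m_i(m_i-1)/2$ pairs, which is the source of the extra factor $m$ in its bound.
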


Furthermore, under both BR and QR feedback models, we design an online algorithm that produces low-regret recommendations, where regret is measured by the agents’ accumulated incentives to deviate from the recommended actions. Our approach reduces the problem of low-regret recommendation to a geometric cutting-plane problem inspired by contextual search and inverse optimization literature \cite{gollapudi2021contextual,liu2021optimal,besbes2025contextual} via a novel oracle construction. 
\edit{The regret of this recommendation algorithm scales linearly in the normal-form representation size $nM$ and logarithmically in the number of rounds $T$.}
\begin{theorem}[Informal regret bound] We construct an algorithm for the moderator to give recommendations with \edit{$O(nM\log(T))$} regret under the BR and QR feedback models. \end{theorem}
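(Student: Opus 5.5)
We reduce the moderator's problem to contextual search over an unknown vector and run a cutting-plane method, using a separation oracle built from one round of action feedback. Stack all agents' utilities into a vector $\theta\in\mathbb{R}^{nM}$. Normalize it so that each agent's utility lies in a bounded set; this is possible because deviation incentives are invariant to positive affine transformations after scaling. In each round the moderator chooses a distribution $\pi$ over action profiles. For agent $i$, recommended action $a$ and deviation $a'$, the incentive to deviate
\[
g_{i,a,a'}(\pi;\theta)=\sum_{a_{-i}}\pi(a,a_{-i})\bigl(u_i(a',a_{-i})-u_i(a,a_{-i})\bigr)
\]
is linear in $\theta$. The regret is the sum over rounds of the maximum (or positive-part sum) of these incentives.

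The algorithm keeps a knowledge set $K_t\subseteq\mathbb{R}^{nM}$ of utilities consistent with all feedback so far, starting from a bounded box. In each round:
\begin{enumerate}
\item Pick a centroid-type point $\hat\theta_t\in K_t$; the centroid or the center of the John ellipsoid works.
\item Compute a correlated equilibrium $\pi_t$ of the game $\hat\theta_t$. This is a linear program, and a CE always exists.
\item Recommend according to $\pi_t$.
\end{enumerate}

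If the true $\theta$ makes $\pi_t$ an approximate CE, the regret this round is small. Otherwise some constraint $g_{i,a,a'}(\pi_t;\theta)>0$ is violated, while $g_{i,a,a'}(\pi_t;\hat\theta_t)\le 0$. This linear inequality separates $\theta$ from $\hat\theta_t$, so it is a cut through the centroid.

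The central oracle step is to extract this cut from the feedback.
\begin{itemize}
\item \textbf{BR feedback.} When an agent recommended $a$ plays $a'\neq a$, we learn $g_{i,a,a'}(\pi_t;\theta)\ge g_{i,a,a''}(\pi_t;\theta)$ for all $a''$, and in particular $g_{i,a,a'}(\pi_t;\theta)>0$. When every agent complies, we learn that all these incentives are $\le0$, and the regret is zero.
\item \textbf{QR feedback.} The deviation probabilities are monotone in the incentives. Either a deviation is observed frequently enough to certify a positive incentive, or the incentives are small, so the round's regret is bounded. A few repeated samples per round may be needed, which only adds a logarithmic factor.
\end{itemize}

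In either case we obtain a halfspace through (or near) $\hat\theta_t$ that contains $\theta$. By Grünbaum's theorem, each cut removes a constant fraction of the volume of $K_t$.

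To bound the regret, use the standard contextual-search potential argument, as in the cited cutting-plane literature. The regret incurred in a round is at most the width of $K_t$ in the cut direction. Rounds whose width exceeds $1/T$ shrink the volume geometrically. The total number of such rounds is then $O(d\log T)$ with $d=nM$, since the knowledge set cannot shrink below the volume of a $1/T$-ball in the relevant directions. Rounds with width below $1/T$ contribute at most $O(1)$ in total. Handling degenerate directions requires cylindrification, as in projected-volume methods: restrict to the subspace where $K_t$ still has non-negligible width. The affine-invariance directions, namely agent-wise additive constants and scalings, must be quotiented out first.

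The main obstacle is making the cut exact enough under feedback. Under BR, the cut is exact but only in a ``deep cut'' sense relative to $\hat\theta_t$. Under QR, the cut is noisy. The first thing to try is shifting the cut by a margin and controlling the regret incurred in rounds where this margin matters.
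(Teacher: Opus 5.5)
Your architecture matches the paper's: play a CE of the current guess, cut on observed deviations, and use a contextual-search potential. What is missing is the step that ties the regret to the cut, and that is why your QR branch stays open (``the first thing to try is shifting the cut by a margin''). Two misreadings cause this. First, QR feedback here is not noisy. By \eqref{eq:quantal_prob}, a deviation $a_i'$ has positive probability only if $\varphi_i(a_i,a_i',\bx)\ge 0$, so every observed deviation certifies a nonnegative incentive exactly, just as under BR. Second, the regret in the theorem is not the maximum or positive-part sum of the incentives of $\pi_t$. It is the realized quantity $r(\ba,\ba^\star,\bx)=\sum_i\varphi_i(a_i,a_i^\star,\bx)$, which is zero in every compliant round whatever the incentives are. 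With these corrections the key step is an exact identity. For each deviating agent, take the vector with $+\bx^{(t)}(a_i^{(t)},\cdot)$ in the coordinates of $u_i(a_i^{\star(t)},\cdot)$ and $-\bx^{(t)}(a_i^{(t)},\cdot)$ in those of $u_i(a_i^{(t)},\cdot)$, and sum over all deviating agents; the paper does the same in difference coordinates. This gives $\bq^{(t)}$ with $\langle\bw^\star,\bq^{(t)}\rangle=r_t\ge 0$. Also $\langle\bw^{(t)},\bq^{(t)}\rangle\le 0$ because $\bx^{(t)}\in CE(\bw^{(t)})$. Hence $r_t\le\langle\bw^\star-\bw^{(t)},\bq^{(t)}\rangle$ holds deterministically under both models, with no repeated sampling, margins, or extra logarithmic factor. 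Your proposed QR dichotomy (``either a deviation is observed often enough, or the incentives are small'') is false as stated: for $\beta=0$ the compliance probability is $1/|QR_i(\bx,a_i)|$, independent of the incentive magnitudes. Likewise, ``when every agent complies we learn all incentives are $\le 0$'' holds only for the sampled recommendations and only under BR.

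There are two further problems in the cutting-plane part. First, cutting on a single violated constraint only controls that agent's term, so it cannot bound the sum over all deviating agents; you need the summed cut above. Second, querying the plain centroid of $K_t$ and invoking Gr\"unbaum only shrinks $\mathrm{Vol}(K_t)$. That volume is not bounded below, since $K_t$ can become flat while staying wide in the cut direction, so ``the knowledge set cannot shrink below the volume of a $1/T$-ball'' does not follow. The paper instead queries $\mathrm{cg}(\mathcal C^{(t-1)}+\frac1T\mathcal B)$ and tracks the potential $\mathrm{Vol}(\mathcal C^{(t)}+\frac1T\mathcal B)\ge\mathrm{Vol}(\frac1T\mathcal B)$. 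Citing Gollapudi et al., it obtains $\sum_t\langle\bw^\star-\bw^{(t)},\bq^{(t)}\rangle=O(N\log T)$ with $N=\sum_i(m_i-1)|A_{-i}|<nM$. Because this potential stays bounded below however flat the set becomes, flat or never-cut directions need no cylindrification or quotienting. Parametrizing by the differences $\bw_i(a_i^1,a_i^j)$ already removes the translations, and scale is fixed by $\|\bw^\star\|\le 1$. Projected-volume methods are a valid alternative, but then the query point must be the centroid of the cylindrified set, not of $K_t$.
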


\subsection{Related work}
\paragraph{Inverse game theory}
Inverse game theory (IGT) studies the problem of inferring or rationalizing players’ utility functions from observed behavior in a game. Given an observed outcome, the objective is to identify utility functions that could have generated that behavior. Central to most inverse game-theoretic approaches is the assumption that the observed outcome corresponds to an (approximate) equilibrium of the underlying game \cite{kuleshov2015inverse,beigman2006learning,waugh2013computational}. This assumption is often restrictive and requires information about the game that may not be available to the moderator.

Our approach departs from the classical IGT setting by not assuming knowledge of any equilibrium outcome of the underlying game. Instead, we adopt an active learning perspective in which a moderator can probe the game to elicit informative responses. By explicitly leveraging off-equilibrium behavior, the moderator can infer substantially more about agents’ incentives than can be identified from equilibrium observations alone. Indeed, a well-known limitation of inverse game theory is underdetermination: many distinct games can rationalize the same observed equilibrium, including degenerate cases such as games with constant payoffs across outcomes and players. Observations of off-equilibrium outcomes help prune this hypothesis space and rule out many such trivial explanations. We show that, under a reasonable feedback model in our setting, it is possible to identify the true payoff functions of the game up to natural invariance classes (cf. \cref{sec:learnability}).

Closely related to our setting is the recent work of \cite{zhang2025learning}, which also considers a moderator seeking to learn unknown utilities in a game under different behavioral models for the agents. In their framework, however, the moderator influences the game through a combination of signaling and exogenous payments that additively modify agents’ utilities. In particular, the moderator can alter the game by selecting payment functions that directly change each agent’s rewards, and these payments serve as the main mechanism for eliciting preferences. We address a different setting in which the moderator has less influence. The only available lever is the ability to issue non-binding recommendations. As a result, our approach is applicable to a broader class of environments in which monetary incentives or direct utility modifications are infeasible. 

\paragraph{Learning optimal Stackelberg strategies}
Another related line of work studies the problem of learning optimal Stackelberg strategies when follower payoffs are unknown. In these settings, a leader repeatedly interacts with one or more followers to learn their utilities and compute an optimal commitment strategy. The leader is an explicit player with its own utility function and an action space that allows it to influence the equilibrium of the game \cite{zhao2023online,wu2022inverse,marecki2012playing,peng2019learning,personnat2025learning}.

Our work differs primarily in the power and role of the moderator. In our setting, the moderator does not take actions in the game but instead issues recommendations. Moreover, most prior work in this area focuses on a single follower, with more recent contributions \cite{personnat2025learning} extending to multiple followers under the assumption of no strategic externalities among them. In contrast, our setting accommodates multiple strategically coupled agents without granting the moderator direct control over outcomes or incentives.

\paragraph{Contextual search and inverse optimization}
Inverse optimization is a general and powerful paradigm for learning latent preferences and constraints that rationalize the observed behavior of a decision maker. Recent work on online and contextual inverse optimization extends this paradigm to online environments, where a learner repeatedly observes a changing context and the decisions made by agents in response to that context \cite{gollapudi2021contextual,dong2018generalized,besbes2025contextual}. While inverse optimization typically focuses on a single decision maker and treats the context as exogenous, it provides foundational tools for inferring preferences from actions, which we build upon in our multi-agent setting (cf. \cref{sec:low_reg_alg}). A key distinction in our work is that the learner (interpreted as a platform or assistant) endogenously influences the context by issuing recommendations or information to agents, rather than passively observing decisions under externally given contexts. 

Another related line of work studies the data requirements needed to recover uncertain parameters or to identify an optimal decision in contextual and data driven optimization problems \cite{besbes2023big, bennouna2026data,hu2022fast}. In particular, \cite{bennouna2026data} provide a geometric characterization of when observed data are sufficient to recover optimal decisions in linear optimization. This perspective is connected to our learnability question in Section 3, while our focus is to study which feedback models are sufficient to recover the unknown utilities (up to natural equivalences).


\section{Model and preliminaries}
\label{sec:model}
We consider a finite normal form game $G(\{A_i\}_{i=1}^n, \{u_i\}_{i=1}^n)$ with $n$ agents. Each agent has a set of $m_i$ actions
$A_i = \{a_i^{1}, \dots a_i^{m_i}\}$ and utility function $u_i: A\to \mathbb{R}$, where $A \coloneqq \prod_{i=1}^n A_i$ is the joint-action set. \edit{Let $m \coloneqq \max_{i\in[n]} m_i$ and $M\coloneqq |A| = \prod_{i}^n m_i$ be the cardinality of the joint action set} . An action profile is denoted by $\ba \in A$ and $\ba_{-i} \in A_{-i}$ denotes the action profile of all agents except $i$. We denote by $d_i$ the cardinality of the opponents' joint action space $A_{-i}$, given by $d_{i} \coloneqq |A_{-i}| = \prod_{j \neq i} m_j$.
For a fixed agent $i \in [n]$ and action $a_i \in A_i$ we define the utility vector  
\[
\bu_i(a_i, \cdot) \in \mathbb{R}^{d_i}, \quad  \bu_i(a_i, \cdot)= \left[\bu_i(a_i, \ba_{-i})\right]_{\ba_{-i} \in A_{-i}} 
\]
which lists agent $i$’s utilities across all opponent action profiles. For a pair of actions  $a_i, a_i' \in A_i$ define the utility difference vector
\[
\bw_i(a_i, a_i') \in \mathbb{R}^{d_i}, \quad  \bw_i(a_i, a_i') = \bu_i(a'_i, \cdot) - \bu_i(a_i, \cdot).
\]

Let $\Delta(A) \subset \mathbb{{R}}_{+}^{M}$ be the set of joint distributions over action profiles. For a given distribution $\bx \in \Delta(A)$, let $\bx(\ba)$ be the probability assigned to action profile $\ba$. For a fixed action $a_i$ for agent $i$ we denote the vector 
\[
\bx(a_i, \cdot) \in \mathbb{R}^{d_i}, \quad  \bx(a_i, \cdot)= \left[\bx(a_i, \ba_{-i})\right]_{\ba_{-i} \in A_{-i}}.
\]

\paragraph{Recommendations in unknown games.} We take the perspective of a moderator who repeatedly interacts with $n$ agents playing the game $G(\{A_i\}_{i=1}^n, \{u_i\}_{i=1}^n)$. The moderator knows the game structure (i.e. the number of agents and the action sets $\{A_i\}_{i=1}^n$) but does not know the agents' utility functions $\{u_i\}_{i=1}^n$.
The interaction proceeds over $T$ rounds. In each round $t$ the moderator commits to a recommendation mechanism in the form of a probability distributions over action profiles $\bx^{(t)} \in \Delta(A)$. The moderator then samples an action profile $\ba\sim \bx^{(t)}$ and privately recommends to each agent $i$ their corresponding action $a_i$. 
Agents observe the recommendation mechanism $\bx^{(t)}$ and their private action recommendation $a_i$, but not the actions recommended to others. Agents then select, through a \emph{choice model}, which action $a_i^{\star}$ to play. The moderator observes the realized actions $\ba^{\star}$ for all agents, but receives no information about their utilities.

\paragraph{Choice models.} Upon observing a recommended action $a_i$ from a recommendation mechanism $\bx$, agent $i$ ranks actions according to ${u^{\bx}_i(a_i'| a_i)} = \langle \bx(a_i, \cdot), \bu_i(a'_i, \cdot)\rangle$ \footnote{The conditional expected utility of playing $a'_i$ when receiving recommendation $a_i$ is $u^{\bx}_i(a_i'| a_i)/\mathrm{Pr}^{\bx}(a_i)$, where $\mathrm{Pr}^{\bx}(a_i)$ is the marginal probability of  $a_i$ in $\bx$. However, since $\mathrm{Pr}^{\bx}(a_i)$ does not affect the ranking of actions in $A_i$, we drop this from the analysis.}. The recommendation mechanism along with the recommended action forms the agents belief about the joint actions of others and the quality of an action depends on its expected utility with respect to this belief. 
A recommendation mechanism $\bx$ is said to be a \emph{correlated equilibrium} if for every action profile $\ba \sim \bx$, the recommended action $a_i$ for each agent $i$ is an optimal action. 
\begin{definition}\label{def:CE}
A probability distribution $\bx \in \Delta(A)$ is a correlated equilibrium (CE) of the game \\ $G(\{A_i\}_{i=1}^n, \{u_i\}_{i=1}^n)$ if for every agent $i \in [n]$ and every actions $a_i \in A_i$ we have 
\begin{align} 
\varphi_i(a_i, a_i', \bx) \coloneqq \left\langle \bx(a_i, \cdot), \bw_i(a_i, a_i') \right\rangle & \leq 0, & \forall a_i'\in A_i, 
\end{align}  
where $\bw_i(a_i, a_i') = \bu_i(a_i', \cdot) - \bu_i(a_i, \cdot)$.
\end{definition}
Following \cref{def:CE}, a recommendation mechanism $\bx$ is an approximate correlated equilibrium, denoted by $\epsilon$-CE if for every agent $i$, every actions $a_i \in A_i$, and every deviation $a_i'\in A_i$, we have $\varphi_i(a_i, a_i', \bx) \leq \epsilon$. The value of $\varphi_i(a_i, a_i', \bx)$ represents the incentive agent $i$ has to deviate from the recommended action $a_i$ to a different action $a_i'$.

The action the agent ends up choosing reveals information about their ordinal preferences (ranking) over actions. We will discuss two specific choice models and the information they reveal to the moderator. 
First, we consider a best responding agent, who reports their best ranked action given the recommendation.

\begin{definition}[Best-response sets] 
\label{def:best_response}
For each recommendation mechanism $\bx$, agent $i \in [n]$, and recommended action $a_i$, define the best-response set 
\begin{align}
    BR_i(\bx, a_i) =\arg\max_{a_i'\in A} u^{\bx}_i(a_i'|a_i). 
\end{align}
\end{definition}
Under the best-response feedback model, the realized actions the moderator observes in response to a recommendation $\ba \sim \bx$ are drawn uniformly at random from the best-response sets $a_i^{\star} \in BR_{i}(\bx, \ba_i)$.

Next, we consider a boundedly rational agent following a quantal-response model. Given a recommendation mechanism $\bx$ and an action recommendation $a_i$, an agent selects a deviation $a'_i$ with probability proportional to its incentive to deviate \footnote{Unlike the classical quantal response model, which assigns positive probability to all actions based on their expected utilities, this model assigns probability only to deviations that weakly improve upon the recommended action. This asymmetry reflects the informational role of recommendations: it serves as a reference to evaluate alternatives and only beneficial deviations are considered.}
\begin{align}
\label{eq:quantal_prob}
    P_{\bx}(a'_i|a_i) \propto \begin{cases}
        \exp\left(\beta \varphi_i(a_i, a_i', \bx)\right) & \text{if } \varphi_i(a_i, a_i', \bx)\geq 0   \\
        0 & \text{otherwise}
    \end{cases}
\end{align}
where $\beta\geq 0$ is a rationality parameter.

\edit{Since the moderator does not know the agents' exact quantal-response probability model (e.g., the parameter $\beta$), we do not use relative action frequencies. Instead, we restrict attention to the coarser feedback captured by the quantal-response sets which describe the set of actions actions that can occur with positive probability under some quantal-response behavior}.

\begin{definition}[Quantal-response sets] 
\label{def:quantal_response}
For each recommendation mechanism $\bx$, agent $i \in [n]$, and recommended action $a_i$, define the quantal-response set
\begin{align}
    QR_i(\bx, a_i) = \{a_i' \in A_i| \varphi_i(a_i, a_i', \bx)\geq 0\}.
\end{align}
\end{definition}
\paragraph{Regret.} 
The goal of the moderator is to issue good recommendations that ensure agents’ compliance. For a given recommendation mechanism
$\bx$ and action profile $\ba \sim \bx$, and realized action profile $\ba^{\star}$, we define the moderator's regret as the aggregate incentive of agents to deviate from the recommended actions to the chosen actions 
\begin{align}
    r(\ba, \ba^{\star}, \bx) = \sum_{i\in [n]} \varphi_i(a_i, a_i^{\star}, \bx).
\end{align}
Under both the best-response and quantal-response feedback models, this regret is always nonnegative, since agents only deviate to actions with non-negative incentives. In our context, correlated equilibria capture the compliance constraints for all agents in the game. The regret $r(\ba, \ba^{\star}, \bx) $ can be viewed as a per-sample violation of the CE constraints. If $\bx$ is a CE of the game then $r(\ba, \ba^{\star}, \bx)=0$ for all samples $\ba$ and possible deviations $\ba^{\star}$. 

We consider algorithms $\mathcal{A}$ that select a sequence of recommendation mechanisms $\{\bx^{(t)}\}_{t=1}^T$ over $T$ rounds. The performance of the algorithm is measured by its expected cumulative regret
\begin{align}
    \operatorname{Reg}(\mathcal{A}) = \mathbb{E}\left[ \sum_{t=1}^T r(\ba^{(t)}, \ba^{\star (t)}, \bx^{(t)}) \right], 
\end{align}
where $\ba^{(t)}$ is the sampled action profile at time $t$ and $\ba^{\star (t)}$ is the realized action profile at time $t$. The expectation is taken with respect to the randomness of the algorithm, the sampling of action profiles $\ba^{(t)} \sim\bx^{(t)}$, and the stochasticity of agents’ responses. Namely, the expectation is taken over the joint distribution of the trajectory $\left\{ \left(\ba^{(t)}, \ba^{\star (t)}, \bx^{(t)}\right)\right\}_{t=1}^T$. We say an algorithm $\mathcal{A}$ is low-regret if $\operatorname{Reg}(\mathcal{A}) = o(T)$. 


\section{Learnability}
\label{sec:learnability}
The first question we address is whether a given feedback model is, in principle, sufficiently rich for the moderator to learn the underlying game utilities $\{u_i\}_{i=1}^n$. However, exact recovery of utilities is impossible. Indeed, the feedback sets generated by both the best-response and quantal-response models are fully determined by the ordinal comparisons of expected payoffs, $u_i^{\bx}(a_i' | a_i) = \langle \bx(a_i, \cdot), \bu_i(a'_i, \cdot)\rangle$, which are invariant under agent-specific positive affine transformations of the utility vectors $\bu_i(a_i',\cdot)$.
Motivated by this, we introduce the notion of strategically \emph{equivalent games}. 

\begin{definition}[Game equivalence]
\label{def:game_eq}
    Two games $G(\{A\}_{i=1}^n, \{u_i\}_{i=1}^n)$ and $G'(\{A\}_{i=1}^n, \{v_i\}_{i=1}^n)$ are equivalent if for every agent $i \in [n]$ and action $a_i \in A_i$, 
    \begin{align}
    \label{eq:pos_affine}
        \bv_i(a_i, \cdot) = \lambda_i \bu_i(a_i, \cdot) + \boldsymbol{t}_i
    \end{align}
    for some $\lambda_i >0$ and $\boldsymbol{t}_i \in \mathbb{R}^{d_i}$.
\end{definition}
\begin{remark}
    Equivalent games induce identical strategic outcomes under a broad class of agent behavioral models. They generate the same best-response sets, quantal-response supports, and equilibrium sets under standard equilibrium notions such as Nash, correlated, and coarse correlated equilibrium.
\end{remark}

\paragraph{Indistinguishability.}
To formalize the learnability question, we need to reason about the information
available to the moderator. A moderator interacts with the game only through a specific feedback model and may choose any recommendation mechanisms. Two games are indistinguishable to the moderator if they generate identical feedback under all such interactions.

\begin{definition}[Indistinguishability under a feedback model]
\label{def:obs_eq}
Fix a feedback model $\mathcal{F}$. Two games \\
$G(\{A_i\}_{i=1}^n,\{u_i\}_{i=1}^n)$ and
$G'(\{A_i\}_{i=1}^n,\{v_i\}_{i=1}^n)$ are indistinguishable under
$\mathcal{F}$ if for every recommendation mechanism $\bx \in \Delta(A)$, the feedback generated by $\mathcal{F}$ is identical in the two games.
\end{definition}

We instantiate this notion for the two feedback models considered in this work: 
\begin{itemize}
    \item[-] \textbf{Best-response feedback}: Under the best-response model, indistinguishability requires that for every recommendation mechanism
    $\bx \in \Delta(A)$, and every agent $i$ and action $a_i \in A_i$,
    \[
        \arg\max_{a_i' \in A_i} u_i^{\bx}(a_i' \mid a_i)
        =
        \arg\max_{a_i' \in A_i} v_i^{\bx}(a_i' \mid a_i).
    \]
    \item[-] \textbf{Quantal-response feedback}: Under the quantal-response model, indistinguishability requires that for every recommendation mechanism
    $\bx \in \Delta(A)$, and every agent $i$ and action $a_i \in A_i$,
    \[
        \{ a_i' \in A_i \mid
        \langle \bx(a_i,\cdot),
        \bu_i(a_i',\cdot) - \bu_i(a_i,\cdot) \rangle \ge 0 \}
        =
        \{ a_i' \in A_i \mid
        \langle \bx(a_i,\cdot),
        \bv_i(a_i',\cdot) - \bv_i(a_i,\cdot) \rangle \ge 0 \}.
    \]
\end{itemize}

\paragraph{Learnability.}
We are now ready to define learnability of a game. Intuitively, a game is learnable from a feedback model if the feedback rules out all games that are not equivalent to it.

\begin{definition}[Learnability from a feedback model]
\label{def:learnability}
A game $G(\{A_i\}_{i=1}^n,\{u_i\}_{i=1}^n)$ is \emph{learnable} from a feedback model $\mathcal{F}$ if every game that is indistinguishable to it under
$\mathcal{F}$ is equivalent in the sense of \cref{def:game_eq}.
\label{def:learnability_qr}
\label{def:learnability_br}
\end{definition}

In section \cref{sec:learnability_qr}, we establish positive learnability results for a general class of games under the qunatal-response model. In \cref{sec:learnability_br} we show that the same class of games is not learnable
from the best-response feedback. We provide a geometric characterization of the set of all games that are indistinguishable under the best-response feedback.

\subsection{Learnability from the quantal-response feedback}
\label{sec:learnability_qr}
In this section, we discuss the learnability of a class of games from the quantal-response feedback model. Specifically, we consider games with no \emph{weakly dominated} actions. This condition ensures that every pair of actions exhibits payoff tradeoffs across opponents’ actions, which will be essential for learnability from sign-based feedback. \edit{It also ensures that every action $a_i'$ can be realized as a best-response to some recommendation mechanism $\bx$ and action recommendation $a_i$, for every agent $i$.}
\begin{definition}[Weakly dominated actions]
    An action $a_i \in A_i$ for agent $i$ is {weakly dominated} if there exists another action $a'_i \neq a_i$ such that $u_i(a'_i, a_{-i}) \geq u_i(a_i, a_{-i})$ for all $a_{-i} \in A_{-i}$ and $u_i(a'_i, a_{-i}) > u_i(a_i, a_{-i})$ for some $a_{-i} \in A_{-i}$.
\end{definition}

In \cref{thm:learnability_qr} we establish that any generic game with no weakly dominated strategy is learnable from the quantal-response feedback in the sense of \cref{def:learnability_qr}.  

\begin{theorem}
\label{thm:learnability_qr}
    A {generic game} with {no weakly dominated actions} is learnable from the qunatal-response feedback. 
\end{theorem}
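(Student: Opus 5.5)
The plan is to reduce quantal-response indistinguishability to a statement about a single utility-difference vector at a time. Then I show that each difference vector is determined up to a positive scalar, and finally use genericity to force all these scalars to agree for a given agent. Fix an agent $i$ and two games $G, G'$ with utilities $u_i, v_i$ that are indistinguishable under the quantal-response feedback. Write $\bw = \bw_i(a_i,a_i')$ and $\bw' = \bu'$-analogue $\bv_i(a_i',\cdot)-\bv_i(a_i,\cdot)$.

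\textbf{Step 1 (reduction to the orthant).} As $\bx$ ranges over $\Delta(A)$, the block $\bx(a_i,\cdot)$ ranges, up to positive scaling, over every nonzero vector $\by \in \mathbb{R}^{d_i}_{+}$. Mass can be placed on the remaining blocks, so $\bx(a_i,\cdot)$ need not sum to one. The sign condition is invariant under positive scaling. Hence indistinguishability gives, for every ordered pair $(a_i,a_i')$,
\[
\{\by\in\mathbb{R}^{d_i}_{+} : \langle \by,\bw\rangle \ge 0\} = \{\by\in\mathbb{R}^{d_i}_{+} : \langle \by,\bw'\rangle \ge 0\}.
\]

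\textbf{Step 2 (each difference vector is determined up to scale).} Since neither $a_i$ nor $a_i'$ is weakly dominated, and assuming genericity excludes $\bw=0$, the vector $\bw$ has at least one strictly positive entry and one strictly negative entry. The argument probes the two sets in Step 1 with test vectors supported on one or two coordinates.
\begin{itemize}
\item \emph{Signs.} Testing $\by=\be_j$ shows $w_j\ge 0 \iff w'_j\ge 0$.
\item \emph{Zeros.} Suppose $w_j=0$. Pick $k$ with $w_k<0$ and test $\by = \be_j+\varepsilon \be_k$. Then $\langle \by,\bw\rangle = \varepsilon w_k <0$. If $w'_j>0$, then for small $\varepsilon>0$ we would get $\langle \by,\bw'\rangle>0$, a contradiction. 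Hence $w'_j=0$.
\item \emph{Ratios.} For $j$ with $w_j>0$ and $k$ with $w_k<0$, test $\by=s\be_j+t\be_k$ with $s,t\ge 0$. The condition $\langle\by,\bw\rangle\ge0$ holds iff $s/t \ge -w_k/w_j$. The same threshold must arise from $\bw'$, so $w'_k/w'_j = w_k/w_j$.
\end{itemize}
Fixing one positive index and one negative index, the ratios propagate to all coordinates. This yields $\bw_i'(a_i,a_i') = c_{a_i a_i'}\,\bw_i(a_i,a_i')$ for some $c_{a_i a_i'}>0$, and clearly $c_{a_i a_i'} = c_{a_i' a_i}$.

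\textbf{Step 3 (a common scale, where genericity enters).} For three distinct actions $a,a',a''$, the identity $\bw(a,a'') = \bw(a,a')+\bw(a',a'')$ holds in both games. It gives
\[
c_{aa''}\big(\bw(a,a')+\bw(a',a'')\big) = c_{aa'}\bw(a,a') + c_{a'a''}\bw(a',a'').
\]
I take genericity to mean in particular that $\bw(a,a')$ and $\bw(a',a'')$ are linearly independent; this is where the assumption is needed. Under it, $c_{aa'}=c_{a'a''}=c_{aa''}$, so all $c$'s for agent $i$ equal a common $\lambda_i>0$. When $m_i=2$ there is only one pair and nothing to check; when $m_i=1$ the claim is trivial.

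\textbf{Step 4 (conclusion).} Set $\boldsymbol{t}_i = \bv_i(a_i^1,\cdot)-\lambda_i\bu_i(a_i^1,\cdot)$. Then $\bv_i(a_i,\cdot) = \lambda_i\bu_i(a_i,\cdot)+\boldsymbol{t}_i$ for every $a_i$. Doing this for every agent shows the two games are equivalent in the sense of \cref{def:game_eq}.

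I expect the main obstacle to be Step 3, together with the exact formulation of ``generic''. One must make sure the genericity notion rules out collinear difference vectors, so that the per-pair scalars cannot differ. It must also exclude $\bw=0$, so that Step 2 applies. Step 2 itself is elementary, but the zero-coordinate case has to be handled with care.
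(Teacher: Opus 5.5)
Your proposal is correct and follows the paper's proof essentially step for step. It identifies each difference vector up to a positive scale by probing the orthant with one- and two-coordinate test vectors (the paper packages this as \cref{lem:same_halfspace}), forces a common $\lambda_i$ via the triangle identity together with genericity (the paper's ``non-collinear $\bu_i(a,\cdot),\bu_i(b,\cdot),\bu_i(c,\cdot)$'' is exactly your linear independence of the two differences), and reconstructs $\boldsymbol{t}_i$ from a reference action. Your inline version of the lemma is, if anything, slightly more careful than the paper's: it uses only the weak-inequality information $\langle \by,\bw\rangle\ge 0$ that quantal-response sets actually reveal, derives the mixed-sign property of $\bw'$ rather than assuming it, and explicitly invokes genericity to exclude $\bw=0$.
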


The proof of \cref{thm:learnability_qr} proceeds in two steps. First, we show that quantal-response feedback identifies each pairwise utility difference vector $\bw_i(a_i, a_i')$ up to a positive scalar. Second, we show that these scalars must be consistent across all actions of an agent, yielding identification up to the positive affine transformation in \cref{def:game_eq}. A key technical ingredient is the following geometric lemma, which formalizes when sign information over all nonnegative directions identifies a vector up to scale. We defer the proof of \cref{lem:same_halfspace} to \cref{proof:same_halfspace}.


\begin{lemma}
\label{lem:same_halfspace}
Let $\bw, \bw'\in\mathbb{R}^d$ be two vectors each having at least one strictly positive and one strictly negative component. If for every $\bx \in \mathbb{R}^d_{+}$ we have $\operatorname{sign}\langle \bx, \bw\rangle = \operatorname{sign}\langle \bx,\bw'\rangle$, then there exists $\lambda > 0$ such that $\bw'=\lambda \bw$.
\end{lemma}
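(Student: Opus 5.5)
The plan is to use two-coordinate test vectors $\bx = s\be_j + t\be_k$ with $s,t \ge 0$, which lie in $\mathbb{R}^d_+$. These vectors let us compare ratios of coordinates of $\bw$ and $\bw'$.

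\textbf{Step 1: the signs of the coordinates agree.} Taking $\bx=\be_j$ gives $\operatorname{sign}(w_j)=\operatorname{sign}(w'_j)$ for every $j$. In particular, both vectors have the same zero set, the same set $P=\{j: w_j>0\}$ and the same set $N=\{k: w_k<0\}$. By hypothesis both $P$ and $N$ are nonempty.

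\textbf{Step 2: fix the ratio using one positive and one negative coordinate.} Fix $j\in P$ and $k\in N$, and take $\bx = |w_k|\be_j + w_j\be_k \in \mathbb{R}^d_+$. Then $\langle \bx,\bw\rangle = |w_k|w_j + w_j w_k = 0$. So we also need $\langle \bx,\bw'\rangle = |w_k|w'_j + w_j w'_k = 0$, which gives $w'_j/w_j = w'_k/w_k$.

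Define $\lambda := w'_{j_0}/w_{j_0} > 0$ for some fixed $j_0\in P$. It is positive by Step 1.

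\textbf{Step 3: propagate the ratio to all nonzero coordinates.} For any $k\in N$, pairing $k$ with $j_0$ in Step 2 gives $w'_k=\lambda w_k$. For any other $j\in P$, pair $j$ with some fixed $k_0\in N$: this gives $w'_j/w_j = w'_{k_0}/w_{k_0} = \lambda$. This is exactly where the hypothesis that both vectors have a strictly positive and a strictly negative component is used. Two positive coordinates cannot be compared directly, because every nonnegative combination of them has the same sign for both vectors. They must be linked through a negative coordinate.

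\textbf{Step 4: conclude.} Zero coordinates satisfy $w'_j = 0 = \lambda w_j$ by Step 1. Hence $\bw'=\lambda\bw$ with $\lambda>0$.

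The argument is elementary. The only delicate point is choosing the test vector in Step 2 so that it is nonnegative and makes $\langle \bx,\bw\rangle$ vanish exactly. Sign equality at the zero level, rather than strict sign information alone, is what forces the ratios to be equal.
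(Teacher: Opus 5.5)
Your proof is correct and follows essentially the paper's route: coordinate vectors $\be_j$ to match sign patterns, then a test vector $\be_j + t\be_k$ pairing a positive with a negative coordinate (yours is exactly this vector at the root $t_0=-w_j/w_k$, rescaled), with the common ratio propagated through fixed pivots, which you spell out more carefully than the paper's ``the pair is arbitrary.'' One small correction to your closing remark: the zero level is not essential, since strict sign agreement for $t<t_0$ and $t>t_0$ already forces the affine function $w'_j+tw'_k$ to vanish at $t_0$ (the sign-change phrasing the paper uses), and this distinction matters in the application, where quantal-response feedback reveals only whether $\langle \bx,\bw\rangle\ge 0$.
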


\begin{proof}[Proof of \cref{thm:learnability_qr}]
Fix a game $G(\{A_i\}_{i=1}^n,\{u_i\}_{i=1}^n)$ and let $G'(\{A_i\}_{i=1}^n,\{v_i\}_{i=1}^n)$ be any game indistinguishable from $G$ under
the quantal-response feedback. 
By \cref{def:quantal_response}, for every recommendation mechanism $\bx \in \Delta(A)$, the feedback reveals the sign of
\[
\left\langle \bx(a_i,\cdot), \bw_i(a_i,a_i') \right\rangle,
\qquad
\bw_i(a_i,a_i') := \bu_i(a_i',\cdot) - \bu_i(a_i,\cdot),
\]
for every agent $i$ and action pair $a_i,a_i' \in A_i$. Let $\bw_i'(a_i,a_i') := \bv_i(a_i',\cdot) - \bv_i(a_i,\cdot)$.

\paragraph{Step 1: Identification of utility differences up to scale.} Since signs are invariant to scaling of $\bx$, and the moderator may query any
$\bx \in \Delta(A)$, the feedback we get from the quantal-response is equivalent to
\[
\operatorname{sign}\langle \by, \bw_i(a_i,a_i') \rangle
=
\operatorname{sign}\langle \by, \bw_i'(a_i,a_i') \rangle
\quad
\forall \by \in \mathbb{R}^{d_i}_+.
\]
Because the game has no weakly dominated actions, each $\bw_i(a_i,a_i')$ has at least one strictly positive and one strictly negative
component. By \cref{lem:same_halfspace}, there exists $\lambda_{i,a_i,a_i'} > 0$
such that
\[
\bw_i'(a_i,a_i') = \lambda_{i,a_i,a_i'} \bw_i(a_i,a_i').
\]

\paragraph{Step 2: Uniform scaling across actions.}
We now show that $\lambda_{i,a_i,a_i'}$ must be uniform across all action pairs for a given agent. Fix agent $i$ and three actions $a,b,c \in A_i$ (the case $|A_i|=2$ is immediate). The utility differences of game $G'$ must satisfy the triangular identity 
\[
\bw'_i(a,c) = \bw'_i(a,b) + \bw'_i(b,c),
\]
hence,
\[
\lambda_{i,a,c}\bw_i(a,c)
=
\lambda_{i,a,b}\bw_i(a,b) + \lambda_{i,b,c}\bw_i(b,c).
\]
Rewriting in terms of utilities yields
\[
(\lambda_{i,a,c} - \lambda_{i,a,b}) \big(\bu_i(a, \cdot) - \bu_i(c, \cdot) \big)   
+ 
(\lambda_{i,a,b} - \lambda_{i,b,c}) \big(\bu_i(b, \cdot) - \bu_i(c, \cdot) \big) 
= 
0.
\]
By genericity, the vectors $\bu_i(a, \cdot), \bu_i(b, \cdot)$ and $\bu_i(c, \cdot)$ are not collinear, thus the vectors $\big(\bu_i(a, \cdot) - \bu_i(c, \cdot) \big)$ and $\big(\bu_i(b, \cdot) - \bu_i(c, \cdot) \big)$ are linearly independent implying that the coefficients must vanish, hence, $\lambda_{i,a,c} =  \lambda_{i,a,b} = \lambda_{i,b,c}$. Since the triple $(a,b,c)$ is arbitrary, we conclude that there exists an agent specific scalar $\lambda_i > 0$ such that
\[
\bv_i(a_i',\cdot)-\bv_i(a_i,\cdot)
=
\lambda_i\big(\bu_i(a_i',\cdot)-\bu_i(a_i,\cdot)\big)
\quad
\forall a_i,a_i' \in A_i.
\]

\paragraph{Step 3: Reconstruction up to equivalence.}
Fix an arbitrary reference action $\bar a_i \in A_i$ and define
$\boldsymbol{t}_i := \bv_i(\bar a_i,\cdot)-\lambda_i\bu_i(\bar a_i,\cdot)$.
Then for all $a_i \in A_i$,
\[
\bv_i(a_i,\cdot) = \lambda_i\bu_i(a_i,\cdot) + \boldsymbol{t}_i.
\]
Thus $G'$ is equivalent to $G$ in the sense of \cref{def:game_eq}, so $G$ is
learnable from quantal-response feedback.

\end{proof}

\subsection{Learnability from the best-response feedback}
\label{sec:learnability_br}

In this section, we also consider the class of games with no weakly dominated actions and we prove negative results on learnability from the best-response feedback. We prove this through a concrete counterexample showing two games that are not equivalent yet remain indistinguishable under best-response feedback.

\begin{theorem}
\label{thm:learnability_br}
    In the class of generic games with {no weakly dominated actions}, there exists games that are not learnable from the best-response feedback. 
\end{theorem}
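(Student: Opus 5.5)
To prove \cref{thm:learnability_br}, I would build an explicit pair of games. The idea is that best-response feedback only reveals the \emph{upper envelope} of the linear functions $\by \mapsto \langle \by, \bu_i(a_i',\cdot)\rangle$ on $\mathbb{R}^{d_i}_+$. An action that is never a maximizer can then be moved without changing any feedback. Quantal-response feedback has no such blind spot, because it compares every action with the recommended one.

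\textbf{The game.} Take $n=2$ with $A_1=\{a,b,c\}$ and $A_2=\{e,f\}$, so $d_1=2$ and $d_2=3$. Agent $2$ gets the same generic utilities in both games, say $\bu_2(e,\cdot)=(2,0,1.3)$ and $\bu_2(f,\cdot)=(0,2.1,0.7)$ indexed by $(a,b,c)$. The two rows cross, so neither action of agent $2$ is weakly dominated. Since agent $2$'s payoffs do not depend on agent $1$'s payoffs, its best-response sets coincide in $G$ and $G'$ for every $\bx$.

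For agent $1$ in $G$, indexing by $(e,f)$, set
\[
\bu_1(a,\cdot)=(3,0),\qquad \bu_1(b,\cdot)=(0,3.1),\qquad \bu_1(c,\cdot)=(1,1.2).
\]
In $G'$ keep $\bv_1(a,\cdot)=\bu_1(a,\cdot)$ and $\bv_1(b,\cdot)=\bu_1(b,\cdot)$, and replace $c$ by $\bv_1(c,\cdot)=(1.2,1)$. Every pair of rows crosses in each game, so no action is weakly dominated by a pure action. The entries can be perturbed slightly to make both games generic (no collinear triples or other ties).

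\textbf{Indistinguishability.} Fix $\bx$ and a recommendation $a_1$, and write $\by=\bx(a_1,\cdot)\in\mathbb{R}^2_+$.
\begin{itemize}
\item If $\by=0$, every action attains the maximum value $0$ in both games, so both best-response sets equal $A_1$.
\item If $\by\neq 0$, then $\max(3y_1,3.1y_2)\ge 1.5(y_1+y_2)$. Both $\langle \by,(1,1.2)\rangle$ and $\langle \by,(1.2,1)\rangle$ are at most $1.2(y_1+y_2)<1.5(y_1+y_2)$. So $c$ is never in the argmax in either game, and the argmax over $\{a,b\}$ uses identical vectors in $G$ and $G'$.
\end{itemize}
Hence the best-response sets agree for every $\bx$, every agent, and every recommended action, and the two games are indistinguishable under best-response feedback.

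\textbf{Non-equivalence.} Suppose $\bv_1(\cdot,\cdot)=\lambda\bu_1(\cdot,\cdot)+\boldsymbol t$ for some $\lambda>0$.
\begin{itemize}
\item From action $a$: $(3,0)=\lambda(3,0)+\boldsymbol t$, so $t_2=0$ and $t_1=3-3\lambda$.
\item From action $b$: $(0,3.1)=\lambda(0,3.1)+\boldsymbol t$, so $t_1=0$ and $\lambda=1$.
\item Then $\boldsymbol t=0$, which forces $\bv_1(c,\cdot)=\bu_1(c,\cdot)$, a contradiction.
\end{itemize}
So $G$ and $G'$ are not equivalent in the sense of \cref{def:game_eq}, and $G$ is not learnable from best-response feedback. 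For contrast, quantal-response feedback separates them: with recommendation $c$, the sign of $\langle \by,\bw_1(c,a)\rangle$ differs between the two games for suitable $\by$, consistent with \cref{thm:learnability_qr}.

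\textbf{Main obstacle.} The delicate step is reconciling the example with the hypotheses: "no weakly dominated actions" must hold, and genericity (as used in the proof of \cref{thm:learnability_qr}) must hold after perturbation. The construction works because $c$ is dominated only by a \emph{mixture} of $a$ and $b$, which the definition permits. If the intended notion of "no weakly dominated" were stronger, I would instead move $c$ only within the region where it is strictly beaten. Then I would check directly that no tie in the argmax is created or destroyed on the boundary rays of $\mathbb{R}^2_+$.
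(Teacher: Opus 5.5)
Your construction is correct under the paper's literal definition of weak dominance (domination by another \emph{pure} action), and it takes a genuinely different route from the paper's.

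The paper keeps all four of player~1's actions on the upper envelope. The vectors $(0,8),(3,6.5),(5,4.5),(8,0)$ and $(0,8),(2,7),(6,3),(8,0)$ have the same edge normals $(1,2),(1,1),(3,2)$: the two interior vertices slide along the lines of their outer edges. So every action keeps the same full-dimensional best-response cone in $\mathbb{R}^2_+$. You instead use an action that is never a best response and move it freely below the envelope.

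Your check is simpler (one inequality), and it exposes a looseness in the paper. Your $c$ is strictly dominated by the mixture $\tfrac12 a+\tfrac12 b=(1.5,1.55)$ but by no pure action. This contradicts the paper's remark that having no weakly dominated actions guarantees every action is a best response to some $\bx$. The paper relies on that property again in \cref{thm:ubr_characterization}, to make every action an exposed vertex via $\operatorname{int}(C_i)$.

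What the paper's example buys is robustness. Non-identifiability persists even when every action is revealed as a best response, because best-response feedback sees only the restricted normal fan, not vertex positions. That is exactly what \cref{thm:ubr_characterization} goes on to characterize.

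Under that stronger reading, your main example is excluded. Your fallback of moving only $c$ also cannot work. Suppose $c$ sits between $a$ and $b$ on the envelope with edge normals $n_1,n_2$, and $a,b$ are kept fixed. Then the tie conditions $\langle n_1,\bu_1(b,\cdot)-\bv_1(c,\cdot)\rangle=0$ and $\langle n_2,\bu_1(a,\cdot)-\bv_1(c,\cdot)\rangle=0$ force $\bv_1(c,\cdot)=\bu_1(c,\cdot)$. Instead, you would have to slide a vertex whose cone meets the boundary of $\mathbb{R}^2_+$ along the line of its inner edge. For example, compare $(3,0),(2,2),(0,3)$ with $(2.5,1),(2,2),(0,3)$; this is the paper's normal-fan-preserving mechanism.
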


\begin{proof} A game is \textit{not} learnable form the best-response feedback if there exists another,
non-equivalent game that generates identical best-response feedback for all recommendation mechanisms. We prove the theorem by explicit construction of two such games.

Consider two-player games $G(\{A_i\}_{i=1}^2, \{u_i\}_{i=1}^2)$ and $G'(\{A_i\}_{i=1}^2, \{v_i\}_{i=1}^2)$ with action sets \\
$A_1 = \{a^1, a^2, a^3, a^4\}$ and $A_2 = \{b^1, b^2\}$. The utilities of player~1 in the two games are given by
\begin{align*}
\ u_1= \begin{array}{c|cccc}
 & a^1 & a^2 & a^3 & a^4 \\
\hline
b^1 & 0 & 3 & 5 & 8 \\
b^2 & 8 & 6.5 & 4.5 & 0
\end{array}, 
\quad v_1= \begin{array}{c|cccc}
 & a^1 & a^2 & a^3 & a^4 \\
\hline
b^1 & 0 & 2 & 6 & 8 \\
b^2 & 8 & 7 & 3 & 0
\end{array}. 
\end{align*}
The utilities of player~2 are identical in both games.

First, both games have no weakly dominated actions. Indeed, for every distinct
pair $a_1 \neq a_1' \in A_1$, the utility difference vectors
$\bw_1(a_1,a_1') = \bu_1(a_1',\cdot)-\bu_1(a_1,\cdot)$ (and analogously for $v_1$)
contain both strictly positive and strictly negative components, ruling out weak
dominance. 
Second, the two games are not equivalent in the sense of \cref{def:game_eq}.
In particular, there is no player-specific positive affine transformation that
maps $\bu_1(\cdot,\cdot)$ to $\bv_1(\cdot,\cdot)$; see
\cref{proof:learnability_br} for a detailed verification.
Finally, the two games are indistinguishable under best-response feedback.
For player~1 and each action $a_1 \in A_1$, define the best-response region
\[
\mathcal{R}(a_1)
=
\left\{
\bx \in \mathbb{R}^2
\;\middle|\;
\langle \bx,\bu_1(a_1,\cdot) \rangle
\ge
\langle \bx,\bu_1(a_1',\cdot) \rangle
\ \forall a_1' \in A_1
\right\},
\]
and analogously for $v_1$. A direct comparison shows that the regions $\mathcal{R}(a_1)$ coincide for the two games for every $a_1 \in A_1$; see
\cref{proof:learnability_br} for a sketch of the best-response regions. Thus, for every recommendation mechanism, the induced best responses are identical
in the two games.

\end{proof}

\subsubsection{Characterization of the indistinguishability set under the best-response feedback}
As established above, the set of games indistinguishable under the best-response feedback can be strictly larger than the set of equivalent games characterized in \cref{def:game_eq}. For a given game $G(\{A_i\}_{i=1}^n, \{u_i\}_{i=1}^n)$, we provide a complete characterization of the set of games with no weakly dominated actions that are indistinguishable from $G$ under best-response feedback; we denote this set by $\mathcal{U}^{BR}(G)$. This characterization relies on the polyhedral structure induced by the best-response regions, which can be naturally described through normal fans of utility polytopes.
%
%

For a given agent $i$, consider the polytope $\mathcal P_i=\operatorname{conv}(\{\bu_i(a_i,\cdot)\}_{a_i\in A_i})$ $\subset \mathbb{R}^{d_i}$, where $\operatorname{conv}(\cdot)$ is the convex hull. For any face $F$ of $\mathcal P_i$, define its normal cone as $\mathcal N_{\mathcal P_i}(F)=\{y\in\mathbb R^{d_i}\mid \langle y,x\rangle\ge \langle y,x'\rangle \ \forall x\in F,\ \forall x'\in\mathcal P_i\}$. The normal fan of $\mathcal P_i$ is then $\mathcal N(\mathcal P_i)=\{\mathcal N_{\mathcal P_i}(F)\mid F\text{ is a face of }\mathcal P_i\}$. 
\edit{Further, let $C_i = \mathbb{R}^{d_i}_+$ be the full dimensional cone representing the positive orthant in $\mathbb{R}^{d_i}$}. In \cref{lem:Zero_char}, we give an initial characterization of the set $\mathcal{U}^{BR}$ in terms of the induced normal fan. 

\edit{
\begin{remark}
    In our setting, best-response feedback is observed only for recommendations $\bx \in \Delta(A)$. However, because the underlying geometry is invariant to positive rescaling of $\bx$, we may equivalently carry out the analysis over the positive orthant $\mathbb{R}^M_{+}$. Accordingly, for each agent $i$, we work with the cone $C_i := \mathbb{R}^{d_i}_{+}$. More generally, our characterization extends verbatim when $\bx$ is restricted to any other full-dimensional cone in $\mathbb{R}^{d_i}$.

\end{remark}
}

\begin{lemma}\label{lem:Zero_char}
Let $G(\{A_i\}_{i=1}^n,\{u_i\}_{i=1}^n)$ and $G'(\{A_i\}_{i=1}^n,\{v_i\}_{i=1}^n)$ be two generic games, and for each agent $i$ let $\mathcal P_i=\operatorname{conv}(\{\bu_i(a_i,\cdot)\}_{a_i\in A_i})$ and $\mathcal P'_i=\operatorname{conv}(\{\bv_i(a_i,\cdot)\}_{a_i\in A_i})$. We denote by 
\[
\mathcal N(\mathcal P_i)\mid_{C_i} \coloneqq \{\mathcal N_{\mathcal P_i}(F) \cap C_i\mid F\text{ is a face of }\mathcal P_i\}
\] 
where $\mathcal N_{\mathcal P_i}(F)$ is the normal cone  {of} $F$.  
Then the two games are indistinguishable under best-response feedback if and only if  
\[
\mathcal N(\mathcal P_i)\mid_{C_i}=\mathcal N(\mathcal P'_i)\mid_{C_i} \quad \forall i\in[n].
\]
\end{lemma}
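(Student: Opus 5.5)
The plan is to reduce best-response feedback for agent $i$ to a pointwise statement about normal cones, one point $\by \in C_i$ at a time. First, the queries available to the moderator sweep out all of $C_i$. For any fixed recommended action $a_i$ and any nonzero $\by\in C_i$, define $\bx$ by $\bx(a_i,\ba_{-i}) = \by(\ba_{-i})/\|\by\|_1$ and set $\bx$ to zero elsewhere. This $\bx$ lies in $\Delta(A)$ and satisfies $\bx(a_i,\cdot) \propto \by$. Since argmax sets are invariant under positive scaling, indistinguishability under best-response feedback is equivalent to the following condition for every agent $i$ and every $\by\in C_i\setminus\{0\}$:
\[
\arg\max_{a_i'} \langle \by, \bu_i(a_i',\cdot)\rangle = \arg\max_{a_i'} \langle \by, \bv_i(a_i',\cdot)\rangle .
\]
(At $\by=0$ both sides equal $A_i$.) The problem therefore decouples across agents, and I fix one $i$.

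Second, I will express the argmax through the normal fan. For $\by \in C_i$, let $F(\by)$ be the face of $\mathcal P_i$ maximizing $\langle \by,\cdot\rangle$. A face $G$ satisfies $\by\in\mathcal N_{\mathcal P_i}(G)$ if and only if $G\subseteq F(\by)$. Here I use genericity as follows: the points $\bu_i(a_i,\cdot)$ are distinct and in general position. Combined with the absence of dominated actions, I take this to mean that every action vector is a vertex of $\mathcal P_i$ and that the actions lying on any face are exactly its vertices. I expect this reading of ``generic'' to be the main point needing care, and I would state it explicitly. Under it,
\[
\arg\max_{a_i'}\langle \by,\bu_i(a_i',\cdot)\rangle=\{a_i' : \by\in \mathcal N_{\mathcal P_i}(\bu_i(a_i',\cdot))\cap C_i\}.
\]
Hence indistinguishability is equivalent to
\[
\mathcal N_{\mathcal P_i}(\bu_i(a,\cdot))\cap C_i=\mathcal N_{\mathcal P'_i}(\bv_i(a,\cdot))\cap C_i \qquad \text{for every } a\in A_i .
\]

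Third, I pass from vertex cones to the whole restricted fan. Every normal cone of a face is the intersection of the normal cones of its vertices, $\mathcal N_{\mathcal P_i}(F)=\bigcap_{a:\,\bu_i(a,\cdot)\in F}\mathcal N_{\mathcal P_i}(\bu_i(a,\cdot))$. Intersecting with $C_i$ commutes with this, so equality of the restricted vertex cones (with the same action labels) yields $\mathcal N(\mathcal P_i)\mid_{C_i}=\mathcal N(\mathcal P'_i)\mid_{C_i}$, with faces matched through their action labels. Conversely, if the restricted fans agree, then their maximal cells agree. The maximal cells are the restricted vertex cones, and the parts of them lying in the interior of $C_i$ are full-dimensional. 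They are matched to actions by the labeling, and each is the closure of the set of $\by$ at which that action is the unique best response. This gives back the vertex-cone equality and hence indistinguishability.

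The main obstacle is this labeling issue in the converse direction. Equality of fans as unlabeled collections of cones must be read so that a cone attached to action $a$ in $G$ corresponds to the cone of the same action in $G'$. I would make this identification explicit. I would also check the degenerate cells. One kind is a vertex cone that meets $C_i$ only in a lower-dimensional set, where the action is never the unique best response inside $C_i$. The other is a cone meeting $C_i$ only at $0$. In both cases the pointwise argmax equality still pins down the cells, because both restricted cones are closed and one can argue by continuity from the interior of $C_i$.
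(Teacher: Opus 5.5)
Your proposal is correct and follows essentially the same route as the paper. Both reduce best-response feedback to pointwise argmax equality on $C_i$, identify $BR_i(\by)$ with the set of actions $a$ whose restricted vertex cone $\mathcal N_{\mathcal P_i}(\bu_i(a,\cdot))\cap C_i$ contains $\by$, and obtain every face cone as an intersection of vertex cones. Your vertex-labeled reading of the fan equality is the paper's ``labeled setting'' convention and makes the converse immediate, so you can drop the maximal-cell and continuity argument (which could not by itself pin down lower-dimensional vertex cells). In the forward direction, replace ``faces matched through their action labels'' (which can fail when some $\mathcal N_{\mathcal P_i}(F)\cap C_i$ is degenerate, e.g.\ $\{0\}$) by the fact that $\bigcap_{a\in S}\mathcal N_{\mathcal P'_i}(\bv_i(a,\cdot))$ is the normal cone of the smallest face of $\mathcal P'_i$ containing $\{\bv_i(a,\cdot)\}_{a\in S}$; the paper also leaves this step implicit.
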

\begin{proof}
    The proof is in \cref{proof:Zero_char}.
\end{proof}

Unlike a global normal fan  {of $\mathcal{P}_i$}, its restriction to $C_i$ leaves considerable freedom in the transformations allowed for the normal fan of the utility polytopes outside of $C_i$. Instead, we aim to reduce the characterization to comparing two \textit{global} normal fans of equivalent polyhedra. To this end, given a polyhedral cone $C_i$, let $C_{i}^\circ$ denote its polar cone and define the  {\textit{polarized polyhedron}} $\tilde{\mathcal P}_i := \mathcal P_i + C_{i}^\circ$  where $+$ denotes the Minkowski sum. Given two sets $\mathcal{C, K} \subseteq \mathbb{R}^d$, their Minkowski sum is defined as $\mathcal{C+K}  = \{x + y : x \in \mathcal{C},\, y \in \mathcal{K}\}$. 
 {\cref{prop:global_after_restriction} below uses this polyhedral construction to generalize the characterization.} 
\begin{proposition}
\label{prop:global_after_restriction}
Let $\mathcal P,\mathcal Q\subset\mathbb R^d$ be two polytopes and let $C\subset\mathbb R^d$ be a closed full-dimensional polyhedral cone. Suppose that $\mathcal N(\mathcal P)\mid_C=\mathcal N(\mathcal Q)\mid_C$.
Define the polarized polyhedra $\tilde{\mathcal P}:=\mathcal P+C^\circ$ and $\tilde{\mathcal Q}:=\mathcal Q+C^\circ$.
Then $\tilde{\mathcal P}$ and $\tilde{\mathcal Q}$ are normally equivalent, i.e., $\mathcal N(\tilde{\mathcal P})=\mathcal N(\tilde{\mathcal Q})$.
Conversely, if $\mathcal N(\tilde{\mathcal P})=\mathcal N(\tilde{\mathcal Q})$, then $\mathcal N(\mathcal P)\mid_C=\mathcal N(\mathcal Q)\mid_C$.
\end{proposition}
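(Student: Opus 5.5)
The plan is to compute the support function of the polarized polyhedron and read both fans off it. For $y\in\mathbb R^d$,
\[
h_{\tilde{\mathcal P}}(y)=\sup_{x\in\tilde{\mathcal P}}\langle y,x\rangle = h_{\mathcal P}(y)+h_{C^\circ}(y),
\]
and $h_{C^\circ}(y)=0$ for $y\in C^{\circ\circ}=C$, while $h_{C^\circ}(y)=+\infty$ otherwise. This uses that $C$ is a closed convex cone. Hence the domain of the normal fan of $\tilde{\mathcal P}$, i.e.\ the set of $y$ for which some face of $\tilde{\mathcal P}$ is maximized, is exactly $C$.

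For $y\in C$ the maximizing face splits as
\[
\operatorname{face}_y(\tilde{\mathcal P})=\operatorname{face}_y(\mathcal P)+(C^\circ\cap y^\perp).
\]
The normal cones of the faces $C^\circ\cap y^\perp$ of $C^\circ$ are the faces of $C$, by polar duality of polyhedral cones. This gives the first key step: a description of $\mathcal N(\tilde{\mathcal P})$ as the common refinement
\[
\mathcal N(\tilde{\mathcal P})=\{\,\mathcal N_{\mathcal P}(F)\cap G \;:\; F \text{ a face of } \mathcal P,\ G \text{ a face of } C,\ \mathcal N_{\mathcal P}(F)\cap G\neq\emptyset\,\}.
\]
I would verify this by the standard fact that the normal fan of a Minkowski sum is the common refinement of the two normal fans. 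Here that fact is applied to the polytope $\mathcal P$ and the polyhedral cone $C^\circ$, whose normal fan is the face fan of $C$ supported on $C$.

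The forward direction then follows at once. Since $G\subseteq C$, we have $\mathcal N_{\mathcal P}(F)\cap G=(\mathcal N_{\mathcal P}(F)\cap C)\cap G$. So $\mathcal N(\tilde{\mathcal P})$ is determined solely by the family $\mathcal N(\mathcal P)\mid_C$ together with the faces of $C$. Equal restricted fans therefore give equal refinements, i.e.\ $\mathcal N(\tilde{\mathcal P})=\mathcal N(\tilde{\mathcal Q})$.

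For the converse I would show that each restricted cone $\mathcal N_{\mathcal P}(F)\cap C$ is itself a cone of $\mathcal N(\tilde{\mathcal P})$, and that these cones can be recognized intrinsically. For $y\in C$ we have $\max_{\tilde{\mathcal P}}\langle y,\cdot\rangle=\max_{\mathcal P}\langle y,\cdot\rangle$, and $\mathcal P\subseteq\tilde{\mathcal P}$. So $y$ is maximized on $F$ over $\mathcal P$ if and only if it is maximized on $F$ over $\tilde{\mathcal P}$. This gives
\[
\mathcal N_{\mathcal P}(F)\cap C=\mathcal N_{\tilde{\mathcal P}}(\hat F),
\]
where $\hat F$ is the smallest face of $\tilde{\mathcal P}$ containing $F$ (the normal cone of a set equals that of the smallest face containing it).

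I would then characterize $\mathcal N(\mathcal P)\mid_C$ purely in terms of $\mathcal N(\tilde{\mathcal P})$. The candidate description is
\[
\mathcal N(\mathcal P)\mid_C=\Big\{\textstyle\bigcap_{v\in S}\mathcal N_{\tilde{\mathcal P}}(v)\;:\; S\subseteq \operatorname{vert}(\tilde{\mathcal P}),\ S\neq\emptyset\Big\}\cup\{\text{degenerate cones}\}.
\]
This uses three facts: vertices of $\tilde{\mathcal P}$ are vertices of $\mathcal P$; the maximal cones of $\mathcal N(\tilde{\mathcal P})$ are the vertex cones; and normal equivalence induces a bijection of vertices preserving these cones. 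Normal equivalence of $\tilde{\mathcal P},\tilde{\mathcal Q}$ then transports every such intersection, which yields $\mathcal N(\mathcal P)\mid_C=\mathcal N(\mathcal Q)\mid_C$.

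The main obstacle is this converse bookkeeping. Vertices of $\mathcal P$ that are absorbed by $C^\circ$, i.e.\ not vertices of $\tilde{\mathcal P}$, have restricted normal cones $\mathcal N_{\mathcal P}(v)\cap C$ with empty interior, lying in $\partial C$ or even reducing to $\{0\}$. These must still be matched between $\mathcal P$ and $\mathcal Q$. I would handle them first by showing that each such cone equals $\mathcal N_{\tilde{\mathcal P}}(\hat F)$ for the smallest face $\hat F\ni v$ of $\tilde{\mathcal P}$, hence is a cone of $\mathcal N(\tilde{\mathcal P})$. Then I would argue that the map $F\mapsto\hat F$ is compatible with the face correspondence induced by normal equivalence. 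If that fails, the fallback is to rely on the identity $\mathcal N_{\mathcal P}(F)\cap C=\mathcal N_{\tilde{\mathcal P}}(\hat F)$ together with the inclusion-closure of both families, so that the two restricted fans coincide as sets of cones.
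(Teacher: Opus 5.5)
\textbf{Forward direction.} This part is correct and is essentially the paper's argument in set form. The Minkowski-sum identity for normal cones, together with the fact that every cone of $\mathcal N(C^\circ)$ is a face of $C$, gives
$\mathcal N(\tilde{\mathcal P})=\{K\cap G:\ K\in\mathcal N(\mathcal P)\mid_C,\ G \text{ a face of } C\}$,
which depends only on $\mathcal N(\mathcal P)\mid_C$. This is slightly cleaner than the paper's pointwise version. That version tacitly uses that $N_{\mathcal P}(\operatorname{face}_{\mathcal P}(y))\cap C$ is the inclusion-minimal member of $\mathcal N(\mathcal P)\mid_C$ containing $y$.

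\textbf{Converse.} This is where the gap lies, and it cannot be closed. With $\mathcal N(\mathcal P)\mid_C$ read literally as $\{N_{\mathcal P}(F)\cap C\}$, the family is not determined by $\tilde{\mathcal P}$. Take $d=2$, $C=\mathbb R^2_+$, $\mathcal P=\operatorname{conv}\{(0,0),(-1,0),(0,-1)\}$ and $\mathcal Q=\{(0,0)\}$.
\begin{itemize}
\item Since $0\in\mathcal P\subseteq C^\circ$, we get $\tilde{\mathcal P}=\tilde{\mathcal Q}=\mathbb R^2_-$.
\item The edge $F=[(0,0),(0,-1)]$ has $N_{\mathcal P}(F)\cap C=\operatorname{cone}(e_1)$.
\item But $\mathcal N(\mathcal Q)\mid_C=\{C\}$.
\end{itemize}
This is exactly your ``absorbed vertices'' obstacle.

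Your identity $N_{\mathcal P}(F)\cap C=N_{\tilde{\mathcal P}}(\hat F)$ is correct and shows $\mathcal N(\mathcal P)\mid_C\subseteq\mathcal N(\tilde{\mathcal P})$. However, which faces of $\tilde{\mathcal P}$ occur as $\hat F$ depends on $\mathcal P$, not on $\tilde{\mathcal P}$. In the example, all four faces of $\mathbb R^2_-$ occur for $\mathcal P$, but only the apex occurs for $\mathcal Q$. So no compatibility argument with the face correspondence can work. Your fallback also fails, because the restricted families are not closed under taking faces ($\{C\}$ is not).

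The paper's converse has the same hole. It proves agreement only for $y\in\operatorname{int}(C)$ and then asserts that the restricted family is a fan determined by its maximal cones. That is false here: both families have the single maximal cone $C$. Its vertex-level variant also wrongly claims that every vertex normal cone meets $\operatorname{int}(C)$; the vertex $(-1,0)$ above violates this.

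What is true, and what you can actually prove, is one of the following corrected statements:
\begin{itemize}
\item[(a)] The members of $\mathcal N(\mathcal P)\mid_C$ meeting $\operatorname{int}(C)$ are exactly the members of $\mathcal N(\tilde{\mathcal P})$ meeting $\operatorname{int}(C)$.
\item[(b)] Replace $\mathcal N(\mathcal P)\mid_C$ by its closure under faces. That closure equals $\mathcal N(\tilde{\mathcal P})$, since each $N_{\mathcal P}(F)\cap G$ is a face of $N_{\mathcal P}(F)\cap C$ and every face of the latter has this form. Both directions then follow immediately from your refinement formula.
\end{itemize}
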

\begin{proof}
    The proof is in \cref{proof:global_after_restriction}.
\end{proof}

We defer the explicit polyhedral description of normal-equivalence, expressed as offset perturbations in a fixed $\mathcal H$-representation, to \cref{proof:h_rep_eq}. 

We now have all the ingredients needed to give a complete geometric characterization of the best-response indistinguishability set $\mathcal U^{BR}$.
Under our representation, a game is described by a collection of utility polytopes
\[
\mathcal P_i=\operatorname{conv}\bigl(\{u_i(a_i,\cdot)\}_{a_i\in A_i}\bigr), \qquad i\in[n].
\]
Under the standing assumption that there are no dominated strategies, this representation is essentially unique at the level of labeled actions. The non-dominance assumption ensures that for each agent $i$, every action $a_i$ corresponds to an exposed extreme point of $\mathcal P_i$, and conversely each vertex of $\mathcal P_i$ represents a unique action. However, relabeling the actions permutes the vertices of the corresponding polytope. While this changes the labeled description of the game, it does not affect the underlying (unlabeled) polytope. Throughout our analysis, we work in the labeled setting. With this, we can now state the geometric characterization of $\mathcal U^{BR}$ given in \cref{thm:ubr_characterization}.
The proof relies on \cref{lem:ext_points_polarization}.


\begin{theorem}[Geometric characterization of $\mathcal U^{BR}$]\label{thm:ubr_characterization}
Consider the game $G(\{A_i\}_{i=1}^n, \{u_i\}_{i=1}^n)$. For each agent $i\in[n]$, let $\mathcal P_i\subset\mathbb R^{d_i}$ be the associated utility polytope and define the polarized polyhedron
\[
\tilde{\mathcal P}_i := \mathcal P_i + C_i^\circ .
\]
Then the best-response indistinguishability set satisfies
\[
\mathcal U^{BR} (G) 
= 
\Bigl\{G'\bigl(\{A_i\}_{i=1}^n, \{\operatorname{\sigma}(\operatorname{ext}(Q_i))\}_{i=1}^n \big)\ \Big|\ 
\mathcal N(Q_i)=\mathcal N(\tilde{\mathcal P}_i)\ \ \forall i\in[n]\Bigr\},
\]
where $\operatorname{ext}(Q_i)$ denotes the set of extreme points of the polyhedron $Q_i$ and $\operatorname{sigma}(\operatorname{ext}(Q_i))$ labels the extreme point of Q to correspond to the labels of the extreme point of $\mathcal P$.
\end{theorem}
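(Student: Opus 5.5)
The proof is a double inclusion obtained by chaining \cref{lem:Zero_char} with \cref{prop:global_after_restriction}. The one new ingredient is \cref{lem:ext_points_polarization}, which I take to say the following: for a game with no weakly dominated actions, the extreme points of the polarized polyhedron $\tilde{\mathcal P}_i=\mathcal P_i+C_i^\circ$ are exactly the vectors $\bu_i(a_i,\cdot)$, $a_i\in A_i$. Moreover, the normal cone of the vertex $\bu_i(a_i,\cdot)$ in $\tilde{\mathcal P}_i$ equals $\mathcal N_{\mathcal P_i}(\bu_i(a_i,\cdot))\cap C_i$.

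This is plausible for two reasons. Since $C_i^\circ=-\mathbb R^{d_i}_+$, every direction $y$ that is bounded above on $\tilde{\mathcal P}_i$ lies in $C_i$. Non-dominance then ensures that no vertex of $\mathcal P_i$ is absorbed by the recession cone. With this lemma, the labeling map $\sigma$ is well defined: it sends the vertex of $Q_i$ whose normal cone equals that of $\bu_i(a_i,\cdot)$ in $\tilde{\mathcal P}_i$ to the label $a_i$.

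\textbf{Inclusion $\subseteq$.} Let $G'\in\mathcal U^{BR}(G)$ have no weakly dominated actions. By \cref{lem:Zero_char}, $\mathcal N(\mathcal P_i)\mid_{C_i}=\mathcal N(\mathcal P_i')\mid_{C_i}$ for every $i$. By \cref{prop:global_after_restriction}, $\mathcal N(\tilde{\mathcal P}_i')=\mathcal N(\tilde{\mathcal P}_i)$, so we take $Q_i:=\tilde{\mathcal P}_i'$. By \cref{lem:ext_points_polarization} applied to $G'$, $\operatorname{ext}(Q_i)=\{\bv_i(a_i,\cdot)\}_{a_i}$. It remains to check that $\sigma$ assigns the correct labels. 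Indistinguishability gives equal best-response regions action by action, so the normal cone of $\bv_i(a_i,\cdot)$ in $\tilde{\mathcal P}_i'$ coincides with the normal cone of $\bu_i(a_i,\cdot)$ in $\tilde{\mathcal P}_i$. Hence $G'$ has the stated form.

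\textbf{Inclusion $\supseteq$.} Let $Q_i$ satisfy $\mathcal N(Q_i)=\mathcal N(\tilde{\mathcal P}_i)$. Normal equivalence implies that $Q_i$ has the same recession cone $C_i^\circ$ and a face lattice isomorphic to that of $\tilde{\mathcal P}_i$, with matching normal cones. Set $\bv_i(a_i,\cdot):=\sigma^{-1}(a_i)$ and $\mathcal P_i':=\operatorname{conv}(\operatorname{ext}Q_i)$. I then show two things.
\begin{enumerate}
\item $Q_i=\mathcal P_i'+C_i^\circ$. This is the Minkowski--Weyl decomposition for a pointed polyhedron.
\item $G'$ has no weakly dominated actions. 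Each vertex of $Q_i$ has a normal cone meeting the interior of $C_i$, because the corresponding normal cone of $\tilde{\mathcal P}_i$ does. So each $\bv_i(a_i,\cdot)$ is the unique maximizer over the other action vectors for some strictly positive $y$, which rules out weak dominance.
\end{enumerate}
Applying the converse part of \cref{prop:global_after_restriction} to $\mathcal P_i$ and $\mathcal P_i'$ gives $\mathcal N(\mathcal P_i)\mid_{C_i}=\mathcal N(\mathcal P_i')\mid_{C_i}$, and \cref{lem:Zero_char} then yields indistinguishability.

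\textbf{Main obstacle.} I expect the hardest step to be the label-preservation part: showing that the restricted fans agree as labeled objects, matching cone to action, and not merely as sets of cones. In particular, lower-dimensional restricted cones $\mathcal N_{\mathcal P_i}(F)\cap C_i$ on the boundary of $C_i$ must not create spurious identifications. My first attempt is to work only with the full-dimensional cones, which correspond to vertices. Their interiors are the regions where the best response is a single action, and these determine the whole fan because every restricted cone is a face of one of them. Genericity guarantees that these interiors are nonempty and meet $\operatorname{int}C_i$.
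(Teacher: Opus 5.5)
Your proposal matches the paper's proof essentially step for step. The forward inclusion chains \cref{lem:Zero_char}, \cref{prop:global_after_restriction} and \cref{lem:ext_points_polarization}(ii), and the reverse inclusion uses the Minkowski--Weyl decomposition $Q_i=\operatorname{conv}(\operatorname{ext}Q_i)+C_i^\circ$ together with the converse half of the proposition. Your explicit label matching via $N_{\tilde{\mathcal P}_i}(\bu_i(a_i,\cdot))=N_{\mathcal P_i}(\bu_i(a_i,\cdot))\cap C_i$, and your check that the constructed $G'$ has no weakly dominated actions, fill in details the paper leaves implicit.

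The one caveat comes from the paper rather than from you. Both arguments assume that having no (pure) weakly dominated actions forces every $\bu_i(a_i,\cdot)$ to be uniquely exposed by a strictly positive direction. That can fail unless domination by mixed strategies is also excluded: with vectors $(0,2),(2,0),(0.9,0.9)$ in $\mathbb R^2$, no pure action weakly dominates another, yet $(0.9,0.9)$ is never a best response.
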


\vspace{-0.2cm}
\begin{proof}
Let $\mathcal P_i$ and $\mathcal P_i'$ be two utility polytopes of agent $i$ in two different games. 
By Lemma~\ref{lem:Zero_char}, they are indistinguishable under best-response feedback if and only if 
$\mathcal N(\mathcal P_i)\mid_{C_i}=\mathcal N(\mathcal P_i')\mid_{C_i}$. 
By Proposition~\ref{prop:global_after_restriction}, this implies 
$\mathcal N(\tilde{\mathcal P}_i)=\mathcal N(\tilde{\mathcal P}_i')$, where 
$\tilde{\mathcal P}_i=\mathcal P_i+C_i^\circ$ and $\tilde{\mathcal P}_i'=\mathcal P_i'+C_i^\circ$, 
hence $\tilde{\mathcal P}_i$ and $\tilde{\mathcal P}_i'$ are normally equivalent polyhedra.

Under the non-dominance (genericity) assumption, for every vertex 
$v\in\operatorname{ext}(\mathcal P_i')$ there exists $x\in C_i\setminus\{0\}$ such that 
$\langle x,v\rangle>\langle x,w\rangle$ for all 
$w\in\operatorname{ext}(\mathcal P_i')\setminus\{v\}$. 
Since these inequalities are strict and $\operatorname{ext}(\mathcal P_i')$ is finite, they are preserved under a small perturbation of $x$ inside $C_i$, yielding 
$y\in\operatorname{int}(C_i)$ with $\operatorname{face}_{\mathcal P_i'}(y)=\{v\}$. 
Thus the hypothesis of Lemma~\ref{lem:ext_points_polarization}(ii) holds and 
$\operatorname{ext}(\tilde{\mathcal P}_i')=\operatorname{ext}(\mathcal P_i')$. 
Therefore $\mathcal U^{BR}$ is contained in the set of games with utilities represented by the polytopes  
$(\operatorname{conv}(\operatorname{ext}(Q_i)))_{i\in[n]}$ such that 
$\mathcal N(Q_i)=\mathcal N(\tilde{\mathcal P}_i)$ for all $i$.

For the reverse inclusion, let $Q_i$ be normally equivalent to 
$\tilde{\mathcal P}_i=\mathcal P_i+C_i^\circ$. 
By the Minkowski--Weyl theorem, 
$Q_i=\operatorname{conv}(\operatorname{ext}(Q_i))+C_i^\circ$. 
Setting $\mathcal P_i':=\operatorname{conv}(\operatorname{ext}(Q_i))$ gives 
$Q_i=\tilde{\mathcal P}_i'$, and 
$\mathcal N(\tilde{\mathcal P}_i')=\mathcal N(\tilde{\mathcal P}_i)$ implies 
$\mathcal N(\mathcal P_i')\mid_{C_i}=\mathcal N(\mathcal P_i)\mid_{C_i}$, hence the game represented by the polytopes 
$(\mathcal P_i')_{i\in[n]}$ is included in $\mathcal U^{BR}$.
\end{proof}

\section{Algorithm for learning agent utilities from quantal-response feedback}
\label{sec:learning_alg}
As discussed in \cref{thm:learnability_qr}, a game $G(\{A_i\}_{i=1}^n, \{u_i\}_{i=1}^n)$ is learnable under the quantal-response feedback as per \cref{def:learnability_qr} of learnability. This result can be translated into a constructive algorithm that learns the set of all games equivalent to the true game through a series of recommendations $\bx$. Recall that the set of equivalent games is characterized by the positive affine transformation $\lambda_i \bu_i(a_i, \cdot) + \boldsymbol{t}_i$, where $\lambda_i > 0$ is an agent-specific scaling and $\boldsymbol{t}_i \in \mathbb{R}^{d_i}$ is an agent-specific translation.

\edit{Our algorithm learns the utility gain vectors $\bw_i(a_i,a_i') = \bu_i(a_i',\cdot) - \bu_i(a_i,\cdot)$ up to scaling, from which utility vectors equivalent to $\bu_i(a_i,\cdot)$ can be recovered.} It proceeds in three stages: (i) learning the \emph{sign patterns} of $\bw_i(a_i,a_i')$ (Algorithm \ref{alg:sign_pattern}), (ii) learning $\bw_i(a_i, a_i')$ up to a scaling factor (Algorithm \ref{alg:binary_ratio} and \ref{alg:recover_w}), and (iii) recovering their relative scales by solving a sparse linear program.  

\paragraph{Learning sign pattern.}  The sign pattern for a vector $\bw_{i}(a_i, a_i')$ is characterized by the index sets 
\[
\mathcal{P}_{i}(a_i, a_i') = \{k \ | \ (\bw_{i}(a_i, a_i'))_k \geq 0 \}, \quad \mathcal{N}_{i}(a_i, a_i') = \{k \ | \ (\bw_{i}(a_i, a_i'))_k < 0 \}
\]
Fixing an agent $i$, we can learn the sign pattern of all the vectors $\bw_{i}(a_i, a_i')$ for all $a_i'\neq a_i\in A_i$ with $d_i = |A_{-i}|$ recommendations mechanisms $\{\bx^{(k)}\}_{k=1}^{d_i}$ using \cref{alg:sign_pattern}, see \cref{lem:sign pattern} and its proof in \cref{proof:sign pattern}. 

\begin{algorithm}[h]
\caption{Learning the sign patterns of $\bw_i(a_i,a_i')$ for $a_i \neq a_i' \in A_i$}
\label{alg:sign_pattern}
\KwIn{Agent $i$, action set $A_i$}
\KwOut{Sign pattern sets $\mathcal{P}_i(a_i,a_i')$ and $\mathcal{N}_i(a_i,a_i')$ for all $a_i \neq a_i' \in A_i$}
\For{$k = 1, \ldots, d_i$}{
    Select a distribution $\bx^{(k)}$ such that
    \[
    \bx^{(k)}(a_i,\cdot) = \frac{1}{m_{i}} \be_k \in \mathbb{R}^{d}
    \quad \forall a_i \in A_i
    \]
    Recommend $\bx^{(k)}$ and collect quantal-response feedback sets
    $QR_i(\bx^{(k)}, a_i)$ for all $a_i \in A_i$\;

    \For{$a_i \in A_i$}{
        \For{$a_i' \in A_i \setminus \{a_i\}$}{
            \eIf{$a_i' \in QR_i(\bx^{(k)}, a_i)$}{
                $\mathcal{P}_i(a_i,a_i') \gets \mathcal{P}_i(a_i,a_i') \cup \{k\}$\;
            }{
                $\mathcal{N}_i(a_i,a_i') \gets \mathcal{N}_i(a_i,a_i') \cup \{k\}$\;
            }
        }
    }
}
\end{algorithm}

\paragraph{Learning $\bw_i(a_i, a_i')$ up to a scaling factor.} Given the learned sign pattern of $\bw_i(a_i, a_i')$, we select and normalize one positive coordinate of $\bw_i(a_i, a_i')$, and leverage it as a pivot to learn the relative magnitude of all negative components in  
$\bw_i(a_i, a_i')$ by constructing a sequence of recommendations as described in Algorithm~\ref{alg:binary_ratio}. This construction of recommendation sequence can be viewed as a binary search process. Moreover, to recover the positive components, we fix one of the learned negative components as pivot and repeat the binary search process for all the positive components. This whole process is described in Algorithm~\ref{alg:recover_w}, which allows us to learn the relative magnitude of all components in $\bw_i(a_i, a_i')$.

\begin{algorithm}[h]
\caption{Binary search with quantal-response feedback}
\label{alg:binary_ratio}
\KwIn{indices $(p,j)$ with $p \in \mathcal P_i(a_i,a_i')$, $j \in \mathcal N_i(a_i,a_i')$}
\KwOut{Approximate ratio $\tau^\star = - (\bw_i(a_i,a_i'))_p / (\bw_i(a_i,a_i'))_j$}

$a \leftarrow 0$, $b \leftarrow C$\;

\While{$b - a > \epsilon$}{
    $\tau \leftarrow (a+b)/2$\;
    
    Construct $\tilde{\bx} \in \Delta(A)$ such that
    \[
        \tilde{\bx}(a_i,\cdot) \leftarrow \be_p + \tau \be_j
    \]
    
    Normalize $\bx \leftarrow \tilde{\bx}/\|\tilde{\bx}\|_1$\;
    
    
    \eIf{$a_i' \in QR_i(\bx,a_i)$ \footnotemark}{
        $a \leftarrow \tau$\;
    }{
        $b \leftarrow \tau$\;
    }
}
\Return $(a+b)/2$\;
\end{algorithm}
\footnotetext{To check whether $a_i' \in QR_i(\bx,a_i)$, the moderator recommends $a_i$ and samples deviations until $a_i'$ is observed (or enough samples are collected to rule it out with high confidence); see \cref{lem:sampling_complexity}}

\begin{algorithm}[h!]
\caption{Learning $\bw_i(a_i,a_i')$ up to scaling}
\label{alg:recover_w}
\DontPrintSemicolon
\KwIn{Agent $i$, actions $a_i \neq a_i'$, sign patterns $\mathcal{P}_i(a_i, a_i')$ and $\mathcal{N}_i(a_i, a_i')$}
\KwOut{Estimate $\hat{\bw}_i(a_i,a_i') \in \mathbb{R}^{d_i}$}
Select pivots $p \in \mathcal{P}_i(a_i, a_i')$ and $q \in \mathcal{N}_i(a_i, a_i')$\;
Set $(\hat{\bw}_i(a_i,a_i'))_p \leftarrow 1$\;
\For{$j \in \mathcal{N}_i(a_i, a_i')$}{
    $\hat{\tau}_j \leftarrow$ Algorithm~\ref{alg:binary_ratio}$(p, j)$\;
    $(\hat{\bw}_i(a_i,a_i'))_j \leftarrow -1 / \hat{\tau}_j$\;
}
\For{$k \in \mathcal{P}_i(a_i, a_i') \setminus p$}{
    $ \hat{\tau}_k \leftarrow$ Algorithm~\ref{alg:binary_ratio}$(k, q)$\;
    $(\hat{\bw}_i(a_i,a_i'))_k \leftarrow - \hat w_q\, \hat{\tau}_k$\;
}
\Return $\hat{\bw}_i(a_i,a_i')$\;
\end{algorithm}

In \cref{thm:complexity_learning} we characterize the complexity, in terms of the number of recommendation mechanisms $\bx$, of recovering all vectors $\bw_i{(a_i, a_i')}$ for every agent $i \in [n]$ and action pair $a_i \neq a_i' \in A_i$ up to scaling. \edit{In \cref{lem:sampling_complexity} we give a standard high probability bound on the number of samples needed from each recommendation mechanism (that is the distribution $\bx$ in \cref{alg:binary_ratio}) to verify if a deviation $a_i'$ is in the qunatal-response set $QR_{i}(\bx, a_i)$}.

\begin{theorem}
\label{thm:complexity_learning}
Let $G(\{A_i\}_{i=1}^n,\{u_i\}_{i=1}^n)$ be a generic normal-form game with no weakly dominated actions. 
Assume that all utility differences are uniformly bounded away from zero and infinity, i.e.,
\[
c \;\le\; \big|u_i(a_i',\ba_{-i}) - u_i(a_i,\ba_{-i})\big| \;\le\; C
\]
for some constants $0 < c < C$, for all agents $i \in [n]$, action pairs $a_i \neq a_i' \in A_i$, and profiles $\ba_{-i} \in A_{-i}$.
Then, for any $\epsilon > 0$, the utility difference vectors $\bw_i(a_i,a_i')$ can be learned up to a positive scaling, with additive error $O(\epsilon)$, using
\[
\edit{O\!\left(nm M \log(1/\epsilon)\right)}
\]
recommendation mechanisms by applying Algorithms~\ref{alg:sign_pattern} and~\ref{alg:recover_w}.
\end{theorem}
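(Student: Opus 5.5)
The proof splits into three parts: correctness of the sign-pattern stage, an error and iteration bound for a single binary search, and a count of recommendation mechanisms across all agents and action pairs.

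\textbf{Sign patterns (Algorithm~\ref{alg:sign_pattern}).} Fix agent $i$. Since $\bx^{(k)}(a_i,\cdot)=\frac{1}{m_i}\be_k$ for every $a_i$ simultaneously, a single mechanism gives
\[
\varphi_i(a_i,a_i',\bx^{(k)})=\tfrac{1}{m_i}(\bw_i(a_i,a_i'))_k .
\]
Hence $a_i'\in QR_i(\bx^{(k)},a_i)$ iff this coordinate is nonnegative, and the bound $|(\bw_i)_k|\ge c$ rules out ties. So $d_i$ mechanisms recover the sign patterns of all $\bw_i(a_i,a_i')$ for agent $i$ at once. The no-weak-dominance assumption guarantees that $\mathcal P_i(a_i,a_i')$ and $\mathcal N_i(a_i,a_i')$ are both nonempty, so pivots exist. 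Summed over agents this costs $\sum_i d_i\le nM$ mechanisms.

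\textbf{A single binary search (Algorithm~\ref{alg:binary_ratio}).} With $\tilde\bx(a_i,\cdot)=\be_p+\tau\be_j$ we have
\[
\varphi_i \propto w_p+\tau w_j ,
\]
which is nonnegative iff $\tau\le \tau^\star:=-w_p/w_j$. The test therefore sends $\tau$ to the correct side, and the interval $[a,b]$ always contains $\tau^\star$. The initial interval is valid because $c\le|w_p|,|w_j|\le C$ gives $\tau^\star\in[c/C,C/c]$. We start with $b$ equal to $C/c$, which is a constant; the algorithm's $b=C$ should be read this way. After $\lceil\log_2(C/(c\epsilon))\rceil=O(\log 1/\epsilon)$ halvings we get $|\hat\tau-\tau^\star|\le\epsilon$.

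\textbf{Error propagation (Algorithm~\ref{alg:recover_w}).} Normalize $w_p=1$.

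For negative coordinates, $\hat w_j=-1/\hat\tau_j$. The map $\tau\mapsto 1/\tau$ is Lipschitz with constant $(C/c)^2$ on $[c/C,C/c]$ (shrinking $\epsilon$ keeps $\hat\tau_j$ in a slightly smaller interval). This gives error $O(\epsilon)$.

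For positive coordinates, $\hat w_k=-\hat w_q\hat\tau_k$. The product error is at most $|\hat w_q||\hat\tau_k-\tau_k|+|\tau_k||\hat w_q-w_q|=O(\epsilon)$, since both factors are bounded by constants depending only on $c$ and $C$.

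\textbf{Counting.} Each pair $(a_i,a_i')$ needs $d_i-1$ binary searches of $O(\log 1/\epsilon)$ mechanisms each. There are at most $m_i(m_i-1)\le m\,m_i$ pairs, and $m_i d_i=M$. The total is
\[
\sum_i m\,m_i d_i\log(1/\epsilon)=O(nmM\log(1/\epsilon)),
\]
which dominates the $nM$ cost of the sign-pattern stage.

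One could cut the count by running the searches for different $a_i$ in parallel within one mechanism, since each $\bx(a_i,\cdot)$ block is set independently. This is not needed for the stated bound.

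\textbf{Main obstacle.} The step most likely to need care is justifying the membership test $a_i'\in QR_i(\bx,a_i)$ from sampled deviations, which must be decided correctly at every query. I would treat this as an oracle, with its per-mechanism sample cost supplied by \cref{lem:sampling_complexity}, and apply a union bound over all queries to get correctness with high probability. Since the theorem counts recommendation mechanisms rather than samples, it suffices to assume the oracle answers correctly.

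A secondary subtlety is when $\tau$ lands exactly on $\tau^\star$: there $\varphi_i=0$, so $a_i'$ is in the QR set but has exponentially small weight relative to other actions. This matters only for the sampling argument and not for the interval invariant.
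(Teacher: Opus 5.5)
Your proposal is correct and follows the paper's proof essentially step for step. Both arguments learn the sign patterns from the $d_i$ coordinate-vector mechanisms, bisect on the root $\tau^\star=-w_p/w_j$ using one positive and one negative pivot, bound the error of the positive coordinates with the same product rule, and count $d_i-1$ searches per action pair over $O(m_i^2)$ pairs to reach $O(nmM\log(1/\epsilon))$. Your extra care tightens details the paper glosses over without changing the argument: you bracket $\tau^\star\in[c/C,C/c]$, so the initial upper endpoint should be $C/c$ rather than the algorithm's $C$, and you make the Lipschitz bound for $\tau\mapsto 1/\tau$ explicit.
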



\begin{proof}
Fix an agent $i \in [n]$.

\paragraph{Step 1: Learning sign patterns.}
By algorithm \cref{alg:sign_pattern}, we can learn the sign pattern sets $\mathcal P_i(a_i,a_i')$ and $\mathcal N_i(a_i,a_i')$ for every action pair $a_i \neq a_i' \in A_i$ using exactly $d_i \coloneqq |A_{-i}|$ recommendation mechanisms (cf. \cref{lem:sign pattern}).

\paragraph{Step 2: Learning $\bw_i(a_i,a_i')$ for a fixed $a_i \neq a_i'$ up to scaling.}
Fix a pair $a_i \neq a_i'$ and denote
\[
\bw \coloneqq \bw_i(a_i,a_i') \in \mathbb{R}^{d_i}.
\]
Since the game has no weakly dominated actions, both $\mathcal P_i(a_i,a_i')$ and $\mathcal N_i(a_i,a_i')$ are nonempty. Fix pivot indices $p \in \mathcal P_i(a_i,a_i')$ and $q \in \mathcal N_i(a_i,a_i')$. We fix the scaling by targeting the normalized vector $\bw/w_p$ (equivalently, set the pivot to $\hat w_p:=1$). 

For any $j \in \mathcal N_i(a_i,a_i')$, querying the recommendation mechanism $\bx^{(\tau)}$ such that 
\[
\bx^{(\tau)}(a_i, \cdot) \;\coloneqq\; \frac{\be_p + \tau \be_j}{1+\tau},
\qquad \be_p, \be_j \in \mathbb{R}^{d_i}, \ \tau \ge 0, 
\]
and observing the qunatal-response feedback set $QR_{i}(\bx^{(\tau)}, a_i)$ reveals
\[
\operatorname{sign}\langle \bx^{(\tau)}(a_i, \cdot), \bw \rangle = \operatorname{sign}(w_p + \tau w_j).
\]
Since $w_p > 0$ and $w_j < 0$, $w_p + \tau w_j$ is a strictly decreasing function with a unique root $\tau^{\star}_j = -w_p/w_j$ which is well-defined since $|w_p|, |w_j| > c$. Therefore, using binary search on the value of $\tau^{\star}_j$ with sign feedback we recover an estimate $\hat{\tau}_j$ such that $|\hat\tau_j - \tau_j^\star| \le \epsilon$. We set
\[
\hat w_j \;\coloneqq\; -\frac{1}{\hat\tau_j},
\]
which approximates $w_j / w_p$ with additive error $O(\epsilon)$.

\edit{
For any $k \in \mathcal P_i(a_i,a_i')$, repeating the same root-finding procedure recovers an estimate $\hat{\tau}_k$ such that $|\hat{\tau}_k - \tau^{\star}_k| \leq \epsilon$, where $\tau^{\star}_k = -w_k/w_q$. Since the negative pivot $\hat w_q$ is only estimated up to $O(\epsilon)$ accuracy, we estimate $\hat{w}_k = -\hat{\tau}_k \hat{w}_q$. The error in estimating $\hat{w}_k$ propagates as 
\begin{align*}
    \big |\hat{w}_k - \frac{w_k}{w_p}\big| & \leq \big |-\hat{\tau}_k\hat{w}_q + \tau_k^{\star} \hat w_q - \tau_k^{\star} \hat w_q + \tau_k^{\star} \frac{w_q}{w_p} \big| \\
    & \leq 
    \hat{w}_q |-\hat{\tau}_k + \tau_k^{\star} \big | +  \tau_k^{\star}  \big | \hat w_q - \frac{w_q}{w_p} \big|\\
    & 
    \leq \hat{w}_q \epsilon + \tau_{k}^{\star} O(\epsilon) \\
    & \leq C \epsilon + \frac{C}{c} O(\epsilon) \\
    & = O(\epsilon).
\end{align*}
}

Finally, each coordinate of $\bw$ can be learned by a binary search procedure requiring $O(\log(1/\epsilon))$ recommendations mechanisms. 

\paragraph{Step 3: Total number of calls to \cref{alg:binary_ratio}.}
\edit{Each utility difference vector $\bw_i(a_i, a_i') \in \mathbb{R}^{d_i}$ is recovered up to scale using $d_i-1$ calls to \cref{alg:binary_ratio}. Each agent $i$ has $m_i(m_i - 1)/2$ such vectors, so the number of calls for agent $i$ is $O(d_i m_i(m_i-1)) = O(M(m_i-1))$, where $M= |A|$. Summing over $i \in [n]$, gives $O(M\sum_{i\in [n]}(m_i-1)) \leq O(nmM)$, where $m = \max_{i\in [n]} m_i$. So the total number of recommendation mechanisms required to learn (up to scale) all utility vectors $\bw_{i}(a_i, a_i')$ for all $i \in [n]$, and $a_i \neq a_i' \in A_i$ is $O\big(nmM \log(1/\epsilon)\big)$, as claimed. }

\end{proof}
\edit{
\begin{lemma}
\label{lem:sampling_complexity}
Consider the game $G(\{A_i\}_{i=1}^n,\{u_i\}_{i=1}^n)$ and assume that all utility differences are uniformly bounded away from zero and infinity, 
\(
c \;\le\; \big|u_i(a_i',\ba_{-i}) - u_i(a_i,\ba_{-i})\big| \;\le\; C 
\)
for some constants $0 < c < C$, for all agents $i \in [n]$, action pairs $a_i \neq a_i' \in A_i$, and profiles $\ba_{-i} \in A_{-i}$. Fix an agent $i$ and a pair of actions $a_i \neq a_i'$. Then for each recommendation mechanism $\bx$ constructed as in \cref{alg:binary_ratio} the moderator needs at most 
\(m_i e^{\beta C} \ln(1/\delta)\)
samples to verify if $a_i' \in QR_i(\bx, a_i)$ with probability at least $1-\delta$. 
\end{lemma}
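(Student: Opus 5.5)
The plan is a lower bound on the probability of observing $a_i'$ whenever it lies in the quantal-response set, followed by a geometric-trial argument. Fix $\bx$ as in \cref{alg:binary_ratio}, so $\bx(a_i,\cdot)$ is proportional to $\be_p+\tau\be_j$. Each sample consists of recommending $a_i$ under $\bx$ and observing the realized action. Under the quantal-response model \eqref{eq:quantal_prob}, the support of the response distribution is exactly $QR_i(\bx,a_i)$. Therefore, if $a_i'\notin QR_i(\bx,a_i)$, the action $a_i'$ is never observed. The test is: declare $a_i'\in QR_i(\bx,a_i)$ if and only if $a_i'$ appears among $N$ samples. This test has no false positives, so it suffices to bound the false-negative probability.

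First, suppose $a_i'\in QR_i(\bx,a_i)$, i.e.\ $\varphi_i(a_i,a_i',\bx)\ge 0$. Then
\[
P_{\bx}(a_i'\mid a_i)=\frac{\exp(\beta\varphi_i(a_i,a_i',\bx))}{\sum_{a''\in QR_i(\bx,a_i)}\exp(\beta\varphi_i(a_i,a'',\bx))}.
\]
The numerator is at least $1$. For the denominator, each term satisfies $\varphi_i(a_i,a'',\bx)=\langle \bx(a_i,\cdot),\bw_i(a_i,a'')\rangle\le \|\bx(a_i,\cdot)\|_1\max_k|(\bw_i(a_i,a''))_k|\le C$. This uses the fact that $\bx(a_i,\cdot)$ has nonnegative entries with total mass at most $1$ (after normalization) and the assumed bound $|(\bw_i)_k|\le C$. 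The support has at most $m_i$ elements, so the denominator is at most $m_i e^{\beta C}$. Hence $P_{\bx}(a_i'\mid a_i)\ge 1/(m_i e^{\beta C})=:q$.

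Second, the samples are independent draws. The probability of never observing $a_i'$ in $N$ samples is therefore at most $(1-q)^N\le e^{-qN}$. Choosing $N= m_i e^{\beta C}\ln(1/\delta)$ makes this at most $\delta$, which gives correctness with probability at least $1-\delta$.

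The main obstacle is making the normalization bookkeeping consistent. The recommendation $a_i$ appears only with marginal probability $\Pr^{\bx}(a_i)$, and $\varphi_i$ is defined with the unnormalized slice $\bx(a_i,\cdot)$. The argument needs (a) that the moderator can obtain recommendations of $a_i$ repeatedly. The construction in \cref{alg:binary_ratio} places the mass on the $a_i$ slice, so this holds, and otherwise samples count conditionally on $a_i$ being recommended. It also needs (b) that $\|\bx(a_i,\cdot)\|_1\le 1$, so that $\varphi_i\le C$. I would state both explicitly, and count "samples" as rounds in which $a_i$ is recommended.
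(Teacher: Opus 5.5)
Your proposal is correct and follows essentially the same argument as the paper. You lower-bound $P_{\bx}(a_i'\mid a_i)$ by $1/(m_i e^{\beta C})$, using $\varphi_i\le C$ on the at most $m_i$ actions in the support, and then apply independence of the samples to get $(1-q)^N\le e^{-qN}\le\delta$, so that never observing $a_i'$ certifies $a_i'\notin QR_i(\bx,a_i)$. Your explicit justification of $\varphi_i(a_i,a'',\bx)\le C$ for every $a''$ via $\|\bx(a_i,\cdot)\|_1\le 1$, and of the no-false-positive direction, fills in steps the paper leaves implicit.
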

\begin{proof}
    The proof is in \cref{proof:sampling_complexity}
\end{proof}
\begin{remark}
\Cref{lem:sampling_complexity} highlights how agents' bounded rationality controls the learnability of the utility functions. When the rationality parameter $\beta = 0$ (maximally noisy behavior), the moderator requires on the order of $m_i \ln(1/\delta)$ samples from each recommendation mechanism to identify the utilities up to the relevant equivalence class. As $\beta \to \infty$ and agents become perfectly rational, the feedback exponentially converges to best-response behavior, recovering the corresponding impossibility of learning.
\end{remark}
}
\paragraph{Identifying the relative scale.}
So far, we have recovered every utility difference vector $\bw_i(a_i,a_i')$ up to an independent positive scaling. These scalings cannot be arbitrary, since these vectors must satisfy the triangular identities
\[
\bw_i(a,c)=\bw_i(a,b)+\bw_i(b,c)\quad \forall a,b,c\in A_i .
\]
which translate for the scaled estimates into the consistency relations
\(
\lambda_{ac}\hat w_i(a,c)=\lambda_{ab}\hat w_i(a,b)+\lambda_{bc}\hat w_i(b,c)
.\) These consistency relations define a sparse linear system in the scaling variables $\lambda$, where each constraint involves only three unknowns. Such systems are classically solved by projection-based methods (e.g., Kaczmarz-type algorithms), stochastic least-squares schemes, or sparse iterative solvers from numerical linear algebra.

\section{Algorithm for low-regret recommendations} 
\label{sec:low_reg_alg}
In this section, we present an online algorithm that enables a moderator to generate a sequence of low-regret recommendations from action feedback under the best-response and quantal-response feedback models. The algorithm is based on a cutting-plane method that iteratively searches over an unknown parameter vector $\bw^{\star} \in \mathbb{R}^{N}$, where $N = \sum_{i\in [n]} (m_i-1)|A_{-i}|$. This vector is formed by stacking the pairwise utility difference vectors $\bw_i(a_i,a_i')$ without redundancy. For each agent $i$, fix a reference action $a_i^1$ and write 
\[
\bw^\star
=
\begin{bmatrix}
\bw_1^\star \\
\bw_2^\star \\
\vdots \\
\bw_n^\star
\end{bmatrix},
\qquad
\bw_i^\star
=
\begin{bmatrix}
\bw_i(a_i^1, a_i^j)
\end{bmatrix}_{j \in [2, \dots, m_i]}.
\]
Note that $\bw_i(a_i^j, a_i^1) = - \bw_i(a_i^1, a_i^j)$ for each $j\in [2, \dots,m_i]$. Further, the utility difference vector for any pair of actions $a', a'' \in A_i \setminus \{a_i^1\}$ can be recovered from $\bw^\star$ via the triangular identity
\[
\bw_i(a', a'') \;=\; \bw_i(a_i^1, a'') \;-\; \bw_i(a_i^1, a').
\]
We assume that $\bw^{\star} \in  \mathcal{B} \triangleq \{\bw \in \mathbb{R}^N\ \big|\ \|\bw\|\leq 1 \}$. This is without loss of generality as scaling the vector generates an equivalent game for both feedback models.

It will be useful to first outline the basic structure of a cutting-plane algorithm. At each iteration $t$, we select a query point $\bw^{(t)}$ and submit it to a \textit{separation oracle}. The oracle returns a direction $\bq^{(t)}$ defining a half-space such that $\langle \bw^{\star}, \bq^{(t)} \rangle \geq 0$ and $\langle \bw^{(t)}, \bq^{(t)}\rangle \leq 0$. We also maintain a \textit{knowledge set} $\mathcal{C}^{(t)}$ of all vectors consistent with the information gathered up to iteration $t$. We initialize $\mathcal{C}^{(0)} = \mathcal{B}$, and using the oracle’s response, we update $\mathcal{C}^{(t)} = \mathcal{C}^{(t-1)} \cap \{\bw \in \mathcal{B} \ \big| \ \langle\bw, \bq^{(t)}\rangle \geq 0 \}$. 

We also introduce preliminary notation from convex geometry that will be useful throughout the discussion. Let \(\mathcal{C} \subset \mathbb{R}^N\) be a convex set. We denote its volume by $\mathrm{Vol}(\mathcal{C})$ and its center of gravity by $\mathrm{cg}(\mathcal{C})$. The width of $\mathcal{C}$ in a direction $\bv \in \mathbb{R}^N$ is $\text{width}(\mathcal{C}, \bv) = \max_{\boldsymbol{y}, \bz \in \mathcal{C}} \langle \boldsymbol{y}-\bz, \bv\rangle$.  

\subsection{Low-regret recommendations}
\Cref{alg:low_regret_rec} below describes how to use cutting-plane as a method to construct a sequence of low-regret recommendations. It mainly describes how to select the query points $\bw^{(t)}$, generate recommendation mechanisms $\bx^{(t)}$ based on these queries, and use the feedback from the the recommendations to construct a separation oracle. This construction is the key to relating the geometric progress of the cutting-plane method to the regret accumulated by the moderator.

\paragraph{Selecting the query points.} In every iteration $t = 1, \dots, T$, the algorithm select a query point 
\begin{align}
\label{eq:query_selection}
    \bw^{(t)} =  \mathrm{cg}\!\left(\mathcal{C}^{t-1} + \frac{1}{T}\mathcal{B}\right), 
\end{align}
where $\mathcal{C}^{(t-1)}$ is the current knowledge set and $\mathcal{B}$ is the unit ball in $\mathbb{R}^{N}$. In \cref{sec:proof_low_regret}, we justify the choice of the query points by showing that selecting the center of gravity of the buffered set  $\mathrm{cg}\!\left(\mathcal{C}^{t-1} + \frac{1}{T}\mathcal{B}\right)$ rather than the center of gravity of $\mathcal{C}^{t-1}$ itself allows us to reduce the \textit{width} of the knowledge set every time the algorithm incurs a large regret.

\paragraph{Generating recommendations.} Note that the normalized vector $\bw^{\star}$ encodes all the information required to compute 
a correlated equilibrium of the underlying game. So each query point $\bw^{(t)}$ produced by the algorithm can therefore be interpreted as a guess of the agents' utilities in the game. For any such game $\bw^{(t)}$, the set of correlated equilibria, denoted $CE(\bw^{(t)})$, is nonempty \cite{hart1989existence}. Given a point $\bw^{(t)}$, the algorithm constructs the recommendation mechanism $\bx^{(t)} \in CE(\bw^{(t)})$, recommends actions $\ba^{(t)} \sim \bx^{(t)}$, and observes the realized action $\ba^{\star (t)}$ based on one of the two feedback models. Upon receiving the feedback $\ba^{\star (t)}$, we consider one of the following two cases: 
\begin{itemize}
    \item[-] \textbf{Case 1: } If $\ba^{(t)} = \ba^{\star (t)}$, then no deviations are observed and the algorithm incurs zero regret $r(\ba^{(t)}, \ba^{\star (t)}, \bx^{(t)}) = 0$. In this case \edit{we do not update $x^{(t)}$} and sample another $\ba^{(t+1)} \sim \bx^{(t)}$. 
     \item[-] \textbf{Case 2:} If $\ba^{(t)} \neq \ba^{\star (t)}$, that is some agents deviated from the recommended action then we can construct a separating hyperplane $\bq^{(t)} \in \mathbb{R}^{N}$ such that $\langle \bw^{\star}, \bq^{(t)}\rangle \geq0$ and $\langle \bw^{(t)}, \bq^{(t)}\rangle \leq0$ as described in the next paragraph.
\end{itemize}
\begin{algorithm}[h!]
\caption{Separation oracle}
\label{alg:oracle}
\DontPrintSemicolon
\textbf{Input:} Action recommendations $\ba^{(t)}$ and realized actions $\ba^{\star(t)}$\;
\textbf{Output:} Separating hyperplane $\bq^{(t)} \in \mathbb{R}^{N}$\;

Initialize $\bq^{(t)} \gets \boldsymbol{0}$\;

\For{$i \leftarrow 1$ \KwTo $n$}{
    \If{$a_i^{(t)} \neq a_i^{\star(t)}$}{
        $j \gets \gamma(a_i^{(t)})$\;
        $j^\star \gets \gamma(a_i^{\star(t)})$\;

        \If{$j \neq 1$}{
            $\bq^{(t)}(i,j) \gets \bq^{(t)}(i,j) - \bx^{(t)}(a_i^j,\cdot)$\;
        }
        \If{$j^\star \neq 1$}{
            $\bq^{(t)}(i,j^\star) \gets \bq^{(t)}(i,j^\star) + \bx^{(t)}(a_i^j,\cdot)$\;
        }
    }
}
\end{algorithm}

\paragraph{Constructing the separation oracle.}
Recall that the vector $\bw^{\star}$ is constructed by concatenating $\bw_i(a_i^1, a_i^j)$ for $j = 2, \dotsm m_i$ and $i \in [n]$. Given an action $a_i$, let $\gamma(a_i)$ be the index of action $a_i$ in the action set $A_i$. For each $i \in [n]$ and $j \in [2, \dots, m_i]$, let $\ell(i, j) = (i-1)(m_i-1)d_i + (j-2)d_i +1$ be the starting index of the block corresponding to $\bw_i(a_i^1,a_i^{j})$ in $\bw^{\star}$. Denote by $\bq^{(t)}(i, j) = \bq[\ell(i, j): \ell(i, j) + d_i -1]$ the slice of the vector $\bq^{(t)}$ corresponding to the location of the vector $\bw_i(a_i^{1}, a_i^{j})$ in $\bw^{\star}$.
In \cref{alg:oracle}, we show how to use the action feedback $\ba^{\star (t)}$, collected as a response to the recommendation $\ba^{(t)} \sim \bx^{(t)}$, to construct a hyperplane $\bq^{(t)} \in \mathbb{R}^{N}$ that separates $\bw^{\star}$ and $\bw^{(t)}$. 

The oracle in \cref{alg:oracle} is constructed such that 
\begin{align}
\label{eq:oracle_positive}
    \langle \bw^{\star}, \bq^{(t)}\rangle =  \sum_{i\in [n]} \varphi_i(a_i^{(t)}, a_i^{\star (t)}, \bx^{(t)}),
\end{align}
where $\varphi(a_i^{(t)}, a_i^{\star (t)}, \bx^{(t)}) = \langle \bw_i (a_i^{(t)}, a_i^{\star (t)}), \bx^{(t)}(a_i^{(t)}, \cdot)\rangle$. This $\bq^{(t)}$ is a valid separating hyperplane for both feedback models, that is $\langle \bw^{\star}, \bq^{(t)} \rangle \geq 0$ and $\langle \bw^{(t)}, \bq^{(t)}\rangle \leq 0$. This follows form the fact that $\varphi(a_i^{(t)}, a_i^{\star (t)}, \bx^{(t)}) \geq 0$ for every $a_i^{\star (t)} \in BR_i(\bx^{(t)}, a_i^{(t)})$ (cf. \cref{def:best_response}) or $a_i^{\star (t)} \in QR_i(\bx^{(t)}, a_i^{(t)})$ (cf. \cref{def:quantal_response}). Finally, since  $\bx^{(t)} \in CE(\bw^{(t)})$ we have $\langle \bw^{(t)}, \bq^{(t)} \rangle \leq 0$ as desired.



    
    
    
    

\begin{algorithm}[h!]
\caption{Low-Regret Recommendations}
\label{alg:low_regret_rec}
\textbf{Input:}Baseline set $\mathcal{B}$, time horizon $T$\;
\textbf{Output:}Sequence of recommendations $\{\bx^{(t)}\}_{t=1}^T$\;

Initialize $\mathcal{C}^0 \gets \mathcal{B}$\;
Initialize $t\gets 1$\;
Initialize $\operatorname{UPDATE} \gets \operatorname{True}$\;

\For{$t  = 1, \dots, T$}{
    \If{$\operatorname{UPDATE}$}{
    Select query point $\bw^{(t)}$ as in \cref{eq:query_selection}\;
    Compute a recommendation $\bx^{(t)} \in CE(\bw^{(t)})$\;
    }

    \Else{
    $\bx^{(t)} \gets \bx^{(t-1)}$ \;
    }

    Sample action recommendation $\ba^{(t)} \sim \bx^{(t)}$ and collect feedback $\ba^{\star (t)}$\;
    
    \If{$\ba^{\star (t)} \neq \ba^{(t)}$}{
    Construct $\bq^{(t)}$ as in \cref{alg:oracle}\;
    
    Update the knowledge set $\mathcal{C}^t \gets \mathcal{C}^{t-1} \cap \{\bw \in \mathbb{R}^N \mid \langle \bw, \bq^{(t)} \rangle \ge 0\}$\;
    
    $\operatorname{UPDATE} \gets \operatorname{True}$\;
    }
    \Else{$\operatorname{UPDATE} \gets \operatorname{False}$ }
}

\end{algorithm}

\paragraph{Regret bound.} 
In \cref{thm:low-regret} we give a bound on the expected cumulative regret of the sequence of recommendations generated by \cref{alg:low_regret_rec}. The bound holds when agents follow both the best-response or the quantal-response feedback models.
We show that the incurred regret scales linearly with \edit{the size of the game representation $nM$} and logarithmic with the number of iterations $T$. The proof of the theorem is provided in \cref{sec:proof_low_regret}. 

    \begin{theorem}
    \label{thm:low-regret} The recommendations generated by \cref{alg:low_regret_rec} incurs regret \edit{$O\big(nM\operatorname{log}(T)\big)$}.
    \end{theorem}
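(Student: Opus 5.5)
We plan to follow the contextual-search / cutting-plane argument (in the style of \cite{liu2021optimal,gollapudi2021contextual}): bound the total regret by the sum of the oracle values $\langle \bw^\star,\bq^{(t)}\rangle$ over rounds where a deviation occurs, and then control that sum through the shrinkage of the knowledge set. The starting observation is that in Case~1 the regret is zero. In Case~2, by \cref{eq:oracle_positive}, the realized regret equals $\langle \bw^\star,\bq^{(t)}\rangle$. Since $\langle \bw^{(t)},\bq^{(t)}\rangle\le 0$ because $\bx^{(t)}\in CE(\bw^{(t)})$, we get
\[
r(\ba^{(t)},\ba^{\star(t)},\bx^{(t)})\le \langle \bw^\star-\bw^{(t)},\bq^{(t)}\rangle\le \|\bq^{(t)}\|\,\mathrm{width}\bigl(\mathcal{C}^{(t-1)}+\tfrac1T\mathcal B,\ \bq^{(t)}/\|\bq^{(t)}\|\bigr).
\]
This uses that both $\bw^\star$ and $\bw^{(t)}$ lie in the buffered set. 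Moreover, $\|\bq^{(t)}\|$ is bounded by a constant times $\sqrt n$, since each block is built from entries of $\bx^{(t)}$, which sum to at most one. So each regret-incurring round costs at most $O(\|\bq\|)$ times the current width in direction $\bq^{(t)}$, and the task reduces to bounding $\sum_t \mathrm{width}_t$ over cut rounds.

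For that sum, we split cut rounds into two kinds. A round is \emph{small} if the width in direction $\bq^{(t)}$ is at most about $N/T$; these contribute $O(N)\cdot \|\bq\|$ in total over $T$ rounds. A round is \emph{large} otherwise. For a large round, the key lemma is the buffered-centroid argument. Let $K=\mathcal{C}^{(t-1)}+\frac1T\mathcal B$ and cut through $\mathrm{cg}(K)$ with the halfspace $\{\langle \bw,\bq\rangle\ge 0\}$. By Grünbaum's theorem (or its directional-width variant), the set $\mathcal{C}^{(t)}+\frac1T\mathcal B$ has volume at most a constant fraction $(1-1/e)$ of that of $K$, up to a correction. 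The correction is controlled when the width exceeds $\Omega(N/T)$, because the $\frac1T$-buffer then contributes only a lower-order slab. This is precisely why the query point is the center of gravity of the buffered set and not of $\mathcal C^{(t-1)}$.

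The volume potential is also bounded on both ends. Initially $\mathrm{Vol}(\mathcal B+\frac1T\mathcal B)\le (2)^N\mathrm{Vol}(\mathcal B)$, and the buffered set always contains a ball of radius $1/T$, so its volume is at least $T^{-N}\mathrm{Vol}(\mathcal B)$. Hence the number of large rounds is $O(N\log T)$. In fact we aim for the sharper statement, a sum of widths bounded by $O(N\log T)$, via the potential $\log\mathrm{Vol}$ and the inequality that a cut at the centroid in a direction of width $w$ reduces log-volume by $\Omega(\min(1,\,wT/N))$-type amounts. Summing then gives total regret $O(N\log T)$.

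Finally we convert $N$ to the stated scale: $N=\sum_i (m_i-1)d_i\le \sum_i m_i d_i = nM$. The regret bound holds pathwise, hence in expectation, for both BR and QR feedback, because only the nonnegativity of $\varphi_i$ at chosen deviations was used. The main obstacle is the large-round lemma, namely showing that centroid cuts on the buffered set decrease volume by a constant factor whenever the directional width exceeds order $N/T$ while accounting for the buffer's Minkowski sum. Our first attempt will be to use Brunn--Minkowski together with the approximate-Grünbaum result for halfspaces passing near the centroid, as in \cite{liu2021optimal}.
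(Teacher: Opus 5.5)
Your route is the paper's. The oracle identity $\langle \bw^\star,\bq^{(t)}\rangle=r(\ba^{(t)},\ba^{\star(t)},\bx^{(t)})$ and $\langle \bw^{(t)},\bq^{(t)}\rangle\le 0$ give $r\le D(\bw^{(t)})$. The buffered-centroid cutting-plane bound then controls $\sum_t D(\bw^{(t)})$; the paper cites it from Gollapudi et al., and your sketch of it is sound. Finally $N\le nM$.

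The gap is a factor you compute and then drop. You correctly note $\|\bq^{(t)}\|\le\sqrt{2n}$. This can be attained: take $\bx^{(t)}$ a pure profile, which the algorithm allows since any element of $CE(\bw^{(t)})$ may be used, and let every agent deviate between two non-reference actions. So your own chain gives $\sum_t r_t\le\sqrt{2n}\sum_t\mathrm{width}_t=O(\sqrt n\,N\log T)$. Since $N\ge nM/2$ when all $m_i\ge 2$, this is $O(n^{3/2}M\log T)$, and the closing claim ``summing gives $O(N\log T)$'' does not follow. The paper's proof has the same hole in hidden form. The cited Gollapudi et al.\ bound sums distances to the returned hyperplanes, so it is stated for unit-norm (or $O(1)$-norm) oracle directions. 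Yet \cref{lem:low_regret_cutting} is applied to the unnormalized $\bq^{(t)}$.

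An extra ingredient is needed to reach the stated $O(nM\log T)$. One option is to cut agent by agent. For each deviating $i$, the block $\bq_i^{(t)}$ already satisfies $\langle \bw_i^\star,\bq_i^{(t)}\rangle=\varphi_i(a_i^{(t)},a_i^{\star(t)},\bx^{(t)})\ge 0$ and $\langle \bw_i^{(t)},\bq_i^{(t)}\rangle\le 0$, because each agent's CE constraints hold for the guess $\bw^{(t)}$; moreover $\|\bq_i^{(t)}\|\le\sqrt2$. Keep a separate buffered knowledge set for each block $\bw_i\in\mathbb R^{N_i}$, with $N_i=(m_i-1)d_i$, and compute $\bx^{(t)}\in CE(\bw^{(t)})$ from the product of the block centroids. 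Bounding each block's cumulative distance by $O(N_i\log T)$ then gives $O(N\log T)$ in total. This modifies \cref{alg:low_regret_rec}, however. For the algorithm as written, the argument you give (and the paper's) supports only $O(\sqrt n\,nM\log T)$.

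A minor point: your proposed detour through a log-volume decrease of order $\min(1,wT/N)$ is unnecessary, and it is false for small $w$. The buffer keeps $w\ge 2/T$, yet a cut removes essentially no buffered volume once $\mathcal C^{(t-1)}$ is tiny. Your two-case split already gives $\sum_t\mathrm{width}_t=O(N\log T)$, since every width is at most $2+2/T$.
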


\subsection{Proof of \cref{thm:low-regret}} 
\label{sec:proof_low_regret}
The proof of \cref{thm:low-regret} involves two main steps. First, we show that selecting query points as in \cref{eq:query_selection} guarantees progress in shrinking the \textit{width} of the knowledge set. Second, we relate this progress to the regret of the recommendations via the construction of the separation oracle.

Classical cutting-plane methods aim to remove a constant fraction of the volume of the knowledge set at every iteration. Such volume-reduction guarantees are insufficient for our setting, since our primary concern is not the volume of the set but rather its width. Reducing volume alone does not necessarily control the width: a convex set may have arbitrarily small volume while remaining highly elongated in some direction. To overcome this limitation, we draw on recent works on cutting-plane methods for contextual search \cite{gollapudi2021contextual,leme2018contextual,liu2021optimal,lobel2018multidimensional}, which control the width of the knowledge set using various geometric techniques. 

The key technique underlying the query selection in \cref{alg:low_regret_rec} is the maintenance of a potential function $\mathrm{Vol}(\mathcal{C} + \rho \mathcal{B})$, where $\rho> 0$ is a regularization parameter. With an appropriate choice of $\rho$, this potential links the volume of the expanded set $\mathcal{C} + \rho\mathcal{B}$ to the width of $\mathcal{C}$. Consider the distance between the query point $\bw^{(t)}$ and the true point $\bw^{\star}$ along the oracle direction 
\begin{align}
   D(\bw^{(t)}) \coloneqq \langle \bw^{\star} - \bw^{(t)}, \bq^{(t)}\rangle, 
\label{eq:cp_regret}
\end{align}
which is bounded by the width of the current set along $\bq^{(t)}$. When $\text{width}(\mathcal{C}^{(t)}, \bq^{(t)})$ is large relative to $\rho$, cutting the expanded set $\mathcal{C}^{(t)} + \rho\mathcal{B}$ through its centroid reduces $\text{Vol}(\mathcal{C}^{(t)}+\rho\mathcal{B})$ by a constant factor. Since the set $\mathcal{C}^{(t)}$ is never empty, we always have $\text{Vol}(\mathcal{C}^{(t)}+\rho\mathcal{B}) \ge \text{Vol}(\rho\mathcal{B})$. This implies that such large-width cases can occur only in a bounded number of times. We formalize this in \cref{lem:low_regret_cutting} and refer to \cite{gollapudi2021contextual}[Theorem 4.2] for a complete proof.

\begin{lemma}
\label{lem:low_regret_cutting}
The cumulative distance $\sum_{t=1}^T D(\bw^{(t)})$ of the points $\bw^{(t)}$ generated by a cutting-plane algorithm that queries $\bw^{(t)} = \mathrm{cg}(\mathcal{C}^{(t-1)} + \frac{1}{T}\mathcal{B})$ is $O(N\log(T))$. 
\end{lemma}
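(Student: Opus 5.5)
The proof will be the standard potential-function argument from contextual search, built on the potential $\Phi_t := \mathrm{Vol}(\mathcal{C}^{(t)} + \rho\mathcal{B})$ with $\rho = 1/T$.

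\textbf{Split rounds by the width of the cut direction.} We only count rounds in which a cut is made (Case 2), since otherwise $D$ is not defined and no regret is incurred. For each such round, let $\hat{\bq}^{(t)} = \bq^{(t)}/\|\bq^{(t)}\|$. Both $\bw^{\star}$ and $\bw^{(t)}$ lie in $\mathcal{C}^{(t-1)} + \rho\mathcal{B}$, so
\[
D(\bw^{(t)}) \le \|\bq^{(t)}\|\,\mathrm{width}\bigl(\mathcal{C}^{(t-1)}+\rho\mathcal{B},\hat{\bq}^{(t)}\bigr).
\]
The oracle in \cref{alg:oracle} adds at most two slices of $\bx^{(t)}$ per agent, so $\|\bq^{(t)}\| = O(n)$. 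If the width is larger one could also normalise $\bq$ and absorb the factor. Call a round \emph{small} if $\mathrm{width}(\mathcal{C}^{(t-1)},\hat{\bq}^{(t)}) \le c_0 N\rho$ and \emph{large} otherwise. Small rounds contribute at most $O(N\rho)$ each to $\sum_t D(\bw^{(t)})$, hence $O(N)$ in total over $T$ rounds.

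\textbf{Large rounds shrink the potential by a constant factor.} In a large round, $\bw^{(t)}$ is the centroid of $K := \mathcal{C}^{(t-1)}+\rho\mathcal{B}$, and the halfspace $H = \{\langle \bw,\bq^{(t)}\rangle \ge 0\}$ passes on the far side of the centroid. By Grünbaum's inequality, $\mathrm{Vol}(K\cap H) \le (1-1/e)\mathrm{Vol}(K)$. The difficulty is that the new potential is $\mathrm{Vol}((\mathcal{C}^{(t-1)}\cap H)+\rho\mathcal{B})$, not $\mathrm{Vol}(K\cap H)$. This set is contained in $(K\cap H) + $ a slab of thickness $\rho$ around $\partial H$, intersected with $K+\rho\mathcal{B}$.

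I will use the directional Grünbaum bound: a halfspace through the centroid removes a piece whose extent in the normal direction is at least $\mathrm{width}/(N+1)$. Together with the width being at least $c_0 N\rho$, this shows the $\rho$-slab adds only a small fraction of $\mathrm{Vol}(K)$. Hence $\Phi_t \le (1-c_1)\Phi_{t-1}$ for an absolute $c_1>0$.

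\textbf{Conclude.} The potential starts at $\Phi_0 \le \mathrm{Vol}(2\mathcal{B})$ and never drops below $\mathrm{Vol}(\rho\mathcal{B})$, because $\mathcal{C}^{(t)} \ni \bw^{\star}$ is nonempty. So the number of large rounds is at most $\log_{1/(1-c_1)}(2/\rho)^N = O(N\log T)$. In each large round $D(\bw^{(t)}) \le \|\bq^{(t)}\|\cdot 2(1+\rho) = O(1)$, after treating the $\|\bq\|$ factor as a constant as the paper does. Summing the two classes gives $\sum_t D(\bw^{(t)}) = O(N\log T)$.

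I expect the volume-contraction step, which controls the $\rho$-buffer, to be the main obstacle. That is exactly the content of \cite{gollapudi2021contextual}[Theorem 4.2], and I would import it rather than reprove it.
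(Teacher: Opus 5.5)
Your proposal is correct and follows the same argument as the paper, which also only sketches the potential $\mathrm{Vol}(\mathcal{C}^{(t)}+\rho\mathcal{B})$ with $\rho=1/T$ (constant-factor contraction when the cut direction has large width relative to $\rho$, and the floor $\mathrm{Vol}(\rho\mathcal{B})$ bounding how many such rounds occur) and defers the contraction step to \cite[Theorem 4.2]{gollapudi2021contextual}; your small/large split and your explicit accounting for $\|\bq^{(t)}\|$ just fill in that sketch. One minor remark on the step you import: the $\rho$-slab is controlled not by the directional Gr\"unbaum depth bound (at your threshold that bound only gives a cap of depth about $c_0\rho$) but by the maximal-section estimate $\mathrm{Vol}_{N-1}\bigl(K\cap\{\langle \bw,\hat{\bq}^{(t)}\rangle=s\}\bigr)\le N\,\mathrm{Vol}(K)/\mathrm{width}(K,\hat{\bq}^{(t)})$ (from the two cones over the section), which bounds the slab volume by $N\rho\,\mathrm{Vol}(K)/\mathrm{width}(K,\hat{\bq}^{(t)})\le\mathrm{Vol}(K)/c_0$.
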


The second step to derive the regret bound in \cref{thm:low-regret} is to relate the progress of the cutting-plane to the regret of the recommendations generated by \cref{alg:low_regret_rec}. We show in \cref{lem:regret_of_cp_rec} that for each $t$, the regret of the recommendation $r(\ba^{(t)},\ba^{\star (t)}, \bx^{(t)})$ is bounded by the distance $D(\bw^{(t)})$ of the query point $\bw^{(t)}$ given the separation oracle constructed in \cref{alg:oracle}. 

\begin{lemma}
    Let $\bw^{(t)} \in \mathcal{B}$ and $\bx^{(t)} \in CE(\bw^{(t)})$. For $\bq^{(t)}$ constructed as in \cref{alg:oracle}, we have $$r(\ba^{(t)},\ba^{\star (t)}, \bx^{(t)}) \leq D(\bw^{(t)}).$$ 
\label{lem:regret_of_cp_rec}
\end{lemma}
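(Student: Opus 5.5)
We split $D(\bw^{(t)})$ into the two inner products $\langle \bw^{\star}, \bq^{(t)}\rangle$ and $\langle \bw^{(t)}, \bq^{(t)}\rangle$:
\[
D(\bw^{(t)}) = \langle \bw^{\star}, \bq^{(t)}\rangle - \langle \bw^{(t)}, \bq^{(t)}\rangle .
\]
We then identify the first term with the regret and show that the second term is nonpositive.

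\emph{Step 1: the first term equals the regret.} We verify the identity \eqref{eq:oracle_positive} block by block. Fix an agent $i$ with $a_i^{(t)} \neq a_i^{\star(t)}$, and write $j=\gamma(a_i^{(t)})$ and $j^\star=\gamma(a_i^{\star(t)})$. By construction, the contribution of agent $i$ to $\bq^{(t)}$ pairs $\bx^{(t)}(a_i^{(t)},\cdot)$ with $\bw_i(a_i^1,a_i^{j^\star}) - \bw_i(a_i^1,a_i^{j})$. A block is simply absent when the corresponding index equals $1$, which is consistent with $\bw_i(a_i^1,a_i^1)=0$. The triangular identity $\bw_i(a',a'') = \bw_i(a_i^1,a'')-\bw_i(a_i^1,a')$ turns this difference into $\bw_i(a_i^{(t)},a_i^{\star(t)})$. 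So agent $i$ contributes exactly $\varphi_i(a_i^{(t)},a_i^{\star(t)},\bx^{(t)})$. Agents who comply contribute nothing to $\bq^{(t)}$, and their term in $r$ is also zero because $\varphi_i(a,a,\bx)=0$. Summing over agents gives $\langle \bw^\star,\bq^{(t)}\rangle = r(\ba^{(t)},\ba^{\star(t)},\bx^{(t)})$. If no agent deviates, then $\bq^{(t)}=0$ and both sides are zero, so the claim holds trivially.

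\emph{Step 2: the second term is nonpositive.} We interpret $\bw^{(t)}$ as a game with utility-difference blocks $\bw^{(t)}_i(a_i^1,a_i^j)$. Its other difference vectors are defined through the same triangular identity, which makes them consistent. Repeating the computation of Step 1 with $\bw^{(t)}$ in place of $\bw^\star$ gives
\[
\langle \bw^{(t)},\bq^{(t)}\rangle = \sum_i \langle \bx^{(t)}(a_i^{(t)},\cdot), \bw^{(t)}_i(a_i^{(t)},a_i^{\star(t)})\rangle .
\]
Since $\bx^{(t)}\in CE(\bw^{(t)})$, \cref{def:CE} applied to the game $\bw^{(t)}$ makes every summand $\le 0$. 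Note that the recommended action $a_i^{(t)}$ has positive marginal probability because it was sampled, although the CE inequality holds for every $a_i$ regardless. Hence $-\langle \bw^{(t)},\bq^{(t)}\rangle \ge 0$, and therefore $r \le D(\bw^{(t)})$.

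\emph{Main obstacle.} The only delicate point is the index bookkeeping in \cref{alg:oracle}. The slice added at position $j^\star$ must use the vector $\bx^{(t)}(a_i^{(t)},\cdot)$, that is, the belief conditioned on the recommended action $a_i^j$, not on $a_i^{j^\star}$. Once this is checked, the reduction through the triangular identity is linear and applies identically to $\bw^\star$ and to $\bw^{(t)}$. The remaining argument requires only the nonnegativity of deviation incentives under BR and QR feedback (\cref{def:best_response} and \cref{def:quantal_response}), and in fact only Step 2 is needed for the inequality itself.
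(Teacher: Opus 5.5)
Your proof is correct and follows the paper's argument: you decompose $D(\bw^{(t)})$, identify $\langle \bw^{\star},\bq^{(t)}\rangle$ with the regret via the oracle construction, and use $\bx^{(t)}\in CE(\bw^{(t)})$ to get $\langle \bw^{(t)},\bq^{(t)}\rangle\le 0$, adding the block-by-block triangular-identity check that the paper leaves as ``by construction.'' One wording fix: your closing remark should say that the BR/QR nonnegativity of deviation incentives is not needed for this inequality (it only makes $\bq^{(t)}$ a valid cut), because the identity from Step 1 is still required alongside Step 2.
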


\begin{proof}
    Recall the regret  $r(\ba^{(t)},\ba^{\star (t)}, \bx^{(t)}) =  \sum_{i \in [n]}  \varphi_i(a_i^{(t)}, a_i^{\star (t)}, \bx^{(t)})$
    By construction of $\bq^{(t)}$ we have 
    $$r(\ba^{(t)},\ba^{\star (t)}, \bx^{(t)}) = \langle \bw^{\star}, \bq^{(t)}\rangle.$$ 
    Further, the oracle guarantees $\langle \bw^{(t)}, \bq^{(t)}\rangle \leq 0$. Thus, we have  $D(\bw^{(t)}) = \langle \bw^{\star} - \bw^{(t)}, \bq^{(t)}\rangle \geq \langle \bw^{\star}, \bq^{(t)}\rangle = r(\ba^{(t)},\ba^{\star (t)}, \bx^{(t)})$. 
\end{proof}

The regret bound in \cref{thm:low-regret} follows by \cref{lem:low_regret_cutting,lem:regret_of_cp_rec} and summing over $t = 1, \dots, T$ \edit{
\begin{align}
    \sum_{t=1}^T r(\ba^{(t)},\ba^{\star (t)}, \bx^{(t)}) \leq \sum_{t=1}^T D(\bw^{(t)}) = O(N\operatorname{log}(T)) \leq O(nM \operatorname{log}(T)),
\end{align}
where the last inequality follows by $N <nM$. Finally, \cref{alg:low_regret_rec} is a low-regret algorithm} since the bound holds uniformly for every realized trajectory $\{(\ba^{(t)},\ba^{\star (t)}, \bx^{(t)})\}_{t=1}^T$. Taking the expectation preserves the bound so 
\begin{align}
    \mathbb{E} \left[\sum_{t=1}^T r(\ba^{(t)},\ba^{\star (t)}, \bx^{(t)})\right] = O(nM \operatorname{log}(T)).
\end{align}

\bibliographystyle{plain}
\bibliography{ref}

\appendix

\section{Omitted results and proofs from \cref{sec:learnability}}
\label{appx:omitted_proofs_learnability}
\subsection{Proof of \cref{lem:same_halfspace}}
\label{proof:same_halfspace}
\begin{lemma}
Let $\bw, \bw'\in\mathbb{R}^k$ be two vectors each having at least one strictly positive
and one strictly negative component.
If for every $\bx \in \mathbb{R}_{+}$ we have $\operatorname{sign}\langle \bx, \bw\rangle
= \operatorname{sign}\langle \bx,\bw'\rangle$, then there exists $\lambda > 0$ such that $\bw'=\lambda \bw$.
\end{lemma}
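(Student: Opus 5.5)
The approach is to use test vectors supported on just two coordinates, one where $\bw$ is positive and one where it is negative. Sign agreement on such vectors pins down every coordinate ratio, and these ratios assemble into a single positive scalar. Throughout, the hypothesis is read for all $\bx \in \mathbb{R}^k_{+}$, the nonnegative orthant.

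\textbf{Step 1 (signs agree coordinatewise).} Taking $\bx = \be_j$ gives $\operatorname{sign}(w_j) = \operatorname{sign}(w'_j)$ for every $j$. So $\bw$ and $\bw'$ have the same sets of positive, negative and zero coordinates. In particular, both the positive index set $P$ and the negative index set $\mathcal{N}$ are nonempty.

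\textbf{Step 2 (equal ratios across a positive and a negative coordinate).} Fix $p \in P$ and $j \in \mathcal{N}$. For $\tau \ge 0$, take $\bx = \be_p + \tau \be_j \in \mathbb{R}^k_{+}$. Then
\[
\langle \bx, \bw\rangle = w_p + \tau w_j \quad\text{and}\quad \langle \bx, \bw'\rangle = w'_p + \tau w'_j.
\]
Each is a strictly decreasing affine function of $\tau$ that starts positive at $\tau=0$. Its unique root is $\tau^\star = -w_p/w_j$, respectively $-w'_p/w'_j$. If the two roots differed, some $\tau$ strictly between them would give opposite signs, contradicting the hypothesis. Hence
\[
w'_p/w_p = w'_j/w_j \quad \text{for all } p \in P,\ j \in \mathcal{N}.
\]

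\textbf{Step 3 (a common scalar).} Fix $j_0 \in \mathcal{N}$ and set $\lambda := w'_{j_0}/w_{j_0}$, which is positive by Step 1. By Step 2, $w'_p = \lambda w_p$ for every $p \in P$. Now fix any $p_0 \in P$; Step 2 again gives $w'_j/w_j = w'_{p_0}/w_{p_0} = \lambda$ for every $j \in \mathcal{N}$. Coordinates with $w_j = 0$ have $w'_j = 0 = \lambda w_j$ by Step 1. Therefore $\bw' = \lambda \bw$ with $\lambda > 0$.

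The only delicate point is Step 2, namely turning sign agreement into equality of the root locations. This needs the sign function to be read with three values $\{-1,0,+1\}$, or else a continuity argument that strict sign agreement on both sides of the roots forces the roots to coincide. Both readings work, because the two affine functions are strictly monotone and the choice of $\tau$ is free. The hypothesis that each vector has both a positive and a negative component is exactly what makes $P$ and $\mathcal{N}$ nonempty, which Steps 2 and 3 require.
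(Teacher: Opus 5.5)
Your proof is correct and follows the paper's argument. Both use unit vectors $\be_j$ to match sign patterns, then test vectors $\be_p+\tau\be_j$ across a positive and a negative coordinate to force $w'_p/w_p=w'_j/w_j$, and your Step 3 spells out the assembly of a single $\lambda$ that the paper only asserts. One small point: if sign is read as two-valued ($\ge 0$ versus $<0$, which is all quantal-response feedback reveals), your Step 1 no longer gives equal zero sets, and the paper's extra $\be_i+t\be_j$ argument that $w_i=0\iff w'_i=0$ is what covers that case; under the three-valued sign in the statement, your proof is complete as written.
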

\begin{proof}

Let $\{\be_i\}_{i}^k$ be the canonical bases vectors of $\mathbb{R}^k$. For each coordinate $i \in [k]$ let $\bx=\be_i$. Then 
\begin{align*}
    \operatorname{sign}\langle \bx, \bw\rangle
= \operatorname{sign}\langle \bx,\bw'\rangle \implies \operatorname{sign}(w_i)=\operatorname{sign}(w'_i)
\end{align*}
Hence $\bw$ and $\bw'$ share the same \textit{sign pattern}.

Suppose $w_i=0$ but $w'_i \ne 0$. Pick index $j \in [d]$ such that $w_j$ of strictly opposite sign to $w'_i$ (such $j$ exists by the mixed-sign assumption). 
Consider $\bx=\be_i+t\be_j$ for $t>0$, then $\langle \bx,\bw\rangle=t\,w_j$ and $\langle \bx,\bw'\rangle=w'_i+t\,w'_j$. By varying $t$, one can flip the sign of $\langle \bx,\bw'\rangle$ while $\langle \bx,\bw\rangle$ keeps a fixed sign (the sign of $w_j$), contradicting the hypothesis. Thus $w_i=0\iff w'_i=0$.

Similarly, Let $i, j$ be indices such that $w_i$ and $w_j$ are of opposite signs and without loss of generality fix $w_i>0$, $w_j<0$ . Set $\bx^{(t)}= \be_i+t\,\be_j$ for $t\ge0$.  
Then
\[
\langle \bx^{(t)},\bw\rangle = w_i + t\,w_j, \qquad
\langle \bx^{(t)},\bw'\rangle = w'_i + t\,w'_j.
\]
The first expression changes sign at $t_0=-w_i/w_j>0$.
Since both inner products must share the same sign for all $t\ge0$,
the second one must vanish at the same $t_0$, giving
\[
\frac{w'_i}{w_i}=\frac{w'_j}{w_j}.
\]
Because the pair $i,j$ is arbitrary among positive/negative indices,
this ratio is constant over all coordinates.  
Hence $\bw'=\lambda \bw$ for some $\lambda>0$.
\end{proof}

\subsection{Detailed proof of \cref{thm:learnability_br}}
\label{proof:learnability_br}
Recall the two-player games $G(\{A_i\}_{i=1}^2, \{u_i\}_{i=1}^2)$ and $G(\{A_i\}_{i=1}^2, \{v_i\}_{i=1}^2)$ with action sets $A_1 = \{a^1, a^2, a^3, a^4\}$ and $A_2 = \{b^1, b^2\}$. Let the utilities of player 1 in these two games be
\begin{align*}
\small
u_1\ & = \ \begin{array}{c|cccc}
 & a^1 & a^2 & a^3 & a^4 \\
\hline
b^1 & 0 & 3 & 5 & 8 \\
b^2 & 8 & 6.5 & 4.5 & 0
\end{array}
\qquad
v_1 \  = \ \begin{array}{c|cccc}
 & a^1 & a^2 & a^3 & a^4 \\
\hline
b^1 & 0 & 2 & 6 & 8 \\
b^2 & 8 & 7 & 3 & 0
\end{array}
\end{align*}
And let the utilities of player 2 be identical across the two games. 

\begin{proposition}
    The games $G(\{A_i\}_{i=1}^2, \{u_i\}_{i=1}^2)$ and $G(\{A_i\}_{i=1}^2, \{v_i\}_{i=1}^2)$ are not equivalent. 
\end{proposition}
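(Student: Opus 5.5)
We argue by contradiction from \cref{def:game_eq}. Player~2's utilities coincide in the two games, so it suffices to show that no $\lambda_1>0$ and $\boldsymbol{t}_1\in\mathbb{R}^2$ satisfy $\bv_1(a,\cdot)=\lambda_1\bu_1(a,\cdot)+\boldsymbol{t}_1$ for all $a\in A_1$. Suppose such $\lambda_1,\boldsymbol{t}_1$ exist. Subtracting the equation for one action from that for another cancels the translation, giving
\[
\bv_1(a',\cdot)-\bv_1(a,\cdot)=\lambda_1\bigl(\bu_1(a',\cdot)-\bu_1(a,\cdot)\bigr)\qquad\forall a,a'\in A_1 .
\]
That is, every utility difference vector of $G'$ must be the same positive multiple of the corresponding difference vector of $G$.

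We then evaluate this identity on two action pairs whose implied scalars disagree. For the pair $(a^1,a^4)$ we have $\bu_1(a^4,\cdot)-\bu_1(a^1,\cdot)=(8,-8)$ and $\bv_1(a^4,\cdot)-\bv_1(a^1,\cdot)=(8,-8)$. This forces $\lambda_1=1$.

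For the pair $(a^1,a^2)$ we have $\bu_1(a^2,\cdot)-\bu_1(a^1,\cdot)=(3,-1.5)$ and $\bv_1(a^2,\cdot)-\bv_1(a^1,\cdot)=(2,-1)$. This would require $\lambda_1=2/3$ from the first coordinate and also from the second, contradicting $\lambda_1=1$.

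The contradiction shows that no such positive affine map exists, so the games are not equivalent. An equivalent formulation compares $\bt_1=\bv_1(a^1,\cdot)-\bu_1(a^1,\cdot)=(0,0)$ (using $\lambda_1=1$) with the requirement $\bv_1(a^2,\cdot)=\bu_1(a^2,\cdot)$, which fails since $(2,7)\neq(3,6.5)$.

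There is no real obstacle beyond choosing pairs of actions that pin down $\lambda_1$ inconsistently. The one point to state carefully is that the translation $\boldsymbol{t}_1$ is common to all actions of player~1, which is what makes taking differences legitimate.
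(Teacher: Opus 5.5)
Your proof is correct and is essentially the paper's argument. The paper uses the zero entries of $\bu_1(a^1,\cdot)$ and $\bu_1(a^4,\cdot)$ to force $\boldsymbol{t}_1=0$ (hence $\lambda_1=1$) and then notes that $a^2$ admits no common scalar, which is exactly your ``equivalent formulation''; your differencing step just eliminates $\boldsymbol{t}_1$ first instead of solving for it.
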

\begin{proof}
    For the two games to be equivalent there must exist a scalar $\lambda_1>0$ and vector $\boldsymbol{t} \in \mathbb{R}^{2}$ such that: 
    \begin{align*}
        \bv_1(a_1, \cdot) = \lambda_1 \bu_1(a_1, \cdot) + \boldsymbol{t}
    \end{align*}
    Taking action $a_1^1$ we have
    \begin{align*}
        \begin{bmatrix}
            0 \\ 8
        \end{bmatrix} = \lambda_1 \begin{bmatrix}
            0 \\ 8
        \end{bmatrix}  + \begin{bmatrix}
            t_1 \\ t_2
        \end{bmatrix}, 
    \end{align*}
    forcing $t_1 = 0$. Similarly, by taking actions $a_1^4$ we get 
    \begin{align*}
        \begin{bmatrix}
            8 \\ 0
        \end{bmatrix} = \lambda_1 \begin{bmatrix}
            8 \\ 0
        \end{bmatrix}  + \begin{bmatrix}
            t_1 \\ t_2
        \end{bmatrix}, 
    \end{align*}
    forcing $t_2 = 0$. Finally, there is no common scalar $\lambda_i$ such that $\bv_1(a_1, \cdot) = \lambda_1 \bu_1(a_1, \cdot)$ for all $a_1 \in A_1$ which proves the claim. 
\end{proof}



\begin{proposition}
    The games $G(\{A_i\}_{i=1}^2, \{u_i\}_{i=1}^2)$ and $G(\{A_i\}_{i=1}^2, \{v_i\}_{i=1}^2)$ are indistinguishable under the BR feedback.
\end{proposition}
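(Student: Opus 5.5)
The proof reduces to comparing, for each agent, the best-response sets induced by every recommendation mechanism. For player~2 nothing needs to be done. Player~2's utilities are identical in the two games, so $BR_2(\bx,b)$ coincides in $G$ and $G'$ for every $\bx$ and every $b\in A_2$.

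For player~1, fix $\bx\in\Delta(A)$ and a recommended action $a_1$ with positive marginal probability; if $a_1$ has zero marginal it is never recommended and yields no feedback. The best-response set depends only on the belief vector $\by=\bx(a_1,\cdot)=(y_1,y_2)\in\mathbb{R}^2_+\setminus\{0\}$. By the positive scale invariance of $\arg\max$, it depends only on $s:=y_1/(y_1+y_2)\in[0,1]$. So it suffices to show that, for every $s\in[0,1]$, the sets
\[
\arg\max_{a\in A_1}\big(s\,u_1(a,b^1)+(1-s)\,u_1(a,b^2)\big)
\quad\text{and}\quad
\arg\max_{a\in A_1}\big(s\,v_1(a,b^1)+(1-s)\,v_1(a,b^2)\big)
\]
are equal.

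I will write the four affine functions of $s$ explicitly.
\begin{itemize}
\item In game $u$: $a^1\mapsto 8-8s$, $a^2\mapsto 6.5-3.5s$, $a^3\mapsto 4.5+0.5s$, $a^4\mapsto 8s$.
\item In game $v$: $a^1\mapsto 8-8s$, $a^2\mapsto 7-5s$, $a^3\mapsto 3+3s$, $a^4\mapsto 8s$.
\end{itemize}
Next I compute the consecutive crossing points. In both games, $a^1$ and $a^2$ tie at $s=1/3$, $a^2$ and $a^3$ tie at $s=1/2$, and $a^3$ and $a^4$ tie at $s=3/5$. The slopes are strictly increasing in the action index: $-8<-3.5<0.5<8$ for $u$ and $-8<-5<3<8$ for $v$. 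Hence each upper envelope is a concave piecewise-linear function on which the actions appear in the order $a^1,a^2,a^3,a^4$. The claimed envelope is therefore: $a^1$ on $[0,1/3]$, $a^2$ on $[1/3,1/2]$, $a^3$ on $[1/2,3/5]$, and $a^4$ on $[3/5,1]$, in both games.

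The main obstacle is ruling out non-consecutive dominance. The crossing points must be increasing, which they are ($1/3<1/2<3/5$). In addition, at each breakpoint no third action may reach the maximum, since that would change the argmax set there. I will check this by direct evaluation at the three breakpoints.
\begin{itemize}
\item At $s=1/3$: the tied maximum is $16/3$ in both games. The other actions take the values $4.67,\,2.67$ (game $u$) and $4,\,2.67$ (game $v$).
\item At $s=1/2$: the tied maximum is $4.75$ (game $u$) and $4.5$ (game $v$). The other actions take the value $4$ in both games.
\item At $s=3/5$: the tied maximum is $4.8$ in both games. The other actions take the values $4.4,\,3.2$ (game $u$) and $4,\,3.2$ (game $v$).
\end{itemize}
In every case the other actions are strictly below the tied maximum. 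Consequently, the argmax set is the same singleton on each open interval and the same pair at each breakpoint, in both games.

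This shows that the best-response regions $\mathcal R(a_1)$ coincide for every $a_1\in A_1$. Hence $BR_1(\bx,a_1)$ is identical in $G$ and $G'$ for all $\bx$ and $a_1$. Since the best-response sets coincide, the uniform sampling from them yields the same feedback distribution in both games. This establishes indistinguishability in the sense of \cref{def:obs_eq}. Equivalently, in the language of \cref{lem:Zero_char}, the restricted normal fans on $C_1=\mathbb{R}^2_+$ coincide.
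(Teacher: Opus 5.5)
Your proof is correct and follows the same route as the paper's: player~2 is trivial, and for player~1 you show that the best-response regions $\mathcal R(a_1)\subset\mathbb{R}^2_+$ coincide in the two games. The paper only asserts this through a sketch of the normal cones of the utility polytopes, whereas your explicit upper-envelope computation actually carries out that verification, using the common breakpoints $s=1/3,\,1/2,\,3/5$, the strictly increasing slopes, and the check that no third action ties at a breakpoint.
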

\begin{proof}
    Construct the utility polytopes for player 1 in each game where the utility polytope of a player in a game is the convex hull of their utility vectors $\bu_{1}(a_1, \cdot)$ for each $a_1 \in A_1$ (cf. \cref{fig:polytopes}). We can prove that the two games are indistinguishable under the BR feedback geometrically by showing that they have the same best response regions for each action. We sketch these best-response regions in \cref{fig:polytopes} proving that they are identical. 
\end{proof}

\begin{figure}[t]
  \centering
\begin{tikzpicture}[
  scale=0.5,
  line cap=round, line join=round,
  every node/.style={inner sep=1pt},
  cone/.style={opacity=0.22},
  coneborder/.style={line width=0.9pt, dashed},
  poly/.style={fill=gray!10, draw=black, line width=0.9pt},
  vtx/.style={circle, inner sep=2.2pt}
]

\coordinate (x1) at (0,  8);
\coordinate (x2) at (3,  6.5);
\coordinate (x3) at (5,  4.5);
\coordinate (x4) at (8, 0);

\coordinate (x2') at (2, 7);
\coordinate (x3') at (6, 3);
\draw[poly] (x1) -- (x2) -- (x3) -- (x4) -- (x1) -- cycle;

\fill[red]              (x1) circle (2.6pt);
\fill[blue]             (x2) circle (2.6pt);
\fill[green!70!black]   (x3) circle (2.6pt);
\fill[orange!90!black]  (x4) circle (2.6pt);

\node[red, right=10pt, font=\footnotesize]  at (x1) {$\bu_{1}(a_1^{1}, \cdot)$};
\node[blue, right=10pt, font=\footnotesize]       at (x2) {$\bu_{1}(a_1^{1}, \cdot)$};
\node[green!70!black,   right=10pt, font=\footnotesize] at (x3) {$\bu_{1}(a_1^{3}, \cdot)$};
\node[orange!90!black,  left=10pt, font=\footnotesize] at (x4) {$\bu_{1}(a_1^{4}, \cdot)$};


\coordinate (x1a) at (0.75,9.5); 
\coordinate (x1b) at (-1, 7); 
\fill[red, cone] (x1) -- (x1a) -- (x1b);
\draw[red, coneborder] (x1a) -- (x1) -- (x1b);
\node[red, above left=1pt, font=\footnotesize]  at (x1) {$\mathcal{R}(a_1^1)$};

\coordinate (x2a) at ((4.5,8); 
\coordinate (x2b) at (4,8.5);
\fill[blue, cone] (x2) -- (x2a) -- (x2b);
\draw[blue, coneborder] (x2a) -- (x2) -- (x2b);
\node[blue, xshift=30pt, yshift=30pt, font=\footnotesize] at (x2) {$\mathcal{R}(a_1^2)$};

\coordinate (x3a) at (7.25, 6); 
\coordinate (x3b) at (7,6.5); 
\fill[green!70!black, cone] (x3) -- (x3a) -- (x3b) -- cycle;
\draw[green!70!black, coneborder] (x3a) -- (x3) -- (x3b);
\node[green!70!black, xshift=45pt, yshift=30pt, font=\footnotesize] at (x3) {$\mathcal{R}(a_1^3)$};

\coordinate (x4a) at (9.5,1); 
\coordinate (x4b) at (7, -1);
\fill[orange!90!black, cone] (x4) -- (x4a) -- (x4b) -- cycle;
\draw[orange!90!black, coneborder] (x4a) -- (x4) -- (x4b);
\node[orange!90!black, xshift=10pt, yshift=-10pt, font=\footnotesize] at (x4) {$\mathcal{R}(a_1^4)$};

\end{tikzpicture}
  \hspace{1cm}
\begin{tikzpicture}[
  scale=0.5,
  line cap=round, line join=round,
  every node/.style={inner sep=1pt},
  cone/.style={opacity=0.22},
  coneborder/.style={line width=0.9pt, dashed},
  poly/.style={fill=gray!10, draw=black, line width=0.9pt},
  vtx/.style={circle, inner sep=2.2pt}
]

\coordinate (x1) at (0,  8);
\coordinate (x2) at (2, 7);
\coordinate (x3) at (6, 3);
\coordinate (x4) at (8, 0);
\draw[poly] (x1) -- (x2) -- (x3) -- (x4) -- (x1) -- cycle;

\fill[red]              (x1) circle (2.6pt);
\fill[blue]             (x2) circle (2.6pt);
\fill[green!70!black]   (x3) circle (2.6pt);
\fill[orange!90!black]  (x4) circle (2.6pt);

\node[red, right=3pt, font=\footnotesize]  at (x1) {$\bv_{1}(a_1^{1}, \cdot)$};
\node[blue, right=10pt, font=\footnotesize]       at (x2) {$\bv_{1}(a_1^{1}, \cdot)$};
\node[green!70!black,   right=10pt, font=\footnotesize] at (x3) {$\bv_{1}(a_1^{3}, \cdot)$};
\node[orange!90!black,  left=10pt, font=\footnotesize] at (x4) {$\bv_{1}(a_1^{4}, \cdot)$};


\coordinate (x1a) at (0.75,9.5); 
\coordinate (x1b) at (-1, 7); 
\fill[red, cone] (x1) -- (x1a) -- (x1b);
\draw[red, coneborder] (x1a) -- (x1) -- (x1b);
\node[red, above left=1pt, font=\footnotesize]  at (x1) {$\mathcal{R}(a_1^1)$};

\coordinate (x2a) at (3.5,8.5); 
\coordinate (x2b) at (3,9);
\fill[blue, cone] (x2) -- (x2a) -- (x2b);
\draw[blue, coneborder] (x2a) -- (x2) -- (x2b);
\node[blue, xshift=30pt, yshift=30pt, font=\footnotesize] at (x2) {$\mathcal{R}(a_1^2)$};

\coordinate (x3a) at (8.25,4.5); 
\coordinate (x3b) at (8, 5); 
\fill[green!70!black, cone] (x3) -- (x3a) -- (x3b) -- cycle;
\draw[green!70!black, coneborder] (x3a) -- (x3) -- (x3b);
\node[green!70!black, xshift=45pt, yshift=30pt, font=\footnotesize] at (x3) {$\mathcal{R}(a_1^3)$};

\coordinate (x4a) at (9.5,1); 
\coordinate (x4b) at (7, -1);
\fill[orange!90!black, cone] (x4) -- (x4a) -- (x4b) -- cycle;
\draw[orange!90!black, coneborder] (x4a) -- (x4) -- (x4b);
\node[orange!90!black, xshift=10pt, yshift=-10pt, font=\footnotesize] at (x4) {$\mathcal{R}(a_1^4)$};

\end{tikzpicture}
  \caption{The utility polytope constructed as the convex hull of the utility vectors of the games $G(\{A_i\}_{i=1}^n,\{u_i\}_{i=1}^n)$ (left) and $G(\{A_i\}_{i=1}^n,\{v_i\}_{i=1}^n)$ (right) and the best response region for each action.}
  \label{fig:polytopes}
\end{figure}
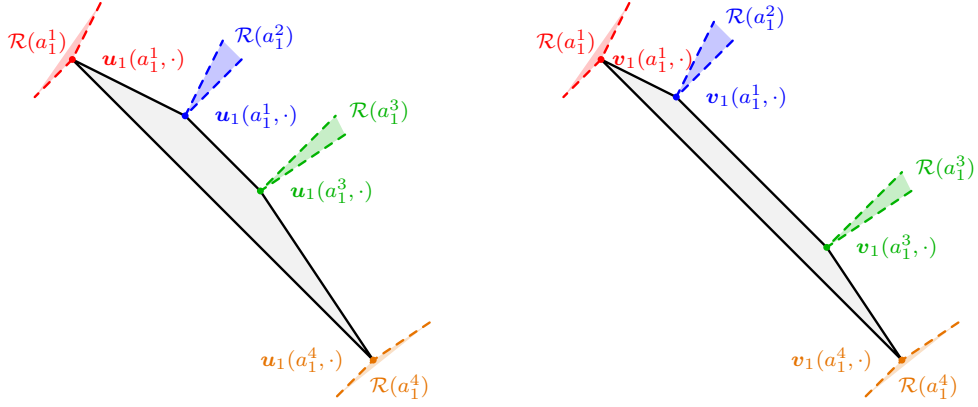

    
    
    
    
    

\subsection{Proof of \cref{lem:Zero_char}}
\label{proof:Zero_char}

\begin{lemma}
Let $G(\{A_i\}_{i=1}^n,\{u_i\}_{i=1}^n)$ and $G'(\{A_i\}_{i=1}^n,\{v_i\}_{i=1}^n)$ be two generic games, and for each player $i$ let $\mathcal P_i=\operatorname{conv}(\{\bu_i(a_i,\cdot)\}_{a_i\in A_i})$ and $\mathcal P'_i=\operatorname{conv}(\{\bv_i(a_i,\cdot)\}_{a_i\in A_i})$. We denote by 
\[
\mathcal N(\mathcal P_i)\mid_{C_i} \coloneqq \{\mathcal N_{\mathcal P_i}(F) \cap C_i\mid F\text{ is a face of }\mathcal P_i\}
\] 
where $\mathcal N_{\mathcal P_i}(F)$ is the normal cone  {of} $F$.  
Then the two games are indistinguishable under best-response feedback if and only if  
\[
\mathcal N(\mathcal P_i)\mid_{C_i}=\mathcal N(\mathcal P'_i)\mid_{C_i} \quad \forall i\in[n]
\]
\end{lemma}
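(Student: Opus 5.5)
The plan is to reduce indistinguishability to a pointwise statement about argmax sets over the cone $C_i$, and then to recognise these argmax sets as the data encoded by the restricted normal fan. Throughout I work in the labeled setting adopted in the paper. Genericity ensures that the vectors $\bu_i(a_i,\cdot)$ are pairwise distinct points, so each face $F$ of $\mathcal P_i$ that meets the vertex set is labeled by the set $S(F)\subseteq A_i$ of actions whose utility vectors lie in $F$.

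\emph{Step 1 (decoupling).} For an agent $i$ and recommended action $a_i$, the BR feedback depends on $\bx$ only through the block $\by:=\bx(a_i,\cdot)\in C_i$. For agent $i$ it is the set $\arg\max_{a_i'}\langle \by,\bu_i(a_i',\cdot)\rangle$. The blocks for different $(i,a_i)$ are free: any nonzero $\by\in C_i$ arises, up to positive scaling, as $\bx(a_i,\cdot)$ for some $\bx\in\Delta(A)$. Positive scaling does not change the argmax. If $\by=0$, then $a_i$ is never recommended and gives no feedback. Hence $G$ and $G'$ are indistinguishable if and only if, for every $i$ and every $\by\in C_i\setminus\{0\}$,
\[
\operatorname{Arg}_i(\by):=\arg\max_{a\in A_i}\langle \by,\bu_i(a,\cdot)\rangle=\arg\max_{a\in A_i}\langle \by,\bv_i(a,\cdot)\rangle=:\operatorname{Arg}'_i(\by).
\]

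\emph{Step 2 (argmax sets versus normal cones).} For a face $F$ of $\mathcal P_i$ I would verify that
\[
\mathcal N_{\mathcal P_i}(F)\cap C_i=\{\by\in C_i : \operatorname{Arg}_i(\by)\supseteq S(F)\}.
\]
This holds because a linear functional is maximised on all of $F$ exactly when it is maximised at every vertex of $F$. Two consequences follow. First, the full-dimensional vertex cones $\mathcal N_{\mathcal P_i}(\{\bu_i(a,\cdot)\})\cap C_i$ determine $\operatorname{Arg}_i$, via $\operatorname{Arg}_i(\by)=\{a:\by\in \mathcal N_{\mathcal P_i}(\{\bu_i(a,\cdot)\})\cap C_i\}$. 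Second, every restricted cone $\mathcal N_{\mathcal P_i}(F)\cap C_i$ is an intersection of such vertex cones, namely $\bigcap_{a\in S(F)}\mathcal N_{\mathcal P_i}(\{\bu_i(a,\cdot)\})\cap C_i$. Here I use that the normal cone of a face is the intersection of the normal cones of its vertices.

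\emph{Step 3 (both directions).}
\begin{itemize}
\item[($\Rightarrow$)] If $\operatorname{Arg}_i=\operatorname{Arg}'_i$ on $C_i\setminus\{0\}$, the displayed identity of Step 2 shows that the restricted cone indexed by each label set $S$ coincides in the two games. A face labeled $S$ in $\mathcal P_i$ whose restricted cone is nonempty has a counterpart labeled $S$ in $\mathcal P'_i$, and conversely. This is because nonemptiness of $\{\by\in C_i:\operatorname{Arg}'_i(\by)\supseteq S\}$ gives some $\by$ whose maximising face of $\mathcal P'_i$ has label set $\operatorname{Arg}'_i(\by)\supseteq S$. Hence $\mathcal N(\mathcal P_i)\mid_{C_i}=\mathcal N(\mathcal P'_i)\mid_{C_i}$. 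The origin, which lies in every cone, contributes trivially.
\item[($\Leftarrow$)] If the labeled restricted fans agree, then in particular the vertex cones agree, and Step 2 gives $\operatorname{Arg}_i=\operatorname{Arg}'_i$. Step 1 then yields indistinguishability.
\end{itemize}

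\emph{Main obstacle.} I expect the delicate point to be the converse direction if the fans are read as unlabeled collections of cones. Matching the collections must then be shown to respect action labels. My first attempt would be to argue that, under no weak dominance, each action's vertex cone meets the interior of $C_i$ in a full-dimensional set; this is the exposedness argument used in the proof of \cref{thm:ubr_characterization}. The maximal cones of the restricted fan are therefore in bijection with the actions. Since BR feedback reveals which action is chosen, the bijection must be the identity for games being compared. If this cannot be forced intrinsically, I would state and prove the lemma with faces indexed by action labels, consistent with the labeled convention of the paper. A second, minor care point is ties on the boundary of $C_i$ (lower-dimensional cones). These are handled uniformly by the identity in Step 2, since it holds for every $\by\in C_i$ including boundary points.
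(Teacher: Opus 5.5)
Your route is the paper's: reduce to $\by=\bx(a_i,\cdot)\in C_i$, identify best-response regions with restricted vertex normal cones, and get face cones as intersections of vertex cones. Steps 1 and 2 are fine, but both directions of Step 3 need repair.

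In $(\Rightarrow)$, let $F$ be a face of $\mathcal P_i$ with label set $S$. Write $K'_S:=\{\by\in C_i:\operatorname{Arg}'_i(\by)\supseteq S\}$, and define $K_S$ analogously. Your reason does not give what you claim. A point $\by\in K'_S$ only yields the face $\operatorname{face}_{\mathcal P'_i}(\by)$, whose label set is $\operatorname{Arg}'_i(\by)\supseteq S$. Its restricted cone can be strictly smaller than $K'_S$, so no cone is matched.

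Two further problems. First, ``nonempty'' is vacuous, since $0$ lies in every cone. Second, a face of $\mathcal P'_i$ with label set exactly $S$ need not exist. For instance, allowing dominated actions, take $d_i=2$ with $\bu_i(a,\cdot)=(1,0)$, $\bu_i(b,\cdot)=(0,1)$, $\bu_i(c,\cdot)=(-3,-1)$, $\bu_i(d,\cdot)=(-1,-3)$. Swapping the vectors of $c$ and $d$ leaves every best response on $C_i$ unchanged but removes the edge labeled $\{b,c\}$.

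You do not need an exact label match. Let $F'$ be the smallest face of $\mathcal P'_i$ containing $\{\bv_i(a,\cdot)\}_{a\in S}$. Since $\operatorname{face}_{\mathcal P'_i}(\by)$ is a face, it contains these points iff it contains $F'$. Hence $\mathcal N_{\mathcal P'_i}(F')\cap C_i=K'_S=K_S=\mathcal N_{\mathcal P_i}(F)\cap C_i$, and the reverse inclusion is symmetric. The paper's one-line appeal to $\mathcal N_{\mathcal P_i}(F)=\bigcap_{z\in\operatorname{ext}(F)}\mathcal N_{\mathcal P_i}(z)$ tacitly relies on this same fact.

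In $(\Leftarrow)$, your intrinsic argument for matching labels is circular. Here indistinguishability is the conclusion, so you cannot invoke ``BR feedback reveals which action is chosen'' to force the bijection between maximal cones and actions to be the identity. In fact no intrinsic argument can work. Swapping $\bu_i(a,\cdot)$ and $\bu_i(b,\cdot)$ for two actions leaves $\mathcal P_i$, and hence its unlabeled restricted fan, unchanged. Yet it changes the feedback at any $\by$ where $a$ is the unique best response, and such $\by$ exist under no weak dominance. So the labeled reading, where each action's cone $\mathcal N_{\mathcal P_i}(\{\bu_i(a,\cdot)\})\cap C_i$ is part of the data, is necessary rather than a fallback. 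Under it, your $(\Leftarrow)$ is exactly the paper's argument, whose proof likewise indexes the vertex-cone families by $a\in A_i$.
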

\begin{proof}
Fix a player $i\in[n]$ and let
\[
\mathcal P_i=\operatorname{conv}\bigl(\{u_i(a,\cdot)\}_{a\in A_i}\bigr),
\qquad
\mathcal P'_i=\operatorname{conv}\bigl(\{v_i(a,\cdot)\}_{a\in A_i}\bigr).
\]
For any $\bx\in\Delta(A)$ and any recommended action $a_i\in A_i$, set
\[y := \bx(a_i,\cdot)\in C_i .
\]
For every $y\in C_i$, define
\[
BR_i(y):=\arg\max_{a\in A_i}\langle y,u_i(a,\cdot)\rangle ,
\]
such that  $BR_i(\bx,a_i) = BR_i(y)$. For every $a\in A_i$, we have
\begin{equation}\label{eq:cone-is-BR-region}
N_{\mathcal P_i}\bigl(u_i(a,\cdot)\bigr)\cap C_i
\;=\;
\{\,y\in C_i:\ a\in BR_i(y)\,\},
\qquad \forall a\in A_i,
\end{equation}
and equivalently,
\begin{equation}\label{eq:BR-from-cones}
BR_i(y)
\;=\;
\{\,a\in A_i:\ y\in N_{\mathcal P_i}\bigl(u_i(a,\cdot)\bigr)\cap C_i\,\},
\qquad \forall y\in C_i .
\end{equation}

Therefore, two games are indistinguishable under best-response feedback for player $i$, if and only if they have the same collections of restricted normal cones at extreme points, i.e.
\[
\bigl\{N_{\mathcal P_i}\bigl(u_i(a,\cdot)\bigr)\cap C_i\bigr\}_{a\in A_i}
=
\bigl\{N_{\mathcal P'_i}\bigl(v_i(a,\cdot)\bigr)\cap C_i\bigr\}_{a\in A_i}.
\]
Finally, for any face $F$ of a polytope,
\[
N_{\mathcal P_i}(F)=\bigcap_{z\in\mathrm{ext}(F)} N_{\mathcal P_i}(z),
\]
hence after intersecting with $C_i$,
\[
N_{\mathcal P_i}(F)\cap C_i=\bigcap_{z\in\mathrm{ext}(F)}\bigl(N_{\mathcal P_i}(z)\cap C_i\bigr).
\]
This shows that equality of the restricted normal cones at vertices is equivalent to equality of the
restricted normal fan, i.e.
\[
\mathcal N(\mathcal P_i)\mid_{C_i}=\mathcal N(\mathcal P'_i)\mid_{C_i}.
\]
Since this holds for every $i\in[n]$, the two games are indistinguishable under best-response feedback
if and only if $\mathcal N(\mathcal P_i)\mid_{C_i}=\mathcal N(\mathcal P'_i)\mid_{C_i}$ for all $i\in[n]$.
\end{proof}

\subsection{Proof of \cref{prop:global_after_restriction}}
\label{proof:global_after_restriction}
\begin{proposition}
Let $\mathcal P,\mathcal Q\subset\mathbb R^d$ be two polytopes and let $C\subset\mathbb R^d$ be a closed full-dimensional polyhedral cone
(equivalently, $C^\circ$ is pointed). Suppose that $\mathcal N(\mathcal P)\mid_C=\mathcal N(\mathcal Q)\mid_C$.
Define $\tilde{\mathcal P}:=\mathcal P+C^\circ$ and $\tilde{\mathcal Q}:=\mathcal Q+C^\circ$.
Then $\tilde{\mathcal P}$ and $\tilde{\mathcal Q}$ are normally equivalent polyhedra, i.e., $\mathcal N(\tilde{\mathcal P})=\mathcal N(\tilde{\mathcal Q})$.
Conversely, if $\mathcal N(\tilde{\mathcal P})=\mathcal N(\tilde{\mathcal Q})$, then $\mathcal N(\mathcal P)\mid_C=\mathcal N(\mathcal Q)\mid_C$.
\end{proposition}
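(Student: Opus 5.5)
The plan is to compute the normal fan of a Minkowski sum directly and compare it with the restricted fans. Two standard facts drive everything.

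First, for polyhedra $A,B$ the support function is additive: $h_{A+B}=h_A+h_B$. The face of $A+B$ maximized by $y$ is $\operatorname{face}_A(y)+\operatorname{face}_B(y)$. For the cone $C^\circ$, $h_{C^\circ}(y)=0$ if $y\in C^{\circ\circ}=C$ and $+\infty$ otherwise. Hence $\tilde{\mathcal P}=\mathcal P+C^\circ$ has recession cone $C^\circ$, and its normal fan has support exactly $C$. For $y\in C$ we get
\[
\operatorname{face}_{\tilde{\mathcal P}}(y)=\operatorname{face}_{\mathcal P}(y)+\operatorname{face}_{C^\circ}(y),
\qquad
\operatorname{face}_{C^\circ}(y)=C^\circ\cap y^\perp .
\]

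Second, the normal fan of a polyhedron can be read off from the equivalence relation on its support, $y\sim y'$ iff $\operatorname{face}(y)=\operatorname{face}(y')$. The cones of the fan are the closures of the equivalence classes, which are the relatively open normal cones. So two polyhedra with the same fan support have equal normal fans iff they induce the same partition of that support into these classes.

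\textbf{Forward direction.} Assume $\mathcal N(\mathcal P)|_C=\mathcal N(\mathcal Q)|_C$.

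I first show that for $y,y'\in C$,
\[
\operatorname{face}_{\mathcal P}(y)=\operatorname{face}_{\mathcal P}(y')\iff \operatorname{face}_{\mathcal Q}(y)=\operatorname{face}_{\mathcal Q}(y').
\]
This works at the level of vertex cones, exactly as in the proof of \cref{lem:Zero_char}. The set $\operatorname{face}_{\mathcal P}(y)$ is determined by the set of vertices $v$ with $y\in \mathcal N_{\mathcal P}(v)\cap C$. Equality of restricted fans gives a matching of restricted vertex cones. In the labeled game setting this matching is literally given by the labels; in the general statement I will use the induced bijection between nonempty restricted vertex cones.

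Next I need a decomposition of the face of $\tilde{\mathcal P}$. Writing $\operatorname{face}_{\tilde{\mathcal P}}(y)=\operatorname{face}_{\mathcal P}(y)+(C^\circ\cap y^\perp)$, the pair $\bigl(\operatorname{face}_{\mathcal P}(y),\,C^\circ\cap y^\perp\bigr)$ determines the face. Conversely, the face determines the pair: its recession cone is $C^\circ\cap y^\perp$, and its bounded part (the convex hull of its vertices) is $\operatorname{face}_{\mathcal P}(y)$ restricted to the vertices of $\mathcal P$ that survive as vertices of $\tilde{\mathcal P}$. This last point is the delicate one; see below.

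Given the decomposition, $y\sim_{\tilde{\mathcal P}}y'$ iff $\operatorname{face}_{\mathcal P}(y)$ and $\operatorname{face}_{\mathcal P}(y')$ agree on the surviving vertices and $C^\circ\cap y^\perp=C^\circ\cap y'^\perp$. The second condition does not depend on the polytope at all. So the partition induced by $\tilde{\mathcal P}$ on $C$ coincides with the one induced by $\tilde{\mathcal Q}$, and therefore $\mathcal N(\tilde{\mathcal P})=\mathcal N(\tilde{\mathcal Q})$.

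\textbf{Converse.} Assume $\mathcal N(\tilde{\mathcal P})=\mathcal N(\tilde{\mathcal Q})$.

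For $y\in\operatorname{int}C$, $C^\circ\cap y^\perp=\{0\}$, because $C^\circ$ is pointed and $y$ is strictly negative on $C^\circ\setminus\{0\}$. So $\operatorname{face}_{\tilde{\mathcal P}}(y)=\operatorname{face}_{\mathcal P}(y)$, and the restricted fans agree on the interior of $C$. Each cone of $\mathcal N(\mathcal P)|_C$ is closed and full-dimensional pieces meet $\operatorname{int}C$ densely. Lower-dimensional cones are intersections of vertex cones. I will therefore recover each $\mathcal N_{\mathcal P}(v)\cap C$ as the closure of its interior part, and then recover every other restricted cone by intersecting these.

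\textbf{Main obstacle.} The hard step is the boundary of $C$ in the forward direction. There, vertices of $\mathcal P$ may be swallowed by $C^\circ$: a vertex whose normal cone meets $C$ only in $\partial C$ is not a vertex of $\tilde{\mathcal P}$. Such vertices can then produce distinct restricted cones for $\mathcal P$ that merge in $\tilde{\mathcal P}$.

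I will handle this with one observation. A vertex $v$ of $\mathcal P$ remains a vertex of $\tilde{\mathcal P}$ iff $\mathcal N_{\mathcal P}(v)\cap\operatorname{int}C\neq\emptyset$. Whether this holds is visible from the restricted fan, so it transfers from $\mathcal P$ to $\mathcal Q$. I then define $\sim$ using only these surviving vertices, which is exactly what the face of $\tilde{\mathcal P}$ sees. This makes the two fans agree, and the same reasoning applied to $\tilde{\mathcal P}$ (which sees only surviving data) closes the converse as well.
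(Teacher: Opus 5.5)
\textbf{Forward direction.} This part is correct, but it takes a longer route than the paper's. The paper intersects normal cones directly:
\[
N_{\tilde{\mathcal P}}(\operatorname{face}_{\tilde{\mathcal P}}(y))=N_{\mathcal P}(\operatorname{face}_{\mathcal P}(y))\cap N_{C^\circ}(\operatorname{face}_{C^\circ}(y)),
\]
and then uses $N_{C^\circ}(\cdot)\subseteq C$. You instead describe each face of $\tilde{\mathcal P}$ by its pair (surviving vertices, recession cone). That works once the transfer to $\mathcal Q$ is made precise. The surviving vertices are in bijection with the full-dimensional members of $\mathcal N(\mathcal P)|_C$, because these members have pairwise disjoint interiors. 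A surviving $v$ lies in $\operatorname{face}_{\mathcal P}(y)$ iff $y$ belongs to the corresponding member. You should drop the ``bijection between nonempty restricted vertex cones'': every restricted cone contains $0$, and non-surviving vertices can share restricted cones with each other or with higher faces. Your route does buy something. It shows that $\mathcal N(\tilde{\mathcal P})$ depends only on the surviving-vertex cones, and that is exactly why the converse breaks.

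\textbf{Converse.} Here there is a genuine gap. Recovering $\mathcal N_{\mathcal P}(v)\cap C$ ``as the closure of its interior part'' fails for every non-surviving vertex: its restricted cone lies in $\partial C$ and has empty interior. That cone, and every face cone it enters by intersection, is a member of $\mathcal N(\mathcal P)|_C$ that $\tilde{\mathcal P}$ never sees, as your own ``main obstacle'' paragraph observes. No argument can close this step, because the converse is false as stated. Take $C=\mathbb R^2_+$, so $C^\circ=\mathbb R^2_-$, and
\[
\mathcal P=\operatorname{conv}\{(0,0),(-1,-1)\},\qquad \mathcal Q=\operatorname{conv}\{(0,0),(-1,0)\}.
\]
Then $\tilde{\mathcal P}=\tilde{\mathcal Q}=\mathbb R^2_-$, so $\mathcal N(\tilde{\mathcal P})=\mathcal N(\tilde{\mathcal Q})$. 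However, $\mathcal N(\mathcal P)|_C=\{C,\{0\}\}$ while $\mathcal N(\mathcal Q)|_C=\{C,\operatorname{cone}(e_2)\}$, since $\mathcal N_{\mathcal Q}((-1,0))\cap C=\{y\in C: y_1\le 0\}$. The best-response sets at $y=e_2$ also differ, so this is not an artifact of the definition.

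The paper's proof has the same gap. It asserts that each vertex normal cone meets $\operatorname{int}(C)$, which fails for $(-1,-1)$ above. It also treats the restricted collection as a fan determined by its maximal cones, but in the example both collections have the same unique maximal member $C$.

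The converse does hold if every vertex of $\mathcal P$ and of $\mathcal Q$ is exposed by a direction in $\operatorname{int}(C)$, the hypothesis of \cref{lem:ext_points_polarization}(ii). This is all that \cref{thm:ubr_characterization} needs, via non-dominance and the fact that vertices of $\operatorname{conv}(\operatorname{ext}Q_i)$ are exposed from $\operatorname{int}(C)$. Under that hypothesis the argument runs as follows:
\begin{enumerate}
\item $\operatorname{ext}\tilde{\mathcal P}=\operatorname{ext}\mathcal P$, with vertex cones $\mathcal N_{\mathcal P}(v)\cap C$, and likewise for $\mathcal Q$.
\item Equal fans of $\tilde{\mathcal P}$ and $\tilde{\mathcal Q}$ therefore match the vertices of $\mathcal P$ and $\mathcal Q$.
\item For each face $F$, $\mathcal N_{\mathcal P}(F)\cap C=\bigcap_{v\in\operatorname{ext}F}(\mathcal N_{\mathcal P}(v)\cap C)$.
\item This equals the restricted cone of the smallest face of $\mathcal Q$ containing the matched vertices, which gives $\mathcal N(\mathcal P)|_C\subseteq\mathcal N(\mathcal Q)|_C$; the reverse inclusion is symmetric.
\end{enumerate}
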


\begin{proof}
Let $\tilde{\mathcal P} := \mathcal P + C^\circ$ and $\tilde{\mathcal Q} := \mathcal Q + C^\circ$.
Since $\operatorname{rec}(\tilde{\mathcal P})=\operatorname{rec}(\tilde{\mathcal Q})=C^\circ$, their support functions $h_P(y) := \sup_{x \in P} \langle y, x \rangle$ are finite exactly on $(C^\circ)^\circ = C$, in particular
\[
h_{C^\circ}(y)=
\begin{cases}
0, & y\in C,\\
+\infty, & y\notin C.
\end{cases}
\]
Hence for any $y\in C$,
\[
h_{\tilde{\mathcal P}}(y)=h_{\mathcal P}(y)+h_{C^\circ}(y)=h_{\mathcal P}(y),
\qquad
h_{\tilde{\mathcal Q}}(y)=h_{\mathcal Q}(y)+h_{C^\circ}(y)=h_{\mathcal Q}(y).
\]
Moreover, for every $y\in C$, 
\begin{align}
\label{eq:sum_faces}
\operatorname{face}_{\tilde{\mathcal P}}(y)
=
\operatorname{face}_{\mathcal P}(y)+\operatorname{face}_{C^\circ}(y),
\qquad
\operatorname{face}_{\tilde{\mathcal Q}}(y)
=
\operatorname{face}_{\mathcal Q}(y)+\operatorname{face}_{C^\circ}(y).
\end{align}

We first want to show that :  
\[ 
N_{\tilde{\mathcal P}}\bigl(\operatorname{face}_{\tilde{\mathcal P}}(y)\bigr)=N_{\tilde{\mathcal Q}}\bigl(\operatorname{face}_{\tilde{\mathcal Q}}(y)\bigr) \quad \forall y\in C
\]
To do so, we make the following observations. First, using the standard normal-cone identity for Minkowski sums (applied to $F=\operatorname{face}_{\mathcal P}(y)$ and $G=\operatorname{face}_{C^\circ}(y)$), 
\[
N_{\mathcal P+C^\circ}(F+G)=N_{\mathcal P}(F)\cap N_{C^\circ}(G),
\]
We obtain from \cref{eq:sum_faces}: 
\begin{align}
    \label{eq: factor cone}
    N_{\tilde{\mathcal P}}\bigl(\operatorname{face}_{\tilde{\mathcal P}}(y)\bigr) =
    N_{\mathcal P}\bigl(\operatorname{face}_{\mathcal P}(y)\bigr)
    \cap
    N_{C^\circ}\bigl(\operatorname{face}_{C^\circ}(y)\bigr),\quad \forall y \in C
\end{align}
and similarly for $\tilde{\mathcal Q}$.

Second, by assumption $\mathcal N(\mathcal P)\mid_C=\mathcal N(\mathcal Q)\mid_C$, the cone collections on $C$ coincide.
In particular, for every $y\in C$,
\begin{align}
N_{\mathcal P}\bigl(\operatorname{face}_{\mathcal P}(y)\bigr)\cap C
=
N_{\mathcal Q}\bigl(\operatorname{face}_{\mathcal Q}(y)\bigr)\cap C.
\end{align}

This is not enough since we want to show full equality (without the restriction to $C$). For this we observe that since the normal fan of $C^\circ$ partitions $(C^\circ)^\circ=C$, we have
\begin{align}
N_{C^\circ}\bigl(\operatorname{face}_{C^\circ}(y)\bigr)\subseteq C
\qquad \text{for all }y\in C.
\end{align}
Therefore, intersecting the left hand side of \cref{eq: factor cone} with $C$ does not change the Minkowski-sum identity, and we get 
\begin{align*}
N_{\tilde{\mathcal P}}\bigl(\operatorname{face}_{\tilde{\mathcal P}}(y)\bigr)
&= \left( N_{\mathcal P}\bigl(\operatorname{face}_{\mathcal P}(y)\bigr)\cap C \right)
\cap N_{C^\circ}\bigl(\operatorname{face}_{C^\circ}(y)\bigr) \\
&= \left( N_{\mathcal Q}\bigl(\operatorname{face}_{\mathcal Q}(y)\bigr)\cap C \right)
\cap N_{C^\circ}\bigl(\operatorname{face}_{C^\circ}(y)\bigr) \\
&= N_{\tilde{\mathcal Q}}\bigl(\operatorname{face}_{\tilde{\mathcal Q}}(y)\bigr),
\qquad \forall y\in C.
\end{align*}

Finally, since all normal cones of $\tilde{\mathcal P}$ and $\tilde{\mathcal Q}$ are contained in $C$ , and C is a closed cone, the above equality for all $y\in C$ implies
\[
\mathcal N(\tilde{\mathcal P})=\mathcal N(\tilde{\mathcal Q}),
\]
and thus $\tilde{\mathcal P}$ and $\tilde{\mathcal Q}$ are normally equivalent polyhedra.

\smallskip

\noindent\emph{Conversely, assume $\mathcal N(\tilde{\mathcal P})=\mathcal N(\tilde{\mathcal Q})$.}
Since $C$ is full-dimensional, for any $y\in\operatorname{int}(C)$ we have $\operatorname{face}_{C^\circ}(y)=\{0\}$ and $N_{C^\circ}(\{0\})=C$.
Hence for all $y\in\operatorname{int}(C)$,
\[
N_{\tilde{\mathcal P}}\bigl(\operatorname{face}_{\tilde{\mathcal P}}(y)\bigr)
=
N_{\mathcal P}\bigl(\operatorname{face}_{\mathcal P}(y)\bigr)\cap C,
\qquad
N_{\tilde{\mathcal Q}}\bigl(\operatorname{face}_{\tilde{\mathcal Q}}(y)\bigr)
=
N_{\mathcal Q}\bigl(\operatorname{face}_{\mathcal Q}(y)\bigr)\cap C.
\]
Therefore, $\mathcal N(\tilde{\mathcal P})=\mathcal N(\tilde{\mathcal Q})$ implies that for all $y\in\operatorname{int}(C)$,
\[
N_{\mathcal P}\bigl(\operatorname{face}_{\mathcal P}(y)\bigr)\cap C
=
N_{\mathcal Q}\bigl(\operatorname{face}_{\mathcal Q}(y)\bigr)\cap C.
\]
Since $\mathcal N(\mathcal P)\mid_C$ and $\mathcal N(\mathcal Q)\mid_C$ are polyhedral fans supported on $C$,
equality on $\operatorname{int}(C)$ implies $\mathcal N(\mathcal P)\mid_C=\mathcal N(\mathcal Q)\mid_C$.
Indeed, each maximal cone of a fan supported on $C$ has nonempty intersection with $\operatorname{int}(C)$, and the fan is uniquely determined by its maximal cones.
Thus, equality of the normal-cone maps on $\operatorname{int}(C)$ forces the maximal cones of $\mathcal N(\mathcal P)\mid_C$ and $\mathcal N(\mathcal Q)\mid_C$ to coincide, hence the fans coincide.

Equivalently, one may argue at the level of vertices: for a polytope, every vertex is exposed, and when $C$ is full-dimensional each vertex normal cone intersects $\operatorname{int}(C)$.
Knowing the cones on $\operatorname{int}(C)$ therefore determines the restricted normal cones at vertices, and the restricted normal cones of all faces follow from the identity
\[
N_{\mathcal P}(F)=\bigcap_{v\in\mathrm{Vert}(F)} N_{\mathcal P}(v),
\]
after intersecting with $C$.

\end{proof}

\subsection{Explicit normal-equivalence of polyhedra in $\mathcal H$-representation}
\label{proof:h_rep_eq}

We say that two polyhedra are \emph{normally equivalent} if they share the same normal fan, and we denote by $\mathrm{cl}(P)$ the set of all polyhedra that are normally equivalent to $P$.
From the \cref{thm:ubr_characterization}, we obtained a complete characterization of the invariance set up to normal equivalence. The remaining question is therefore how to explicitly describe the collection of all polyhedra that are normally equivalent to the polarized polyhedron $\tilde P$.

\begin{proposition}
Let  
\[
P=\{y\in\mathbb R^d \mid \langle n_i,y\rangle \le h_i^P \ \forall i\in\llbracket1,M\rrbracket\},
\]
and let  
\[
Q^P(h)=\{y\in\mathbb R^d \mid \langle n_i,y\rangle \le h_i \ \forall i\in\llbracket1,M\rrbracket\}.
\]
Then
\[
\mathrm{cl}(P)=\{Q^P(h)\mid h\in C_P\},
\]
where $C_P$ is a set that we fully characterize in the proof.
\end{proposition}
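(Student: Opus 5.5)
The plan is to realize $C_P$ as the open "type cone" (in the sense of McMullen and the secondary-fan literature) of the fixed $\mathcal H$-representation. Without loss of generality I assume the representation of $P$ is irredundant: each $n_i$ is the outer normal of a facet $F_i$ of $P$. Then every normal cone of $P$ is generated by a subset of $\{n_i\}$. Every polyhedron $Q$ normally equivalent to $P$ has the same facet normals, since the rays of the common fan are exactly the $n_i$. It also has the same recession cone, namely the polar of the support of the fan. Hence $Q$ can be written as $Q=Q^P(h)$ with $h_i=h_Q(n_i)$. This gives the inclusion $\mathrm{cl}(P)\subseteq\{Q^P(h)\}$. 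The task is then to describe which $h$ are admissible.

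\textbf{Defining $C_P$ by vertex conditions.} I work with the combinatorics of $P$. For each vertex $v$ of $P$, let $I(v)=\{i:\langle n_i,v\rangle=h^P_i\}$ be its active facets. Let $\mathcal B(v)\subseteq I(v)$ range over the bases of these normals; in the simple case $|I(v)|=d$ there is a unique basis. I then define $C_P$ as the set of $h\in\mathbb R^M$ satisfying two conditions.
\begin{itemize}
\item[(a)] \emph{Equalities.} For each vertex $v$, the system $\langle n_i,y\rangle=h_i$, $i\in I(v)$, is consistent. Equivalently, every linear dependence $\sum_{i\in I(v)}\mu_i n_i=0$ satisfies $\sum_{i\in I(v)}\mu_i h_i=0$. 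When this holds, let $y_v(h)$ denote the unique solution.
\item[(b)] \emph{Strict inequalities.} For each vertex $v$ and each $j\notin I(v)$, we have $\langle n_j,y_v(h)\rangle<h_j$.
\end{itemize}
Because $y_v(h)$ is linear in $h$, these conditions describe a relatively open polyhedral cone, and $h^P\in C_P$. It suffices to check the strict inequalities only for $j$ adjacent to $v$ through an edge of $P$ (wall-crossing). I will record this reduction as a remark, but I will not need it.

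\textbf{Sufficiency.} Suppose $h\in C_P$. I show that $y_v(h)$ is a vertex of $Q^P(h)$ whose active set is exactly $I(v)$. Then $N_{Q}(y_v(h))=\operatorname{cone}\{n_i:i\in I(v)\}=N_P(v)$. These maximal cones cover the support of $\mathcal N(P)$, so every linear functional in that support is maximized on $Q$ at some $y_v(h)$. It follows that $Q$ has no further vertices, and its recession cone equals that of $P$: both are determined by the sign constraints $\langle n_i,r\rangle\le0$, which do not depend on $h$. Lower-dimensional cones are obtained by intersecting vertex cones, exactly as in the proof of \cref{lem:Zero_char}. This gives $\mathcal N(Q)=\mathcal N(P)$.

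\textbf{Necessity.} Suppose $\mathcal N(Q^P(h))=\mathcal N(P)$. For each maximal cone $N_P(v)$, pick $y$ in its interior. Then $\operatorname{face}_Q(y)$ is a vertex $w$ whose normal cone is $N_P(v)$. Hence the facets of $Q$ active at $w$ are exactly those with normals in $I(v)$, which gives (a) with $w=y_v(h)$, and strictness for all other $j$ gives (b). For unbounded $\tilde P=\mathcal P+C^\circ$ I also need rays. These carry over automatically, because the recession cone is $\{r:\langle n_i,r\rangle\le0\}$ independently of $h$, and this cone is $C^\circ$ whenever the fan's support is $C$.

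\textbf{Main obstacle.} I expect the main difficulty to be the degenerate (non-simple) vertices, together with the possibility that some inequality is redundant for $Q^P(h)$ even though it is not redundant for $P$. Condition (a) controls the first issue. For the second, strictness in (b) has to be combined with the requirement that each $n_i$ remains a facet normal. I will handle this by noting that $n_i$ lies in $N_Q(y_v(h))$ for every $v\in F_i$, so $F_i$ persists as the face of $Q$ exposed by $n_i$, and it has the same dimension because its normal cone is the same ray.
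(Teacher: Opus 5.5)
Your proof is correct and follows the same skeleton as the paper's. Normally equivalent polyhedra share the facet normals $n_i$, which you justify directly where the paper cites a classical lemma. So only $h$ varies, and admissibility of $h$ is read off from vertex active sets.

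\textbf{The paper's version.} It assumes $P$ simple and works with the family $\mathcal I^*(h)$ of $d$-subsets $I$ with $N_I$ invertible and $N_I^{-1}h_I$ feasible. It takes $C_P$ to be the set of $h$ with $\mathcal I^*(h)=\mathcal I^*(h^P)$. This is written as strict feasibility of every vertex basis together with strict infeasibility of every other invertible basis. It argues only one direction: a normally equivalent $Q$ must have such an $h$.

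\textbf{What your route adds.}
\begin{itemize}
\item The consistency equalities (a) cover degenerate vertices, which the paper excludes by genericity. These genuinely constrain $h$: the apex of a square pyramid must not split into an edge.
\item You actually prove sufficiency, via the covering of $(\operatorname{rec}P)^\circ$ by the cones $N_P(v)=N_Q(y_v(h))$.
\item That covering also shows the paper's second family of conditions is implied by the first. A feasible $N_I^{-1}h_I$ would be a vertex of $Q^P(h)$, hence equal to some $y_v(h)$ with $I\subseteq I(v)$.
\end{itemize}

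\textbf{One step needs repair.} For unbounded polyhedra such as $\tilde{\mathcal P}_i$, the identity $N(F)=\bigcap_{v\in\operatorname{ext}(F)}N(v)$ from the proof of \cref{lem:Zero_char} fails on unbounded faces. For example, in $\operatorname{conv}\{(0,8),(8,0)\}+\mathbb R^2_-$ the ray face $\{(t,8): t\le 0\}$ has normal cone $\operatorname{cone}(e_2)$. This is strictly smaller than the normal cone at its only vertex. Instead, use the fact that every cone in the normal fan of a pointed polyhedron is a face of one of its vertex cones. Hence equal maximal cones already give equal fans.
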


\begin{proof}
Let us denote by $\mathcal I^*(h^P)\subset\binom{[M]}{d}$ the collection of index sets such that  

\[
\forall I\in\mathcal I^*, \quad N_I=(n_i)_{i\in I} \ \text{is invertible and} \ N_I^{-1}h_I\in P.
\]

The set $\mathcal I^*(h^P)$ defines the vertices of $P$.

Under a genericity assumption, $P$ is simple, and we may rewrite the condition  
$N_I^{-1}h_I\in P$ as

\[
\langle n_i, N_I^{-1}h_I\rangle < h_i \quad \forall i\notin I.
\]

We now use the following classical result .

\begin{lemma}
Let $Q$ be normally equivalent to $P$.  
Then, up to a reordering of indices, $Q$ admits an $\mathcal H$-representation of the form
\[
Q=\{y\in\mathbb R^d:\langle n_i,y\rangle\le h_i,\ i\in\llbracket1,M\rrbracket\}
\]
for some $h\in\mathbb R^M$.
\end{lemma}

The statement remains valid when $P$ is not full-dimensional by working in the affine hull of $P$.

From the previous lemma, all polyhedra normally equivalent to $P$ are of the form $Q(h)$.  
Since $Q$ and $P$ have the same normal fan, they have the same maximal cones, and therefore

\[
\{\operatorname{cone}(n_i)_{i\in I}\mid I\in\mathcal I^*(h^P)\}
=
\{\operatorname{cone}(n_i)_{i\in I}\mid I\in\mathcal I^*(h^Q)\}.
\]

This implies, up to a reordering of indices, $\mathcal I^*(h^Q)=\mathcal I^*(h^P).$ Hence $h^Q$ must satisfy

\[
\forall I\in\mathcal I^*(h^P), \quad 
\langle n_i, N_I^{-1}h_I\rangle < h_i \quad \forall i\notin I,
\]

and

\[
\forall I\in\binom{[M]}{d}\setminus\mathcal I^(h^P), \ \text{such that } N_I \text{ is invertible}, \ 
\exists i\in\llbracket1,M\rrbracket \text{ such that }
\langle n_i, N_I^{-1}h_I\rangle > h_i.
\]

\end{proof}

\subsection{Extreme points polarization}
\label{proof:ext_points_polarization}

\begin{lemma}
\label{lem:ext_points_polarization}
Let $P\subset\mathbb R^d$ be polytope and let $C\subset\mathbb R^d$ be a closed full dimensional cone. 
Define $\tilde P:=P+C^\circ$.

\smallskip
(i) One always has $\operatorname{ext}(\tilde P)\subseteq \operatorname{ext}(P)$.

\smallskip
(ii) Moreover, if every vertex of $P$ is exposed by some direction in $\operatorname{int}(C)$, i.e.,
\[
\forall v\in\operatorname{ext}(P)\ \exists y\in \operatorname{int}(C) \text{ such that } \operatorname{face}_P(y)=\{v\},
\]
then $\operatorname{ext}(\tilde P)=\operatorname{ext}(P)$.

\end{lemma}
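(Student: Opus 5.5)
For part (i), I will use the standard characterization of extreme points of a Minkowski sum. Let $z\in\operatorname{ext}(\tilde P)$ and write $z=p+c$ with $p\in P$, $c\in C^\circ$. I will show $c=0$ and that $p$ is extreme in $P$.

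First, suppose $c\neq 0$. Then $z=\tfrac12(p)+\tfrac12(p+2c)$, and both $p$ and $p+2c$ lie in $\tilde P$, since $0$ and $2c$ belong to the cone $C^\circ$. These two points are distinct, which contradicts extremality of $z$. Hence $c=0$ and $z=p\in P$.

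Next, if $p=\tfrac12(p_1+p_2)$ with distinct $p_1,p_2\in P\subseteq\tilde P$, extremality in $\tilde P$ is again contradicted. So $z\in\operatorname{ext}(P)$, which gives (i). The only facts used are $0\in C^\circ$ and that $C^\circ$ is closed under positive scaling.

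For part (ii), I take $v\in\operatorname{ext}(P)$ and $y\in\operatorname{int}(C)$ with $\operatorname{face}_P(y)=\{v\}$. I will show that $y$ exposes $v$ in $\tilde P$, i.e.\ $\operatorname{face}_{\tilde P}(y)=\{v\}$; an exposed point is extreme.

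The key fact is that for $y\in\operatorname{int}(C)$ and $c\in C^\circ\setminus\{0\}$ we have $\langle y,c\rangle<0$. By definition of the polar, $\langle y',c\rangle\le 0$ for all $y'\in C$. If $\langle y,c\rangle=0$, then perturbing $y$ to $y+\delta c\in C$ for small $\delta>0$ gives $\langle y+\delta c,c\rangle=\delta\|c\|^2>0$, a contradiction.

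Now take any $z=p+c\in\tilde P$. Then
\[
\langle y,z\rangle=\langle y,p\rangle+\langle y,c\rangle\le\langle y,v\rangle,
\]
with equality only if $\langle y,c\rangle=0$ and $\langle y,p\rangle=\langle y,v\rangle$. These force $c=0$ and $p=v$, so $z=v$. Thus $v$ is the unique maximizer of $\langle y,\cdot\rangle$ over $\tilde P$, hence $v\in\operatorname{ext}(\tilde P)$. Combined with (i), this gives $\operatorname{ext}(\tilde P)=\operatorname{ext}(P)$.

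The main point of care is the strict inequality $\langle y,c\rangle<0$, which relies on $y$ lying in the interior of $C$ rather than on its boundary. This is exactly why the hypothesis in (ii) asks for exposing directions in $\operatorname{int}(C)$. Everything else is routine convexity.
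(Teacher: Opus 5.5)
Your proof is correct and follows the paper's argument essentially step for step: the midpoint decomposition $z=\tfrac12 p+\tfrac12(p+2c)$ for (i), and, for (ii), showing that the same direction $y\in\operatorname{int}(C)$ exposes $v$ in $\tilde P$. The only difference is that you prove $\langle y,c\rangle<0$ and the unique-maximizer property directly, whereas the paper asserts the strict inequality and invokes the face identity $\operatorname{face}_{P+C^\circ}(y)=\operatorname{face}_P(y)+\operatorname{face}_{C^\circ}(y)$.
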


\begin{proof}
(i) Let $v\in\operatorname{ext}(\tilde P)$. Write $v=p+k$ with $p\in P$ and $k\in C^\circ$.  
If $k\neq 0$, then $p\in\tilde P$ (since $0\in C^\circ$) and $p+2k\in\tilde P$ (since $2k\in C^\circ$) are different, and
\[
v=p+k=\tfrac12\,p+\tfrac12\,(p+2k),
\]
which contradicts that $v$ is extreme in $\tilde P$. Hence $k=0$ and $v=p\in P$.
If $p\notin\operatorname{ext}(P)$, then $p$ is a nontrivial convex combination of two distinct points in $P\subseteq \tilde P$, again contradicting that $v$ is extreme in $\tilde P$. 
Therefore $v\in\operatorname{ext}(P)$, proving $\operatorname{ext}(\tilde P)\subseteq \operatorname{ext}(P)$.

\smallskip
\smallskip
(ii) Let $v\in\operatorname{ext}(P)$ and pick $y\in \operatorname{int}(C)$ such that $\operatorname{face}_P(y)=\{v\}$.
Since $y\in \operatorname{int}(C)$, we have $\langle y,k\rangle<0$ for all $k\in C^\circ\setminus\{0\}$, hence
\[
\operatorname{face}_{C^\circ}(y)=\{0\}\qquad\text{and}\qquad h_{C^\circ}(y)=0.
\]
Therefore,
\[
\operatorname{face}_{\tilde P}(y)
=
\operatorname{face}_{P+C^\circ}(y)
=
\operatorname{face}_{P}(y)+\operatorname{face}_{C^\circ}(y)
=
\{v\},
\]
so $v\in\operatorname{ext}(\tilde P)$. Thus $\operatorname{ext}(P)\subseteq \operatorname{ext}(\tilde P)$, and together with (i) we get
$\operatorname{ext}(\tilde P)=\operatorname{ext}(P)$.

\end{proof}

\section{Omitted proofs from \cref{sec:learning_alg}}
\label{appx:omitted_proofs_learning_alg}
\subsection{Proof of \cref{lem:sign pattern}}
\label{proof:sign pattern}
\begin{lemma}\label{lem:sign pattern}
\Cref{alg:sign_pattern} learns the sign pattern sets $\mathcal{P}_{i}(a_i, a_i')$ and $\mathcal{N}_{i}(a_i, a_i')$ for a fixed agent $i$ and every $a_i \neq a_i \in A_i$ with $|A_{-i}|$ recommendations $\bx$.  
\end{lemma}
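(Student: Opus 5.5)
The plan is a direct verification of what each query in \cref{alg:sign_pattern} reveals, followed by a count of the queries. Fix agent $i$ and index the opponent profiles $A_{-i}$ by $k=1,\dots,d_i$. For each $k$ I will first check that the prescribed $\bx^{(k)}$ is a valid recommendation mechanism. It places mass $1/m_i$ on each profile $(a_i,\ba_{-i}^{k})$ for $a_i\in A_i$ and zero elsewhere. Its entries are nonnegative and sum to $m_i\cdot\frac{1}{m_i}=1$, so $\bx^{(k)}\in\Delta(A)$. It also satisfies $\bx^{(k)}(a_i,\cdot)=\frac{1}{m_i}\be_k$ simultaneously for every $a_i\in A_i$. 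Every $a_i$ has positive marginal probability $1/m_i$, so with enough samples each $a_i$ is recommended and the feedback set $QR_i(\bx^{(k)},a_i)$ is observed for all $a_i$. Here I take the QR sets as observable, as the paper does via \cref{lem:sampling_complexity}.

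The key step is the identity
\[
\varphi_i(a_i,a_i',\bx^{(k)})=\bigl\langle \tfrac{1}{m_i}\be_k,\bw_i(a_i,a_i')\bigr\rangle=\tfrac{1}{m_i}\bigl(\bw_i(a_i,a_i')\bigr)_k .
\]
By \cref{def:quantal_response}, $a_i'\in QR_i(\bx^{(k)},a_i)$ holds iff $(\bw_i(a_i,a_i'))_k\ge 0$, that is, iff $k\in\mathcal P_i(a_i,a_i')$. Otherwise $(\bw_i(a_i,a_i'))_k<0$, so $k\in\mathcal N_i(a_i,a_i')$. This matches exactly the if/else branch of the algorithm, which adds $k$ to $\mathcal P_i$ or $\mathcal N_i$. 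Since this holds for every ordered pair $a_i\neq a_i'$ in a single query, after looping over $k=1,\dots,d_i$ every coordinate is classified. Hence the output sets equal the true sign-pattern sets.

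The count is immediate: one mechanism per coordinate $k$, so $d_i=|A_{-i}|$ mechanisms in total. The only subtle point I expect is one of interpretation, namely that "one recommendation $\bx$" means one mechanism, from which several sampled recommended actions $a_i$ may be drawn. I will state this convention explicitly.
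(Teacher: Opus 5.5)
Your proposal is correct and follows essentially the same argument as the paper. Both construct $\bx^{(k)}$ with $\bx^{(k)}(a_i,\cdot)=\frac{1}{m_i}\be_k$ for every $a_i$ and compute $\varphi_i(a_i,a_i',\bx^{(k)})=\frac{1}{m_i}(\bw_i(a_i,a_i'))_k$, concluding that $a_i'\in QR_i(\bx^{(k)},a_i)$ holds exactly when $k\in\mathcal P_i(a_i,a_i')$. Your explicit checks that $\bx^{(k)}\in\Delta(A)$ and that each $a_i$ is recommended with positive probability are slightly more careful than the paper's normalization step, but they add no new idea.
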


\begin{proof}
    We first argue that is possible to construct $\bx^{(k)}$. Fix an order of the elements of the set $A_{-i} = [a_{-i}^1, a_{-i}^2, \dots]$. Construct $\tilde{\bx}^{(k)}$ such that $\tilde{\bx}(a_i, a_{-i}^k) = 1$ and $\tilde{\bx}(a_i, a_{-i}^{k'}) = 0$ for all $k' \neq k$. Then we normalize to obtain or distribution $\bx^{(k)} = \tilde{\bx}^{(k)}/ \|\tilde{\bx}^{(k)}\|_1$.
    By definition of the quantal-response feedback, 
    \[a_i' \in QR_i(\bx^{(k)},a_i)\iff \langle \bx^{(k)}(a_i,\cdot),\bw_i(a_i,a_i')\rangle\ge0.
    \]
    Since $\bx^{(k)} (a_i, \cdot) = e_k/m$, we have $\langle \bx^{(k)}(a_i,\cdot),\bw_i(a_i,a_i')\rangle = \frac{1}{m}(\bw_i(a_i,a_i'))_k$,
hence $a_i' \in QR_i(\bx^{(k)},a_i)\iff(\bw_i(a_i,a_i'))_k\ge0$, which proves the claim.
\end{proof}

\subsection{Proof of \cref{lem:sampling_complexity}}
\label{proof:sampling_complexity}
\begin{lemma}
Consider the game $G(\{A_i\}_{i=1}^n,\{u_i\}_{i=1}^n)$ and assume that all utility differences are uniformly bounded away from zero and infinity, 
\(
c \;\le\; \big|u_i(a_i',\ba_{-i}) - u_i(a_i,\ba_{-i})\big| \;\le\; C 
\)
for some constants $0 < c < C$, for all agents $i \in [n]$, action pairs $a_i \neq a_i' \in A_i$, and profiles $\ba_{-i} \in A_{-i}$. Fix an agent $i$ and a pair of actions $a_i \neq a_i'$. Then for each recommendation mechanism $\bx$ constructed as in \cref{alg:binary_ratio} the moderator needs at most 
\[m_i e^{\beta C} \ln(1/\delta)\]
samples to verify if $a_i' \in QR_i(\bx, a_i)$ with probability at least $1-\delta$. 
\end{lemma}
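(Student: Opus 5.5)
The plan is a one-sided sampling test. It rests on a lower bound on the probability $P_{\bx}(a_i'\mid a_i)$ whenever $a_i'\in QR_i(\bx,a_i)$. The moderator repeatedly recommends $a_i$ under the fixed mechanism $\bx$ built in \cref{alg:binary_ratio}. It declares $a_i'\in QR_i(\bx,a_i)$ as soon as $a_i'$ is observed, and declares $a_i'\notin QR_i(\bx,a_i)$ if $a_i'$ has not appeared after $S$ samples.

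The test has one-sided error. If $a_i'\notin QR_i(\bx,a_i)$, then by \eqref{eq:quantal_prob} its probability is exactly zero, so the test never errs in that case. The only failure event is that $a_i'\in QR_i(\bx,a_i)$ but is never drawn in $S$ independent samples. That event has probability $(1-p)^S\le e^{-pS}$, where $p=P_{\bx}(a_i'\mid a_i)$. Setting $S=\ln(1/\delta)/p_{\min}$ therefore gives confidence $1-\delta$, and it remains to show $p_{\min}\ge e^{-\beta C}/m_i$.

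To bound $p$, I will write it explicitly from \eqref{eq:quantal_prob}:
\[
p=\frac{\exp(\beta\varphi_i(a_i,a_i',\bx))}{\sum_{a''\in QR_i(\bx,a_i)}\exp(\beta\varphi_i(a_i,a'',\bx))}.
\]
I treat the recommended action $a_i$ itself as lying in the support with $\varphi=0$. The numerator is at least $1$, since $\varphi_i(a_i,a_i',\bx)\ge0$ on the support. The denominator has at most $m_i$ terms. Each term is at most $\exp(\beta\max_{a''}\varphi_i(a_i,a'',\bx))$.

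The key step is bounding $\varphi_i$ by $C$. For the mechanisms of \cref{alg:binary_ratio}, $\bx(a_i,\cdot)$ is a normalized version of $\be_p+\tau\be_j$ with nonnegative entries summing to at most one. Hence $\varphi_i(a_i,a'',\bx)=\langle\bx(a_i,\cdot),\bw_i(a_i,a'')\rangle\le\|\bx(a_i,\cdot)\|_1\,\|\bw_i(a_i,a'')\|_\infty\le C$ by the standing assumption. This gives $p\ge 1/(m_ie^{\beta C})$ and hence $S=m_ie^{\beta C}\ln(1/\delta)$, as claimed.

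I expect the main obstacle to be the normalization. After normalizing $\tilde\bx/\|\tilde\bx\|_1$, the slice $\bx(a_i,\cdot)$ has total mass at most one, and that is what the bound uses. I need to check that this holds for the construction in \cref{alg:binary_ratio}; if the slice carried larger mass, it would have to be rescaled. A boundary subtlety is the case $\varphi_i(a_i,a_i',\bx)=0$ exactly. There the numerator is still $e^0=1$, so the bound is unaffected. A union bound across the binary-search queries is left to the caller.
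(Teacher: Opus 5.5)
Your proposal is correct and follows essentially the same route as the paper: you lower-bound $P_{\bx}(a_i'\mid a_i)\ge e^{-\beta C}/m_i$ using $\varphi_i\le C$ on the normalized slice and at most $m_i$ terms in the normalizing sum, then apply $(1-p)^S\le e^{-pS}$ to a test whose only possible error is one-sided. Your H\"older justification of $\varphi_i(a_i,a'',\bx)\le C$ for every $a''$ is slightly more careful than the paper's; the paper additionally states explicitly what your procedure assumes implicitly, namely that the slice $\bx(a_i,\cdot)$ has mass exactly one, so every draw from $\bx$ recommends $a_i$ to agent $i$.
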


\begin{proof}
    First note that the distributions $\bx$ constructed in \cref{alg:binary_ratio} assign probability $1$ to samples $\ba \sim \bx$ with the action $a_i$. So the sample complexity comes entirely from the quantal-response choice model of the agent in \cref{eq:quantal_prob}. 

    By the construction of $\bx$ in \cref{alg:binary_ratio}, for positive and negative indices $(p, j)$ we have: 
    \begin{align}
        \varphi_{i}(a_i, a_i', \bx) = \frac{(\bw_{i}(a_i, a_i'))_p + \tau (\bw_{i}(a_i, a_i'))_j}{1+\tau} \leq C
    \end{align}

   Let $Y_1, \dots, Y_T$ be iid samples from $P_{\bx}(\cdot |a_i)$ (cf. \cref{eq:quantal_prob}). If $a_i' \in QR_{i}(\bx, a_i)$, then 
   \begin{align}
       P_{\bx}(a_i'|a_i) = \frac{e^{\beta\varphi(a_i, a_i', \bx)}}{\sum_{b \in QR(\bx, a_i)} \exp(\beta\varphi(a_i, b, \bx))} 
       \geq 
       \frac{\exp(\beta\varphi(a_i, a_i', \bx))}{m_i \exp(\beta C)} = \frac{1}{m_i}\exp(-\beta(C - \varphi(a_i, a_i', \bx))
   \end{align}
   where the first inequality follows by $k \leq m_i$ and $\varphi_{i}(a_i, b, \bx) \leq C$ for all $b \in A_i$. 

   Then, for every $\delta \in (0, 1)$ and number of samples 
   \begin{align}
       T \geq k \exp(\beta C) \ln(1/\delta) 
   \end{align}
   The probability that we see the deviation $a_i'$ in $T$ samples: 
   \begin{align}
       \operatorname{Pr}(\exists t\leq T: Y_t = a_i') \geq 1-\delta.
   \end{align}

   Conversely, we can verify that $a_i' \notin QR(\bx, a_i)$ with probability $1 - \delta$ if we do not see it after $T$ samples.
\end{proof}

\end{document}